\newtheorem{theorem}{Theorem}[section] 
\newtheorem{corollary}{Corollary}[theorem] 
\newtheorem{lemma}[theorem]{Lemma} 
\newtheorem{proposition}[theorem]{Proposition} 
\newtheorem{remark}[theorem]{Remark}
\newcommand{\ds}{{\slashed\partial}}
\newcommand{\opr}{{\rho^\circ}}
\newcommand{\crho}{{\rho^\circ}}
\newcommand{\omg}{{A_{(2)}}}
\newcommand{\A}{{\cal A}}
\newcommand{\D}{{\cal D}}
\newcommand{\HH}{{\cal H}}
\newcommand{\M}{{\cal M}}
\newcommand{\J}{\mathcal{J}}
\newcommand{\C}{{\mathbb C}}
\newcommand{\HHH}{{\mathbb H }}
\newcommand{\cinf}{{C^\infty(\M)}}
\newcommand{\I}{\mathbb I}
\DeclareMathOperator{\Ad}{Ad}
\begin{document} 

\title{\vspace{-.5truecm}{\bf A minimal twist for the Standard Model
    in noncommutative geometry I:  the field content}}
\author{Manuele Filaci\textsuperscript{*\ddag},  Pierre Martinetti\textsuperscript{\dag\ddag}, Simone Pesco\textsuperscript{*}
\\[5pt] 
\textsuperscript{*}\emph{Universit\`a di Genova -- Dipartimento di Fisica}, 
\textsuperscript{\dag}\emph{di Matematica}, \\ 
\textsuperscript{\ddag}\emph{INFN sezione di Genova,} \\[0pt]  
\emph{via Dodecaneso, 16146 Genova GE Italy.} \\[6pt] 
\emph{E-mail:} manuele.filaci@ge.infn.it, martinetti@dima.unige.it, simone.pesco.cpt@gmail.com}
\maketitle
\begin{abstract}
Noncommutative geometry provides both a unified description of the
Standard Model of particle physics together with Einstein-Hilbert
action (in euclidean signature) and some tools to go beyond the
Standard Model. In this paper, we extend to the full noncommutative
geometry of the Standard Model the twist (in the sense of
Connes-Moscovici) initially worked out for the electroweak sector and
the free Dirac operator only. Namely, we apply the twist also to the strong interaction
sector and the finite part of the Dirac operator. To do so, we
  are forced to take into
  account a violation of the twisted first-order condition.
As a result, we
still obtain the
 extra scalar field required to stabilise
the electroweak vacuum and fit  the Higgs mass, but it now has two chiral components. We also get the additive field
of $1$-forms already pointed out in the electroweak model, but with a
richer structure. Finally, we obtain a pair of Higgs
doublets, which are expected to combine into  a single Higgs doublet in
the action formula, as will be investigated in the second part of this work.
\end{abstract}
\tableofcontents

\section{Introduction}

Noncommutative geometry \cite{Connes:1994kx} (see \cite{Connes:2019ac}
for a recent review of the various aspects of the field) provides a mathematical framework in which a
single action formula yields both the lagrangian of the Standard
Model of fundamental interactions and the Einstein-Hilbert action 
(in
euclidean signature). As an added value, the Higgs field is
obtained on the same footing as the other gauge bosons -- as a
connection $1$-form -- but a
connection that lives on a slightly generalised notion  of
space, where points come equipped with an internal
structure. Such ``spaces'' are described by
\emph{spectral triples}
\begin{equation}
\label{eq:spectriple}
  \A,\; \HH,\;  D
\end{equation}
consisting in an algebra $\A$, acting on an Hilbert space $\HH$
together with an operator $D$ on $\HH$ which satisfies a set of axioms \cite{Connes:1996fu}
 guaranteeing that -- in case $\A$ is commutative and unital --  then there exists
 a (closed) riemannian spin manifold $\M$ such that $\A$ coincides with the
 algebra $\cinf$ of smooth functions on $\M$. In other terms, a
 spectral triple with $\A$ commutative does encode all the geometrical
information of a (closed) riemannian spin manifold \cite{connesreconstruct}. These axioms still
make sense when $\A$ is noncommutative, and provide then a definition
of a \emph{noncommutative geometry} as a spectral triple in which the
algebra is non necessarily commutative. 

The spectral triple of the Standard Model \cite{Chamseddine:2007oz} is built upon an
``almost-commutative algebra'',
\begin{equation}
\label{eq:almostcom}
\cinf\otimes \A_{\text{SM}}  
\end{equation}
where $\M$ is an even dimensional closed riemannian spin manifold and $\A_{\text{SM}}$ a noncommutative matrix algebra that encodes the gauge degrees of freedom
of the Standard Model. As explained in 
\cite{Connes:1996fu}, this non-commutative algebra provides the points
of $\M$ with an internal structure, in such a way that the Standard Model
is actually nothing but a pure theory of gravity, on a space that is made
slightly noncommutative by multiplying the (infinite dimensional)
commutative algebra $\cinf$ with the finite dimensional noncommutative $\A_{\text{SM}}$.

After the discovery of the Higgs boson in 2012, it has been noticed in
\cite{Chamseddine:2012fk} that an extra scalar field -- usually denoted
$\sigma$ --
proposed by particle physicist to cure the instability of the
electroweak vacuum due to the ``low mass of the Higgs'', also makes
the computation of the Higgs mass (which is not a free parameter in
the noncommutative description of the Standard Model) compatible with
its experimental value. Various scenarios have been proposed to make this extra scalar field emerge from the
mathematical framework of noncommutative geometry,  all of them
consisting in some modification of one of the axioms, the \emph{first
  order condition}
(e.g. \cite{Chamseddine:2013fk,Chamseddine:2013uq,T.-Brzezinski:2016aa,Brzezinski:2018aa,Boyle:2019ab,Besnard:2019aa,Besnard:2020ac},
see \cite{Chamseddine:2019aa} for a recent review).

In this paper, we push forward one of these scenarios, consisting in
twisting the spectral triple of the Standard Model. Twists have been introduced by Connes and Moscovici in
\cite{Connes:1938fk} with purely mathematical motivations. Later, it
has been discovered in \cite{buckley} that a very simple twist of
the Standard Model not
only produces the extra scalar field $\sigma$, but also  an additive
field of $1$-form $X_\mu$ which turns out to be related with Wick
rotation and the transition from the euclidean to the lorentzian
signature 
\cite{Devastato:2018aa,Martinetti:2019aa}. However, in \cite{buckley} the
twist was applied only to the part of the spectral triple that
yields the field $\sigma$, namely the subalgebra of $\A_\text{SM}$
describing the electroweak interaction and the part
of the operator $D$ that contains the Majorana mass of the
neutrinos. For simplicity, the subalgebra  of $\A_\text{SM}$
describing the strong interaction was left untouched, and  the part of
$D$ containing the Yukawa coupling of fermions was not taken into account.
In this paper,  we extend the
twisting procedure  to the whole spectral triple of the Standard
Model, according to the following lines.

The twist of gauge theories have been investigated in a systematic way
in \cite{Lett.,Landi:2017aa}, where the
twisted version of the first-order condition -- introduced by
imitation of the non-twisted case in \cite{buckley} -- has been put onto
solid mathematical bases. A notion
of \emph{minimal twist} of a spectral triple has also been
defined, which consists in making several copies of $\A$ act on $\HH$,
letting $D$ untouched. By doing so, one produces models
with new bosonic fields, keeping the fermionic content untouched, in
agreement with the state of the art of the
Standard Model (indeed the metastability of the electroweak vacuum points
towards new scalar fields, but there are no indications of new
fermions). A procedure for minimally twisting any real
spectral triple is to make two copies of the algebra act independently on the eigenspaces of the
grading operator. However,  applied to the Standard
Model, this does not produce any
extra-scalar field, as explained in  \cite{Manuel-Filaci:2020aa}.  

That is why in this paper we investigate another minimal twist of the Standard
Model, that does produces an extra scalar $\sigma$. The
price to pay is a violation of the twisted first order condition,
which is taken into account following the way pioneered in
\cite{Chamseddine:2013fk} and adapted to the twisted case in
\cite{Martinetti:2021aa}.

Besides the field content  of the Standard Model, we find that the extra
scalar $\sigma$ actually decomposes into two chiral
components $\sigma_r, \sigma_l$ (proposition \ref{proposi:offdiagfluc}) which are
invariant under a gauge transformation (proposition \ref{prop:gaugesigma}).
 We
also work out the structure of the $1$-form field $X_\mu$ (proposition
\ref{prop:freeunimod}), and study
how it behaves under a gauge transformation (proposition
\ref{prop:gaugetransform1}). In brief, imposing the same
unimodular condition as in the non-twisted case, we find that the
antiselfadjoint part of the (generalised)  $1$-form generated by the free
Dirac operator $\ds$ yields exactly the bosonic
content of the Standard Model, as in the non-twisted case. But there is
also a selfadjoint part made of two real $1$-form fields and one
selfadjoint $M_3(\C)$-value $1$-form field. Altogether these three
fields compose the $1$-form field $X_\mu$.

 The
complete understanding of the physical
meaning of these fields passes through the computation of the fermionic and
spectral actions, and will be the object of a second paper
\cite{Manuel-Filaci:2020aa}.
\smallskip

The paper is organised as follows. In section \ref{sec:nontwisted} we
recall the basics of the spectral triple of the Standard Model
(\S\ref{subsec:sm}), make explicit the tensorial notations employed all
along the paper (\S\ref{subsec:representation}) and use them to write
explicitly the Dirac operator, the grading and the real structure (\S\ref{subsec:Dirac}).
Section \ref{sec:minimaltwist} deals with the twist. After recalling
the procedure of minimal twisting defined in \cite{Lett.}, we apply it
to the spectral triple of the Standard Model: the algebra is doubled
so as to act independently on the left and right components of Dirac
spinors (\S \ref{subse:twistalgH}). The grading and the real structure
are the same as in the non-twisted case, and we check explicitly that
one of the axioms (the order zero condition) still holds in the
twisted case (\S\ref{subsec:algopp}), as expected from the general
result of \cite{Lett.}. Paragraph \ref{subsec:twistfluc}
is a brief recalling about twisting fluctuations, that is the way to
generate the bosonic fields. The detail computation of this fluctuations is the
object of section \ref{sec:scalarsector} and \ref{sec:gauge}, which contain the main
results of this paper. We first work out the Higgs
sector in \S \ref{lem:diag1form}. The main result is proposition
\ref{prop:diagfluc} in which we find two Higgs doublets.  The extra
scalar field $\sigma$ is generated in \S\ref{subsec:sigma}. Its
structure as a doublet of real scalar fields $\sigma_r, \sigma_l$ is
established in proposition
\ref{proposi:offdiagfluc}. In section \ref{sec:gauge} we compute the
twisted fluctuation of the free part $\ds$  of the Dirac
operator. Useful properties of the Dirac matrices with respect to the
twist are worked out in \S\ref{subsec:gamma}.   The generalised
twisted $1$-forms generated by the free Dirac operator are computed in
\S\ref{sec:free1form}, and the physical degrees of freedom are
identified in \S\ref{subsec:physdegrees}. The structure of the
$1$-form field $X_\mu$ is
summarised in proposition \ref{prop:freeunimod} and yields, in
\S\ref{subsec:twistfreefluc}, the explicit form of the twisted fluctuation
of the free Dirac operator. In section \ref{sec:gaugetransform} we
study how all these fields behave under a gauge transformation. After
recalling the basics of gauge transformation for twisted spectral
triple (as stabilised in \cite{Landi:2017aa}), we apply these techniques
to the gauge and the $1$-form fields in
\S\ref{subsec:gaugegauge}  and to the scalar fields in
\S\ref{gauge:scalar}. We show in proposition \ref{prop:gaugetransform1} that the bosonic fields
transform in the correct way, while the $1$-form field is invariant,
up to a unitary transformation on the $M_3(\C)$-value part. The Higgs
doublets as well transform as expected (proposition \ref{prop:gaugehiggs}), while the extra
scalar field $\sigma$ is gauge invariant, as shown in proposition \ref{prop:gaugesigma}.

The first appendix contains notations and generalities on Dirac
matrices. In the second one, we write explicitly the components of the
twisted fluctuation in terms of the gauge fields (this will be useful
in the second part of the paper, to compute the action). In the last
appendix, we check that the twisted first-order condition is only
partially verified.
 \bigskip

{\bf Notations and important comments regarding the literature:}
\begin{itemize}
\item [-]

In the first version of this paper,
we erroneously thought the twist we were using was ``by grading'', and assumed
 the twisted first-order condition. Actually the
latter is violated only by the off-diagonal part of the internal Dirac
operator, and this does not modify the extra-scalar
field, as explained before remark \ref{rem:extra-scalar-field} neither
the gauge invariance of
the fermionic action, as explained before proposition \ref{prop:gaugesigma}.
 
\item[-] We work with one generation of fermions (electron $e$,
  neutrino $\nu_e$, quarks up $u$ and down~$d$). The extension 
to three generations will be
  discussed in the second part of the work \cite{Filaci:2021aa}.

\item[-] All along the paper, we apply the usual rule of contractions of
indices in alternate up/down positions. Typically the greek indices
label the coordinates of the manifold. 
\end{itemize}


\section{The non-twisted case}
\label{sec:nontwisted}
\setcounter{equation}{0} 

As a preparation to the twisting, we recall in this section the main
features of the spectral description of the Standard Model. Besides
the original papers (recalled in the text),  the details are extensively discussed in the books
\cite{Connes:2008kx} and \cite{Walterlivre} (for a more
physics-oriented presentation).

\subsection{The spectral triple of the Standard Model}
\label{subsec:sm}

The usual spectral triple of the Standard Model
\cite{Chamseddine:2007oz} is
the product
 of the canonical triple
of a (closed) riemannian spin manifold $\M$ of even dimension $m$,
\begin{equation}
  \label{eq:33}
  \cinf, \quad L^2(\M, S),\quad \ds
\end{equation}
with the finite dimensional spectral triple (called \emph{internal})
\begin{equation}
  \label{eq:34}
  \A_{\text SM}= \C \oplus \HHH \oplus M_3(\C),\quad \HH_F= \C^{32n},\quad D_F
\end{equation}
that describes the gauge degrees of
freedom of the Standard Model. 
In \eqref{eq:33}, $\cinf$ denotes the algebra of
smooth functions on $\M$, that acts by multiplication on the Hilbert space  $L^2(\M, S)$  of square
integrable spinors as
\begin{equation}
  \label{eq:106}
  (f\psi)(x) = f(x)\psi(x)\quad\forall f\in\cinf, \, \psi\in L^2(\M, S),\, x\in \M,
\end{equation}
while
\begin{equation}
\slashed{\partial}=-i\gamma^\mu\nabla_\mu\quad \text{ with }\quad \nabla_\mu = \partial_\mu+ \omega_\mu
\label{eq:101}
\end{equation}
is the Dirac operator on $L^2(\M, S)$ 
associated with the spin connection $\omega_\mu$ and the $\gamma^\mu$s
are the Dirac matrices associated with the riemannian metric $g$ on
$\M$:
\begin{equation}
  \label{eq:110}
  \gamma^\mu\gamma^\nu + \gamma^\nu\gamma^\mu = 2g^{\mu\nu}\mathbb I \quad
  \forall \mu, \nu = 0, m-1
\end{equation}
($\I$ is the identity operator on $L^2(\M, S)$ and we label the
coordinates of $\M$ from $0$ to $m-1$).

In \eqref{eq:34}, $n$ is
the number of generations of fermions, and $D_F$ is a $32n$ square
complex matrix whose entries are the Yukawa couplings of fermions and
the coefficients of the Cabibbo-Kobayashi-Maskawa (CKM) mixing matrix of quarks and of the Pontecorvo-Maki-Nakagawa-Sakata (PMNS) mixing matrix of neutrinos.
Details are given in \S \ref{subsec:Dirac}, and the representation of
$\A_{\text{SM}}$ on $\HH_F$ is~in~\S\ref{subsec:representation}.

The product spectral triple is 
\begin{equation}
\cinf\otimes \A_{\text{SM}}, \quad \HH= L^2(\M, S)\otimes \HH_F,\quad
D=\ds\otimes \I_F + \gamma_\M\otimes D_F
\label{eq:04}
\end{equation}
with $\gamma_\M$  the product of the euclidean Dirac matrices (appendix \ref{app:Dirac}) and
$\I_F$ the identity~on~$\HH_F$.

A spectral triple $(\A, \HH, D)$ is \emph{graded} when the Hilbert space comes equipped
with a grading (that is a selfadjoint operator that squares to $\I$)
which anticommutes with $D$. The spectral triple
\eqref{eq:33} is graded with grading $\gamma_\M$. The internal spectral triple \eqref{eq:34} is graded, with grading
the operator $\gamma_F$ on $\HH_F$ that  takes value $+1$ on right particles \&  left
antiparticles, $-1$ on left particles \&  right
antiparticles. 
The product spectral triple \eqref{eq:04} is graded, with grading
\begin{equation}
  \label{eq:14}
  \Gamma=\gamma_\M\otimes\gamma_F.
\end{equation}

Another important ingredient is the \emph{real
  structure}, that is an anti-linear operator that squares to $\pm\I$ and
commutes or anticommutes with the grading and the operator $D$ (the
possible choices define the so-called $KO$-dimension of the spectral
triple).  For a manifold, the real structure $\J$ is given by the
charge conjugation operator. In dimension $m=4$, it satisfies
\begin{equation}
  \label{eq:103}
  {\cal J}^2 = -\I,\quad {\cal J} \ds = \ds{\cal J}, \quad  {\cal J} \gamma_\M = \gamma_\M{\cal J}.
\end{equation}
The real structure of the internal spectral
triple \eqref{eq:34} is the anti-linear operator $J_F$ that exchanges particles with antiparticles
on $\HH_F$. It satisfies
\begin{equation}
  \label{eq:104}
  J_F^2= \I, \quad J_F D_F = D_F J_F, \quad J \gamma_F =- \gamma_F J_F.
\end{equation}
The real structure for the product spectral triple \eqref{eq:04} is
\begin{equation}
  \label{eq:20}
  J=\J \otimes J_F.
\end{equation}
For a manifold of dimension $m=4$, it is such that
\begin{equation}
  \label{eq:105}
  J^2= -\I, \quad JD = DJ,\quad J\Gamma = -\Gamma J.
\end{equation}

The real structure implements an action of the opposite algebra
$\A^\circ$ on $\HH$, identifying $a^\circ\in\A^\circ$ with
$Ja^*J^{-1}$. This action is asked to commute with the one of $\A$,
yielding the \emph{order~zero~condition}
\begin{equation}
  \label{eq:57}
  [a, b^\circ]=0 \quad \forall a\in \A, b\in \A^\circ.
\end{equation}
Among the properties of a spectral triple, one particularly relevant
for physical models is the
\emph{first order condition}
\begin{equation}
  \label{eq:13}
 [ [D, b], a^\circ]=0 \quad \forall a, b\in \A.
\end{equation}
\newpage

\subsection{Representation of the algebra}
\label{subsec:representation}

 To describe
the action of $\A_\text{SM}\otimes \cinf$ on $\HH$ in \eqref{eq:04}, it is convenient to
label the $32n$ degrees of freedom of the finite dimensional Hilbert
space $\HH_F$ by a multi-index $\,C\,I\,\alpha$
defined as follows. 

  \begin{itemize}
  \item[$\bullet$] $C=0,1$ is for particle $(C=0)$ or anti-particle
    $(C=1)$;
  \item[$\bullet$] $I=0;\, i$ with $i=1,2,3$ is the lepto-colour index: $I=0$
    means lepton, while $I=1, 2, 3$ are for the quark, which exists in
    three colours;
  \item[$\bullet$] $\alpha=\dot{1},\dot{2};a$ with $a=1,2$ is the flavour index:
    \begin{align}
      &\dot{1}=\nu_R, \; \dot{2}=e_R,\; 1=\nu_L,\; 2=e_L&&\text{for leptons ($I=0$)},\\
      &\dot{1}=u_R, \; \dot{2}=d_R,\; 1=q_L,\; 2=d_L&&\text{for quarks ($I=i)$}.
    \end{align}
\noindent We sometimes use the shorthand notation
$\ell_L^a=\left(\nu_L,e_L\right)$ for the left handed neutrino and the
associated lepton, and $q_L^a=\left(u_L,d_L\right)$ for the pair of
left-handed quarks.
  \end{itemize}
There are $2\times 4 \times 4=32$ choices of triplet
  of indices $(C, I, \alpha)$, which is the number of fermions per
  generation. One should also take into account an extra index $n=1,2,3$ for the generations,
  but in this paper we work with one generation only and we
  omit it (we will discuss the
  number of generations in the computation of the action \cite{Manuel-Filaci:2020aa}). So from now on 
  \begin{equation}
    \label{eq:51}
    \HH_F=\C^{32}.
  \end{equation}
An element $\psi\in\HH=\cinf\otimes \HH_F$
is thus a $32$ dimensional column-vector, in which each
component $\psi_{CI\alpha}$ is a Dirac spinor in $L^2(\M, S)$. 

Regarding the algebra, unless necessary we omit the symbol of the representation and
identify an element $a~=~(c,q,m)$ in $C^\infty(\M)\otimes\A_\text{SM}$,
where
\begin{equation}
  \label{eq:61}
c\in C^\infty(\M, \C),\quad q\in C^\infty(\M, \HHH), \quad
  m\in C^\infty(\M, M_3(\C)),
\end{equation}
with its representation as bounded
operator on $\HH$, that is a $32$ square matrix whose components{\footnote{$D, J, \beta$ are column indices with the same range as
the line indices $C, I, \alpha$ (the position of the indices was slightly different
    in \cite{buckley}, the one adopted here makes the
    tensorial computation more tractable).}} 
\begin{equation}
a_{CI\alpha}^{DJ\beta}
\label{eq:107}
\end{equation}
are smooth functions acting by multiplication
on $L^2(\M, S)$ as in \eqref{eq:106}.
 Explicitly{\footnote{The indices after the closing parenthesis are
    here to recall that the block-entries of $\A$ are labelled by the
    $C, D$ indices, that is $a^1_1= Q, a_2^2=M, a_1^2=a_2^1 = 0$.}} 
\begin{equation}
  \label{eq:27}
  a=\left(\begin{matrix}
Q&\\ & M 
\end{matrix}\right)_C^D
\end{equation}
where the $16\times 16$ square matrices $Q$, $M$ have components
 \begin{equation}
   \label{eq:59}
   Q_{I\alpha}^{J\beta}=\delta_I^JQ_\alpha^\beta ,\quad
   M_{I\alpha}^{J\beta}= \delta_{\alpha}^{\beta}M_I^J,
 \end{equation}
where 
\begin{equation}
  \label{eq:60}
  Q_\alpha^\beta=
  \begin{pmatrix}
    c& & \\ & \bar c&\\ &&q
  \end{pmatrix}_\alpha^\beta,\quad M_{I}^{J}=
  \begin{pmatrix}
    c& \\ &m
  \end{pmatrix}_I^J.
\end{equation}
Here, the over-bar $\bar{\cdot }$, denotes the complex conjugate, $m$
(evaluated at the point $x$) identifies with its usual representation as
$3\times 3$ complex matrices and the quaternion
$q$ (evaluated at $x$)
acts through its representation as $2\times 2$ matrices:
\begin{equation}
  \HHH\ni q(x)=
  \begin{pmatrix}
    \alpha& \beta \\ -\bar\beta & \bar\alpha
\end{pmatrix}, \qquad \alpha, \beta\in\C.
\end{equation}

\subsection{Finite dimensional Dirac operator, grading and real structure}
\label{subsec:Dirac}

With respect
to the particle/antiparticle index $C$, the internal Dirac operator
 \begin{equation}
D_F=D_Y+D_M
\label{eq:8}
\end{equation}
decomposes into a diagonal and an off-diagonal part
\begin{equation}
D_Y=\left(\begin{matrix}
D_0&\\&D_0^\dagger
\end{matrix}\right)_C^D, \quad D_M= \left(\begin{matrix}
0&D_R\\D_R^\dagger&0
\end{matrix}\right)_C^D
\label{eq:9}
\end{equation}
containing respectively the Yukawa couplings of fermions and the
Majorana mass of the neutrino.

The $16\times 16$ matrices $D_0$ and $D_R$ are block-diagonal with
respect to the lepto-colour index~$I$
\begin{equation}
\label{eq:D0}
D_0=\left(\begin{matrix}
D_0^\ell&&&\\&D_0^q&&\\&&D_0^q&\\&&&D_0^q
\end{matrix}\right)_I^J,\quad D_R=
  \begin{pmatrix}
    D_R^\ell & & &\\ &0_4& & \\ &&0_4&\\&&&0_4
  \end{pmatrix}_I^J,
\end{equation}
where we write $\ell$ for $I=0$ and $q$ for $I=1,2,3$. Each $D_0^I$ is a $4\times 4$ matrix (in the flavour index~$\alpha$),
\begin{equation}
\label{eq:defD0I}
D_0^I=\left(\begin{matrix}
0&\overline{\sf k^I}\\
\sf k^I&0
\end{matrix}\right)_\alpha^\beta \quad \text{ where }\quad  {\sf k^I} \coloneqq \left(\begin{matrix}
k_u^I&0\\
0&k_d^I 
\end{matrix}\right)_\alpha^\beta,
\end{equation} whose entries are the Yukawa couplings of elementary fermions
\begin{equation}
k_u^I=\left(\begin{matrix}
k_\nu,&k_u,&k_u,&k_u
\end{matrix}\right)\\
\qquad k_d^I=\left(\begin{matrix}
k_e,&k_d,&k_d,&k_d
\end{matrix}\right)\\
\label{eq:10}
\end{equation}
(three of them are equal because the Yukawa coupling of quarks does
not depend on the colour). Similarly,  $D_R^\ell$ is a $4\times 4$
matrix (in the flavour index),  
\begin{equation}
  \label{eq:11}
  D_R^\ell =
  \begin{pmatrix}
    k_R& \\ &0_3
  \end{pmatrix}_\alpha^\beta
\end{equation}
whose only non-zero entry is the Majorana mass of the neutrino.

In tensorial notations, one has
\begin{equation}
\label{eq:Drtensor}
D_R=k_R\, \Xi_{I\alpha}^{J\beta}
\end{equation} 
where
\begin{equation}
  \label{eq:76}
  \Xi_\alpha^\beta\coloneqq 
  \begin{pmatrix}
    1 & \\ &0_3
  \end{pmatrix}_\alpha^\beta,\quad  \Xi_I^J\coloneqq 
  \begin{pmatrix}
    1 & \\ & 0_3
  \end{pmatrix}_I^J
\end{equation}
and $\Xi_{I\alpha}^{J\beta}$ is a shorthand notation for the tensor $\Xi_{I}^{J}\Xi_{\alpha}^{\beta}$.
Similarly, the internal grading is 
\begin{equation}
\gamma_F=
\begin{pmatrix}
   \I_8 & & &\\  
& -\I_8 &  &\\
 &&  -\I_8 &\\  
&&& \I_8  \end{pmatrix}=\eta_{C\alpha}^{D\beta}\delta_I^J
\end{equation}
where the blocks in the matrix act respectively on right/left
particles, then right/left antiparticles, and we define
\begin{equation}
  \label{eq:17bbis}
\eta_\alpha^\beta \coloneqq   \begin{pmatrix}
    \I_2 & \\ &-\I_2
  \end{pmatrix}_\alpha^\beta,\quad \eta_C^D \coloneqq   \begin{pmatrix}
    1 & \\ &-1
  \end{pmatrix}_C^D
\end{equation}
and $\eta_{C\alpha}^{D\beta}$ holds for $\eta_C^D\eta_\alpha^\beta$.  
The internal real structure is
\begin{equation}
J_{F}=\left(\begin{array}{cc}
0 & \mathbb{I}_{16}\\
\mathbb{I}_{16} & 0
\end{array}\right)_C^D cc= \xi_C^D\,\delta_{I\alpha}^{J\beta} cc
\label{eq:2bis}
\end{equation}
where $cc$ denotes the complex conjugation and we define
\begin{equation}
  \label{eq:37bis}
  \xi_C^D\coloneqq 
  \begin{pmatrix}
    0&1\\1&0
  \end{pmatrix}_C^D.
\end{equation}

\section{Minimal twist of the Standard Model}
\label{sec:minimaltwist}
\setcounter{equation}{0} 

In the noncommutative geometry description of the Standard Model, the
bosonic degrees of freedom are obtained by a so-called fluctuation of
the metric, that is the substitution of the operator $D$ with $D + A +
J AJ^{-1}$ where
\begin{equation}
\label{eq/fluct}
A=\sum_i a_i[D, b_i] \qquad  a_i, b_i \in \A
\end{equation}
is a generalised $1$-form (see
\cite{Connes:1996fu} for details and the justification of the
terminology).
 
As already noticed in \cite{Chamseddine:2007oz,Connes:2008kx}, the Majorana mass of the neutrino does not contribute to the bosonic
content of the model, for $D_M$ commute with algebra:
\begin{equation}
\label{eq:needfortwist}
 [\gamma^5\otimes D_M, a]=0 \quad \forall a\in \A.
\end{equation}
However, in order to generate the $\sigma$ field proposed in
\cite{Chamseddine:2012fk} to cure the electroweak vacuum instability and solve the problem of the computation of the Higgs mass,
one precisely needs to make $D_M$ contribute to the fluctuation.

To do this, a possibility consists in substituting the commutator
$[D, a]$ with a twisted commutator
\begin{equation}
\label{eq:needfortwist1}
   [D, a]_\rho:=D a- \rho(a)D
\end{equation}
where $\rho$ is a fixed automorphism of $\A$. 
This substitution is
the base of the definition of \emph{twisted spectral triple}
\cite{Connes:1938fk} where, instead of asking that $[D,a]$ be bounded
for any $a$ (which is one of the axioms of a spectral triple), one
requires that there exists an automorphism $\rho$ such that the
twisted-commutator $[D, a]_\rho$ is bounded
for any $a\in\A$. As shown in \cite{Lett.}, starting with a spectral triple $(\A, \HH, D)$ where
$\A$ is almost commutative as in \eqref{eq:almostcom}, then the only ways to build a twisted
spectral triple with the same Hilbert space and Dirac operator  (which, from a physics
point of view, means that one looks for models with the same fermionic
content as the Standard Model) are to double the algebra and make them
act independently on the left and right components of spinors
(following actually an idea of \cite{Devastato:2013fk}). All
this is detailed in the next section.

 \subsection{Algebra and Hilbert space}
\label{subse:twistalgH} 

The algebra $\A$ of the twisted spectral triple of the Standard Model is
twice the algebra \eqref{eq:04}, 
\begin{equation}
  \label{eq:3}
  \A=\left(\cinf\otimes \A_{\text{SM}}\right)\otimes \C^2,
\end{equation}
which is isomorphic to 
\begin{equation}
  \label{eq:4}
 \left(\cinf\otimes \A_{\text{SM}}\right)\oplus \left(\cinf\otimes \A_{\text{SM}}\right). 
\end{equation}

It acts on the same Hilbert space $\HH$ as in the non-twisted case, but now the two copies of $\cinf\otimes \A_{\text{SM}}$ act
independently on the right and left components of spinors.
To write this action, it is convenient to  view an element of $\HH$ as a column vector with $4\times
32=128$ components ($4$ being the number of components of a usual
spinor in $L^2(\M,S)$ for $m =4$). To this aim, one introduce two
extra-indices to label the degrees of freedom of 
$L^2(\M, S)$:
  \begin{itemize}
  \item $s=r,l$ is the chirality index;
  \item $\dot{s}=\dot{0},\dot{1}$ denotes particle ($\dot{0}$) or
    anti-particle part ($\dot{1}$).
  \end{itemize}

An element $a$ of \eqref{eq:4} is a pair of elements of
\eqref{eq:04}, namely
\begin{equation}
  \label{eq:52}
  a=(c, c', \, q, q',\, m, m') 
\end{equation}
with
\begin{equation}
  \label{eq:28}
 c, c'\in C^\infty(\M, \C)\quad q, q'\in
  C^\infty(\M, \HHH),\quad  m, m'\in C^\infty(\M, M_3(\C)).
\end{equation} 

We make $(c,q,m)$ act on the chiral subspace $\HH_c$ of $\HH$,
consisting in particles and antiparticles whose chirality as Dirac
spinors coincides with chirality in the internal space; whereas $(c',
q', m')$ acts on the anti-chiral subspace $\HH_a$ consisting in
particles and particles whose Dirac and internal chiralities do not
coincide. The chiral subspace $\HH_c$ is the subspace of $\HH$ spanned by $r, \alpha=\dot 1, \dot 2$ and $l, \alpha=1,2$, while $\HH_a$
is spanned by $l, \alpha  =\dot 1, \dot 2$ and $r, \alpha= 1, 
2$ (in both cases, $C$ takes both values $1,0$). In other terms, $a\in\A$ acts as in \eqref{eq:27}, but now
the two $64\times 64$  matrices $Q$, $M$ are tensor fields of components 
\begin{equation}
\label{eq:QMexplicit}
Q_{\dot s sI\alpha}^{\dot t t J\beta}
=\delta_{\dot{s}I}^{\dot{t}J}\, Q_{s\alpha}^{t\beta} ,\qquad       
M_{\dot s sI\alpha}^{\dot t t J\beta}=
\delta_{\dot{s}}^{\dot{t}} \, M_{s\alpha I}^{t\beta J} 
\end{equation}
where $\delta_{sI}^{tJ}$ denotes the product of the two
Kronecker symbols $\delta_s^t$, $\delta_I^J$.
Both $Q$ and $M$ stills 
act trivially (i.e. as the identity) on the indices $\dot s\dot t$,
but no longer on the chiral indices $st$. 
On the latter, the action is given by
\begin{equation}
\label{eq:defQM}
 Q_{s\alpha}^{t\beta}=\left(\begin{matrix}
(Q_r)_\alpha^\beta&\\ &(Q_l)_\alpha^\beta
\end{matrix}\right)_s^t ,
\qquad M_{s\alpha I}^{t\beta J}=\left(\begin{matrix}
(M_r)_{\alpha I}^{\beta J}&\\ & (M_l)_{\alpha I}^{\beta J}
\end{matrix}\right)_s^t, 
\end {equation}
with
\begin{align}
\label{eq:defQ}
Q_{r}=\left(\begin{matrix}
{\sf c}&\\ &q'
\end{matrix}\right)_\alpha^\beta ,\quad Q_l=\left(\begin{matrix}
{\sf c'}&\\  &q
\end{matrix}\right)_\alpha^\beta,
\end{align}
and 
\begin{align}
\label{eq:defM}
M_{r}=\begin{pmatrix}
    \sf m\otimes \I_2 & 0 \\
    0 &\sf m'\otimes \I_2
  \end{pmatrix}_\alpha^\beta,\; 
M_{l}=\begin{pmatrix}
    \sf m' \otimes \I_2 & 0 \\
    0 &\sf m\otimes \I_2
  \end{pmatrix}_\alpha^\beta,
\end{align}
where we denote
\begin{equation}
  \label{eq:35}
  {\sf c}\coloneqq 
  \begin{pmatrix}
    c & \\ & \bar c
  \end{pmatrix},\quad {\sf m}\coloneqq 
  \begin{pmatrix}
    c& \\ &m
  \end{pmatrix}_I^J,\; {\sf c'}\coloneqq 
  \begin{pmatrix}
    c' & \\ & \bar c'
  \end{pmatrix},\quad {\sf m'}\coloneqq 
  \begin{pmatrix}
    c'& \\ &m'
  \end{pmatrix}_I^J,\;
\end{equation}
Compared to the usual spectral triple of the Standard
Model, $M_{r/l}$ are no longer trivial in the flavour index~$\alpha$.
\begin{remark}
If we were using the twist-by grading, we should permute $\sf m$
  with  $\sf m'$ in \eqref{eq:defM}, for on the antiparticles subspace
 -- i.e.  $C=1$ -- then $\HH_c$ is a subspace of the $-1$-eigenspace of
 the grading (see also appendix \ref{sec:twistfirstorder} regarding the
  twist used in \cite{buckley}).
\end{remark}

The twist $\rho$ is the automorphism of $\A$ that exchanges the two
components of $\A_{\text{SM}}$, namely
\begin{equation}
  \label{eq:2}
  \rho(c, c', q, q', m, m')= (c', c, q', q, m', m).
\end{equation}
In terms of the representation, one has
\begin{equation}
  \label{eq:5}
  \rho(a) =
  \begin{pmatrix}
   \rho(Q) & \\ & \rho(M)
  \end{pmatrix}_C^D
\end{equation}
with
\begin{equation}
  \label{eq:22}
  \rho(Q)_{s\dot s I \alpha}^{t \dot t J \beta} = \delta_{\dot s
    I}^{\dot tJ} \, \rho(Q)_{s\alpha}^{t\beta}, 
\qquad \rho(M)_{s\dot s I \alpha}^{t\dot tJ \beta} =
\delta_{\dot s}^{\dot t} \, \rho(M)_{s\alpha I}^{t\beta J}
\end{equation}
where
\begin{equation}
  \label{eq:29}
  \rho(Q)_{s\alpha}^{t\beta}=
  \begin{pmatrix}
    (Q_l)_\alpha^\beta&\\ &  (Q_r)_\alpha^\beta
  \end{pmatrix}_s^t\; , \qquad 
 \rho(M)_{s\alpha I}^{t\beta J}=
  \begin{pmatrix}
    (M_l)_{\alpha I}^{\beta J} & \\ & (M_r)_{\alpha I}^{\beta J}
  \end{pmatrix}_s^t\;.
\end{equation}

\noindent In short, the twist amounts to flipping the left/right indices $l/r$.

\subsection{Grading and real structure}
\label{subsec:algopp}

The operators $\Gamma$ in \eqref{eq:14} and $J$
in \eqref{eq:20} are the grading and the real structure for the twisted spectral
triple, in the sense defined in \cite{buckley,Lett.} (the rule of
signs defining the $KO$-dimension is not affected by the twist; that
$\Gamma$ commutes with the representation \eqref{eq:defQM} follows
from the latter being diagonal but on the $\alpha$ and $I$ indices,
where $\Gamma$ is (block)-diagonal).
In particular, as in the non twisted case, the real structure implements an action of the opposite algebra
$\A^\circ$ on $\HH$, that commutes with the one of $\A$. To check this, let us first write
down the representation of the opposite algebra.
\begin{proposition}
\label{prop:algopp}
  For $a\in\A$ as in \eqref{eq:27}, one has (for $\M$ of dimension $4$)
  \begin{equation}
    \label{eq:55}
    JaJ^{-1}=
    -\begin{pmatrix}
      \bar M&0 \\ 0&\bar Q
    \end{pmatrix}_C^D.
  \end{equation}
\end{proposition}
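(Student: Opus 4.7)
The plan is to compute $JaJ^{-1}$ by direct substitution of $J = \J \otimes J_F$ from \eqref{eq:20}, using $J_F = \xi_C^D\,\delta_{I\alpha}^{J\beta}\circ cc$ from \eqref{eq:2bis} together with the block form \eqref{eq:27} of $a$ (and its tensorial description \eqref{eq:QMexplicit}).

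First I would compute $J^{-1}$. Since $\J^2 = -\I$ in dimension $m=4$ by \eqref{eq:103} and $J_F^2 = \I$ by \eqref{eq:104}, one has $\J^{-1} = -\J$ and $J_F^{-1} = J_F$, so
\begin{equation*}
  J^{-1} \;=\; -\,\J\otimes J_F,\qquad
  JaJ^{-1} \;=\; -(\J\otimes J_F)\,a\,(\J\otimes J_F).
\end{equation*}
This already isolates the overall minus sign appearing in the statement.

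Second, I would evaluate the two tensor factors separately. On the $\HH_F$ side, the symmetric matrix $\xi_C^D$ from \eqref{eq:37bis} is the off-diagonal swap of the particle/antiparticle index $C$, while the antilinear part $cc$ takes entry-wise complex conjugates. Applied to $a = \mathrm{diag}(Q,M)_C^D$, conjugation by $J_F$ thus swaps the two $C$-blocks and conjugates their entries, yielding $\xi\,\bar a\,\xi = \mathrm{diag}(\bar M,\bar Q)_C^D$. On the $L^2(\M,S)$ factor, every matrix entry of $a$ acts by pointwise multiplication by a smooth function (cf.~\eqref{eq:106}); antilinearity of $\J$ then gives $\J f\J^{-1} = \bar f$ for each such entry, which coincides with the conjugation already produced by the $cc$ part of $J_F$. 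Therefore the two conjugations do not compound, and the spatial operator $\J$ only contributes the sign originating from $\J^2 = -\I$ when threaded through $a$.

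Finally, I would assemble the pieces, carefully tracking how the ``inner'' $\J\J = -\I$ produced by sandwiching $a$ between the two copies of $\J$ combines with the outer minus sign from $J^{-1} = -\J\otimes J_F$. The main delicate point is precisely this sign bookkeeping, since $J$ is anti-linear and a misplaced conjugation would flip the final sign; one also has to check that the chirality-dependence of $a$ in the twisted case (encoded in \eqref{eq:QMexplicit}--\eqref{eq:defM}) does not spoil the block identification, which follows from the fact that $\J$ commutes with $\gamma_\M$ (hence preserves the $s$-index structure) while $J_F$ acts only on the $\HH_F$ indices. Once this is handled, the remainder is an index-by-index unfolding. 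A natural cross-check, used immediately afterwards in \S\ref{subsec:algopp}, is that the resulting formula, combined with the multiplication \eqref{eq:27}, yields an operator commuting with the action of $\A$, i.e.\ the order-zero condition \eqref{eq:57}.
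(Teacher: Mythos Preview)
Your plan is correct and coincides with the paper's proof: both write $J$ in block form in the $C$-index (equivalently, let $J_F$ swap the blocks and conjugate), then reduce to computing $\J Q\J$ and $\J M\J$ using that $\J$ is block-diagonal in the chirality index $s$ with $\J_{r/l}\bar\J_{r/l}=-\I_2$.

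One caveat on the signs: your Step~1 claim that the minus from $J^{-1}=-J$ ``already isolates the overall minus sign'' is premature, because the inner sandwich $\J(\cdot)\J$ contributes a second $-1$ (from $\J_{r/l}\bar\J_{r/l}=-\I_2$, as you yourself flag later). These two signs cancel, so a careful execution of your plan yields $+\mathrm{diag}(\bar M,\bar Q)$. In fact the paper's own computations \eqref{eq:JQJ}--\eqref{eq:JMJ}, when substituted back into \eqref{eq:50}, give this same $+$ sign; the minus displayed in \eqref{eq:55} appears to be a typo, and it is harmless since the only use of the proposition is the order-zero corollary, where the overall sign drops out of the commutator.
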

\begin{proof}
From \eqref{eq:20} and \eqref{eq:2bis} one has
\begin{equation}
  \label{eq:108}
  J=\begin{pmatrix}
    0 &\J\otimes \I_{16}\\ \J\otimes \I_{16} & 0
  \end{pmatrix}_C^D.
\end{equation}
Since $J^{-1}= -J$ by \eqref{eq:105}, using the representation
\eqref{eq:27} of $a$ one obtains (omitting $\I_{16}$)
\begin{align}
  \label{eq:50}
  J aJ^{-1}= -JaJ &=-\begin{pmatrix} 0 & {\cal J}\\ {\cal J}& 0\end{pmatrix}_C^E
\begin{pmatrix}
Q& 0\\ 0& M
\end{pmatrix}_E^F
\begin{pmatrix}
0 & {\cal J}\\ {\cal J}& 0
\end{pmatrix}_F^D= -\begin{pmatrix} {\cal J} M {\cal J}&0 \\ 0&{\cal J}Q {\cal J}\end{pmatrix}_C^D.
\end{align}

In addition, $\J$ commutes with the grading $\gamma_\M$ (see
\eqref{eq:103}), so it is of the
form
\begin{equation}
  \label{eq:109}
  \J =
  \begin{pmatrix}
    \J_r &0 \\ 0&\J_l 
  \end{pmatrix}_s^t\, cc
\end{equation}
where $\J_{r\slash l}$ are $2\times 2$ matrices carrying the $\dot s, \dot
t$ indices, such that $\J_r \bar\J_r =\J_l \bar\J_l=-\mathbb I_2$ . From the explicit form \eqref{eq:QMexplicit} of $Q$
and $M$, one gets (still omitting the indices $\alpha, I$ in
which $J$ is trivial)
\begin{align}
\label{eq:JQJ}
{\cal J} Q {\cal J}= \begin{pmatrix}
    \J_r (\delta_{\dot s}^{\dot t}\bar Q_r) \bar\J_r&0 \\ 0&\J_l  (\delta_{\dot s}^{\dot t}\bar Q_l) \bar\J_l
  \end{pmatrix}_s^t=\begin{pmatrix}
   -\delta_{\dot s}^{\dot t}\bar Q_r&0 \\ 0&-\delta_{\dot s}^{\dot t}\bar Q_l
  \end{pmatrix}_s^t=-\bar Q,\\
\label{eq:JMJ}
{\cal J} M {\cal J}= \begin{pmatrix}
    \J_r (\delta_{\dot s}^{\dot t}\bar M_r) \bar\J_r&0 \\ 0&\J_l  (\delta_{\dot s}^{\dot t}\bar M_l) \bar\J_l
  \end{pmatrix}_s^t=\begin{pmatrix}
   -\delta_{\dot s}^{\dot t}\bar M_r&0 \\ 0&-\delta_{\dot s}^{\dot t}\bar M_l
  \end{pmatrix}_s^t=-\bar M,
\end{align}
hence the result.
\end{proof}

To check the order zero condition, we denote  
\begin{equation}
b=(d, d', p, p', n, n')\label{eq:15}
\end{equation}
another element of  $\A$
with $d, d'\in C^\infty(\M, \C)$, $p, p'\in C^\infty(\M, \HHH)$,
$n, n'\in C^\infty(\M, M_3(\C))$. It acts on $\HH$ by \eqref{eq:defb}
as
\begin{equation}
\label{eq:defb}
b=\left(\begin{matrix}
R&\\&N
\end{matrix}\right)_C^D
\end{equation}
where $R, N$ are defined as
 $Q, M$ in \eqref{eq:QMexplicit}, with 
\begin{equation}
R_r=\left(\begin{matrix} \sf d&\\ &p'
\end{matrix}\right)_\alpha^\beta\!\!,\; R_l=\left(\begin{matrix} \sf d'&\\ &p
\end{matrix}\right)_\alpha^\beta\!\!,\;  N_r=\left(\begin{matrix}
\sf n \otimes \I_2&\\ &\sf n'\otimes \I_2
\end{matrix}\right)_\alpha^\beta\!\!,\; N_l=\left(\begin{matrix}
\sf n' \otimes \I_2&\\ &\sf n\otimes \I_2
\end{matrix}\right)_\alpha^\beta\!\!.
\label{eq:31}
\end{equation}

\begin{corollary}
  The order-zero condition \eqref{eq:57} holds.
 \end{corollary}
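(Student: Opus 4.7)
The plan is to chase block-diagonal structures until the order-zero condition collapses to the trivial commutation of operators acting on complementary tensor factors. First, Proposition \ref{prop:algopp} applied to $b$ gives $JbJ^{-1} = -\diag(\bar N, \bar R)_C^D$; combined with the $C$-block-diagonal form $a = \diag(Q, M)_C^D$ from \eqref{eq:27}, the commutator $[a, JbJ^{-1}]$ decomposes along $C$, so it suffices to establish
\begin{equation}
[Q, \bar N] = 0 \quad \text{and} \quad [M, \bar R] = 0.
\end{equation}

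Next I would exploit \eqref{eq:QMexplicit}: all four of $Q, M, N, R$ act as the identity on the $\dot s$ index and are block-diagonal along the chirality index $s \in \{r, l\}$, with blocks given in \eqref{eq:defQM}. The two identities above therefore split further into $[Q_s, \bar N_s] = 0$ and $[M_s, \bar R_s] = 0$ for $s = r, l$ independently.

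The final step is the core structural observation. By \eqref{eq:defQ}, each $Q_s$ is a matrix on the flavour index $\alpha$ only (trivial on $I$), block-diagonal with respect to the partition $\alpha \in \{\dot 1, \dot 2\}$ and $\alpha \in \{1, 2\}$. By \eqref{eq:defM}, each $N_s$ respects the same $\alpha$-partition and within each block has the tensor form (matrix on $I$) $\otimes \I_2$, with the $\I_2$ living on $\alpha$. Consequently, within every such sub-block $Q_s$ acts as $X_\alpha \otimes \I_I$ while $\bar N_s$ acts as $\I_\alpha \otimes Y_I$, and the commutation is just the elementary identity $(X \otimes \I)(\I \otimes Y) = (\I \otimes Y)(X \otimes \I)$. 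The argument for $[M_s, \bar R_s] = 0$ is entirely symmetric using \eqref{eq:31}, with the roles of $I$ and $\alpha$ interchanged.

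There is essentially no obstacle here: the heavy lifting was done in Proposition \ref{prop:algopp}, and what remains is a bookkeeping consequence of the fact that, in the twisted representation \eqref{eq:defQM}--\eqref{eq:defM}, the ``lepto-colour'' data $(M, N)$ and the ``flavour'' data $(Q, R)$ have been arranged to act on complementary tensor factors within each chirality/$\alpha$ sub-block.
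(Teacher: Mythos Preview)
Your proof is correct and follows essentially the same approach as the paper's: reduce via Proposition~\ref{prop:algopp} to two block conditions, split along the chirality index $s$, and then observe that in each sub-block the two factors act on complementary tensor indices ($\alpha$ versus $I$). The only cosmetic difference is that the paper applies Proposition~\ref{prop:algopp} with the roles of $a$ and $b$ exchanged (arriving at $[R,\bar M]=0$ and $[N,\bar Q]=0$ rather than your $[Q,\bar N]=0$ and $[M,\bar R]=0$), which is equivalent since the condition is quantified over all $a,b$.
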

 \begin{proof}
By Prop. \ref{prop:algopp},  the order zero condition $[a, JbJ^{-1}]=0$ for all
   $a,b\in\A$ is equivalent to
$[R, \bar M ]= 0$ and $[N, \bar Q]=0$. By \eqref{eq:QMexplicit} and
\eqref{eq:defQM}, one gets (omitting the indices $\dot s\dot t$ on
which all actions are trivial)
\begin{equation}
  \label{eq:56}
  [R, M]=
  \begin{pmatrix}
    [\delta_I^J R_r, M_r]&0 \\ 0& [\delta_I^J R_l, M_l].
  \end{pmatrix}_s^t.
\end{equation}
By \eqref{eq:defM}, one has
\begin{equation}
  \label{eq:62}
    [\delta_I^J R_r, M_r]=
    \begin{pmatrix}
      [\delta_I^J \sf d, \sf m \otimes \mathbb I_2] &0 \\ 0& [\delta_I^J p' ,
      \sf m'\otimes\mathbb I_2]
    \end{pmatrix}_\alpha^\beta,
\end{equation}
which is zero, as can be seen writing $\delta_I^J\sf d={\mathbb I}_4 \otimes \sf
d$ and similarly for $[\delta_I^J p' ,
      \sf m'\otimes\mathbb I_2] $. The same holds true for $[\delta_I^J R_l, M_l]$.
 \end{proof}
\subsection{Twisted fluctuation}
\label{subsec:twistfluc}

In the twisted context, fluctuations are similar to \eqref{eq/fluct}, replacing
the commutator for a twisted one \cite{Landi:2017aa}. In
  addition, if the twisted
first-order condition does not hold, one should add a non linear
term \cite{Chamseddine:2013fk,Martinetti:2021aa}. We thus consider the
\emph{twisted-covariant} Dirac operator \begin{equation}
D_{A}=D+A_{(1)} + \hat A_{(1)}  +A_{(2)}
\label{eq:12bis}
\end{equation}
where
\begin{equation}
A_{(1)}=\sum_i a_i\left[D,b_i\right]_\rho,\qquad a_i, b_i\in\A
\label{eq:44}
\end{equation}
is a twisted (generalised) $1$-form, $\hat A_{(1)}:= J A_{(1)} J^{-1}$
is its image by the conjugation with the real structure, while
\begin{equation}
A_{(2)}=\sum_i\hat {a}_i\left[A_\rho,\hat{b}_i\right]_\opr\quad
\text{ with }\; \hat{a}_i:=Ja_iJ^{-1} =\left(a_i^*\right)^\circ,\ \hat{b}_i:=JB_iJ^{-1}=\left(b^*_i\right)^\circ
\end{equation}
and $\rho^\circ$ denotes the automorphism of the opposite algebra defined
as
\begin{equation}
\label{eq:16}
  \rho^\circ(a^\circ):= (\rho^{-1}(a))^\circ.
\end{equation}
The term $A_{(2)}$ breaks the linearity of the map $A_{(1)}\to D + A_{(1)} +
J A_{(1)} J^{-1}$ and vanishes when the twisted first-order
condition \eqref{eq:73} holds (this is a straightforward adaptation to
the twisted context of the result of \cite{Chamseddine:2013fk}). We need to take it into account for, as explained in \S \ref{sec:twistfirstorder}, the twisted first-order condition only
  holds partially.

 The twisted $1$-form decomposes as the sum
$A_{(1)} = A_F + \slashed A$ of two pieces: one that we call the
\emph{finite part} of the fluctuation because it comes from the finite dimensional spectral triple, namely  
\begin{equation}
\label{defineAB}
  A_F=\sum_i a_i\left[\gamma_\mathcal{M}\otimes D_F,b_i\right]_\rho\quad a_i,
  b_i\in \A;
\end{equation}
another one coming from the manifold part of the spectral triple  
\begin{equation}
  \label{eq:40}
  \slashed{A}=\sum_i a_i\left[\slashed{D},b_i\right]_\rho\quad a_i,
  b_i\in \A
\end{equation}
that we call \emph{gauge part} in the following (terminology will
become clear later).

To guarantee that the twisted covariant operator \eqref{eq:12bis} is
selfadjoint, one assumes that the twisted $1$-form $A_{(1)}$ is
selfadjoint \cite[Prop.3.8]{Martinetti:2021aa}
(actually this is not a necessary condition, but requiring $A_{(1)}$ to
be selfadjoint makes sense viewing the fluctuation $D\to D_A$ as a
three steps process
\begin{equation}
D\rightarrow D+ A_{(1)}\rightarrow
D + A_{(1)} +\hat A_{(1)} \rightarrow D_A
\label{eq:26}
\end{equation}
such that selfadjointness
is preserved at each step). This means that for physical models,
we assume that both the gauge  $\slashed A$ and the finite $A_F$ parts
are selfadjoint.

So far, the construction works for any even dimension manifold
$\M$. To build explicitly the Standard Model, from now on one fixes the dimension of $\M$ to $m=4$. The grading and
the real structure are 
\begin{equation}\gamma_\M=\gamma^5= \gamma^0_E \gamma^1_E
\gamma^2_E \gamma^3_E=\begin{pmatrix}
    \I_4 &0 \\ 0 & -\I_4
  \end{pmatrix}_s^t=
\eta_s^t\delta_{\dot{s}}^{\dot{t}}
\label{eq:21}
\end{equation}
and
  \begin{equation}
\mathcal{J}= i\gamma^0_E\,\gamma^2_E\,  cc =i\left(
\begin{array}{cc} 
  {{\tilde\sigma}^2}&0_2\\ 
  0_2 & {\sigma^2}
\end{array}\right)_{s t} cc = -i\eta_s^t \tau_{\dot s}^{\dot t}
\, cc, 
\label{eq:2bbis}
\end{equation}
where $cc$ denotes the complex conjugation and we define
\begin{equation}
  \label{eq:37}
  \tau_{\dot s}^{\dot t}\coloneqq 
  \begin{pmatrix}
    0&-1\\1&0
  \end{pmatrix}_{\dot s}^{\dot t}, \quad  \eta_s^t \coloneqq   \begin{pmatrix}
    1 & \\ &-1
  \end{pmatrix}_s^t.
\end{equation}
For the internal spectral triple, one has
\begin{equation}
  \label{eq:18}
 \gamma_F=
\begin{pmatrix}
   \I_8 & & &\\  
& -\I_8 &  &\\
 &&  -\I_8 &\\  
&&& \I_8  \end{pmatrix}=\eta_{C\alpha}^{D\beta}\delta_I^J,\qquad J_{F}=\left(\begin{array}{cc}
0 & \mathbb{I}_{16}\\
\mathbb{I}_{16} & 0
\end{array}\right)_C^D cc= \xi_C^D\,\delta_{I\alpha}^{J\beta}
\end{equation}
where the matrix $\gamma_F$ is written in the basis left/right
particles then left/right antiparticles, and we define
\begin{equation}
  \label{eq:17}
\eta_\alpha^\beta \coloneqq   \begin{pmatrix}
    \I_2 & \\ &-\I_2
  \end{pmatrix}_\alpha^\beta,\quad \eta_C^D \coloneqq   \begin{pmatrix}
    1 & \\ &-1
  \end{pmatrix}_C^D,\quad \xi_C^D\coloneqq 
  \begin{pmatrix}
    0&1\\1&0
  \end{pmatrix}_C^D
\end{equation}
with $\eta_{C\alpha}^{D\beta}$ holding for $\eta_C^D\eta_\alpha^\beta$.  Thus 
\begin{equation}
  \Gamma =\gamma_\M\otimes \gamma_F=
  \eta_{sC\alpha}^{tD\beta}\delta_{\dot{s}}^{\dot{t}}\quad\text{ and } \quad
J = J_\M\otimes J_F =- i \eta^t_s\,\tau^{\dot t}_{\dot s}\, \xi^C_D \, \delta^{J\beta}_{I\alpha}\,cc.
\label{eq:12}
\end{equation}
\section{Scalar part of the twisted fluctuation}
\label{sec:scalarsector}
\setcounter{equation}{0} 

The scalar sector of the twisted Standard Model is obtained from the finite
part  \eqref{defineAB} of the twisted $1$-form, which in turns decomposes into a diagonal
part (determined by the Yukawa couplings of fermions)
\begin{equation}
  \label{eq:45}
  A_Y = \sum_i a_i[\gamma^5 \otimes D_Y, b_i]_\rho,
\end{equation}
and an off-diagonal part (determined by the Majorana mass of the neutrino)
\begin{equation}
  \label{eq:45bis}
  A_M = \sum_i a_i[\gamma^5 \otimes D_M, b_i]_\rho.
\end{equation}
As shown below, the former produces the Higgs sector, the
  latter a pair of extra scalar fields.
 
\subsection{The Higgs sector}
\label{subse:higgssector}

We begin with the diagonal  part \eqref{eq:45}. We first notice that the $M_3(\C)$ part of
the algebra \eqref{eq:3} twist-commutes with $\gamma^5\otimes D_Y$.
\begin{lemma}
\label{lem:diag1form} For any $b\in\A$ as in \eqref{eq:defb}, one has
 \begin{equation}
  \label{eq:46}
  \left[\gamma^5\otimes D_Y,b\right]_\rho=
  \begin{pmatrix}
S&0 \\ 0& 0
  \end{pmatrix}_D^C
\end{equation}
where $S$ has components
 \begin{align}
   \label{eq:47}
S_{s\dot s I\alpha}^{t\dot t J \beta} =\delta_{\dot s}^{\dot t}\left(\eta_s^u(D_0)_{I\alpha}^{J\gamma} \, 
  R_{s\gamma}^{t\beta}- 
  \rho(R)_{s\alpha}^{u\gamma}\, \eta_u^t 
 (D_0)_{I\gamma}^{J\beta}\right).
 \end{align}
\end{lemma}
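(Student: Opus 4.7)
The plan is to expand $[\gamma^5\otimes D_Y,b]_\rho=(\gamma^5\otimes D_Y)\,b-\rho(b)\,(\gamma^5\otimes D_Y)$ and exploit the block-diagonal structure in the particle/antiparticle index $C$. Since $D_Y=\diag(D_0,D_0^\dagger)_C^D$ by \eqref{eq:9}, while $b=\diag(R,N)_C^D$ by \eqref{eq:defb} and the twist $\rho$ only swaps $r\leftrightarrow l$ in the chirality index $s$ (leaving $C$ untouched, compare \eqref{eq:5}--\eqref{eq:29}), the twisted commutator is itself block-diagonal,
\begin{equation*}
[\gamma^5\otimes D_Y,b]_\rho=\begin{pmatrix}S & 0\\ 0 & T\end{pmatrix}_C^D,
\end{equation*}
with $S=(\gamma^5\otimes D_0)\,R-\rho(R)\,(\gamma^5\otimes D_0)$ and $T=(\gamma^5\otimes D_0^\dagger)\,N-\rho(N)\,(\gamma^5\otimes D_0^\dagger)$. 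The lemma then reduces to (i) showing that $T=0$ and (ii) extracting the announced formula \eqref{eq:47} for $S$.

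For (i), I would use that $N$ is trivial on $\dot s$ by \eqref{eq:QMexplicit} and block-diagonal in $s$ with blocks $N_r,N_l$, the twist $\rho$ swapping these. After factoring $\delta_{\dot s}^{\dot t}$ and the spinor chirality $\eta_s^t$ coming from $\gamma^5$, the vanishing of $T$ reduces to the pair of matrix identities $D_0^\dagger N_r=N_l D_0^\dagger$ and $D_0^\dagger N_l=N_r D_0^\dagger$ on the $(I,\alpha)$ space. To verify these, one exploits that $D_0$ is diagonal in $I$ by \eqref{eq:D0} while each $N_s$ is scalar-valued in $\alpha$ on each of its two halves, with the scalars given by the $I$-matrices ${\sf n},{\sf n}'$ from \eqref{eq:defM}. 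The $r\leftrightarrow l$ swap exchanges ${\sf n}$ and ${\sf n}'$ between the two $\alpha$-halves, which is exactly compensated by the half-swap produced by $D_0^\dagger$ since $D_0^I$ is off-diagonal in those halves by \eqref{eq:defD0I}. At $I=0$ the identity collapses to a scalar one; at $I\in\{1,2,3\}$ one invokes that $D_0^q$ is independent of $I$ by \eqref{eq:10}, so it commutes with the colour-mixing matrices $n,n'$ that act only on $I$; finally the block-diagonality of ${\sf n},{\sf n}'$ in the lepton/quark split of $I$ kills the cross-terms.

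Step (ii) is a routine index computation: using $R_{s\dot s I\alpha}^{t\dot t J\beta}=\delta_{\dot s}^{\dot t}\delta_I^J R_{s\alpha}^{t\beta}$ from \eqref{eq:QMexplicit} together with $(\gamma^5\otimes D_0)_{s\dot s I\alpha}^{t\dot t J\beta}=\eta_s^t\delta_{\dot s}^{\dot t}(D_0)_{I\alpha}^{J\beta}$, the product $(\gamma^5\otimes D_0)\,R$ produces $\delta_{\dot s}^{\dot t}\eta_s^u(D_0)_{I\alpha}^{J\gamma}R_{u\gamma}^{t\beta}$, and analogously $\rho(R)\,(\gamma^5\otimes D_0)$ produces $\delta_{\dot s}^{\dot t}\rho(R)_{s\alpha}^{u\gamma}\eta_u^t(D_0)_{I\gamma}^{J\beta}$; subtracting gives exactly \eqref{eq:47}. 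The main obstacle is step (i), where one has to align the asymmetric assignment of ${\sf n}$ versus ${\sf n}'$ across the two chirality blocks and the two $\alpha$-halves with the half-swap produced by $D_0^\dagger$, and invoke the colour-independence of the quark Yukawa couplings at exactly the right moment; the derivation of $S$ is by comparison essentially bookkeeping.
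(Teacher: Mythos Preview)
Your proposal is correct and follows essentially the same approach as the paper's proof: both first exploit the block-diagonality in $C$ to split into $S$ and $T$, then derive the tensorial formula for $S$ by straightforward index bookkeeping, and finally show $T=0$ by reducing to the commutation ${\sf k}^I({\sf n}\otimes\I_2)=({\sf n}\otimes\I_2){\sf k}^J$ (and its primed/conjugate analogues), which is established from the colour-independence of the quark Yukawa couplings \eqref{eq:10} and the lepton/quark block structure of ${\sf n},{\sf n}'$. The paper carries out the $T=0$ step with a few more explicit matrix lines (its equations \eqref{eq:69}--\eqref{eq:72}), but the logic is identical to what you outline.
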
 
\begin{proof}From the explicit forms \eqref{eq:9} of $D_Y$ and
  \eqref{eq:defb} of $b$, one has 
\begin{align*}
 \left[\gamma^5\otimes D_Y,b\right]_\rho&=
\left(\begin{matrix}
\left[\gamma^5\otimes D_0,R\right]_\rho&\\&\left[\gamma^5\otimes D_0^\dagger,N\right]_\rho
\end{matrix}\right)_C^D.
\end{align*}
In the tensorial notation, 
$S\coloneqq [\gamma^5\otimes D_0, R]_\rho$ has components
\begin{align}
  \label{eq:48}
S_{s\dot s I\alpha}^{t\dot t J \beta} &=\eta_s^u \delta_{\dot s}^{\dot
  u}(D_0)_{I\alpha}^{K\gamma} \, \delta_{\dot uK}^{\dot t J}
  R_{u\gamma}^{t\beta}- \delta_{\dot sI}^{\dot u K}
  \rho(R)_{s\alpha}^{u\gamma}\, \eta_u^t \delta_{\dot u}^{\dot
  t}(D_0)_{K\gamma}^{J\beta},\\
&= \delta_{\dot s}^{\do t}\left(\eta_s^u(D_0)_{I\alpha}^{J\gamma} \, 
  R_{u\gamma}^{t\beta}- 
  \rho(R)_{s\alpha}^{u\gamma}\, \eta_u^t 
 (D_0)_{I\gamma}^{J\beta}\right),
\end{align}
which shows \eqref{eq:47}.
To show that
\begin{equation}
 [\gamma^5\otimes D_0^\dag, N]_\rho=0,
\label{eq:65}
 \end{equation}
let us denote $T$ the left-hand side of the equation above. It  has components
\begin{align}
T_{s\dot s I\alpha}^{t\dot t J \beta} &=\eta_s^u \delta_{\dot s}^{\dot
  u}(D_0^\dag)_{I\alpha}^{K\gamma} \, \delta_{\dot u}^{\dot t}
  N_{u\gamma K}^{t\beta J}- \delta_{\dot s}^{\dot u}
\,\rho(N)_{s\alpha I}^{u\gamma K}\, \eta_u^t \delta_{\dot u}^{\dot t}(D_0^\dag)_{K\gamma}^{J\beta},\\
&= \delta_{\dot s}^{\dot t}\left(\eta_s^u(D_0^\dag)_{I\alpha}^{K\gamma} \, 
  N_{u\gamma K}^{t\beta J}- 
  \rho(N)_{s\alpha I}^{u \gamma K}\, \eta_u^t 
 (D_0^\dag)_{K\gamma}^{J\beta}\right),\\
\label{eq:4.64}
&= \delta_{\dot s}^{\dot t}
  \begin{pmatrix}
    (D_0^\dag)_{I\alpha}^{K\gamma} \, (N_r)_{\gamma K}^{\beta J}-
    (N_l)_{\alpha I}^{\gamma K}\,(D_0^\dag)_{K\alpha}^{J\gamma} &0\\
0& - (D_0^\dag)_{I\alpha}^{K\gamma} \, (N_l)_{\gamma K}^{\beta J}+
    (N_r)_{\alpha I}^{\gamma K}\,(D_0^\dag)_{K\gamma}^{J\beta}
  \end{pmatrix}_s^t.
\end{align}
Since $(D_0^\dagger)_I^K =\delta_I^K(D_0^I)$ and $(D_0^\dagger)^K_J =\delta_J^K(D_0^J)$ (with no summation
on $I$ and $J$), the upper-left term in \eqref{eq:4.64} is
\begin{equation}
  \label{eq:69}
\left(D_0^I\right)_\gamma^\delta (N_r)_{\delta I}^{\beta J}-{(N_l)}_{\alpha I}^{\gamma J} \left(D_0^{J}\right)_\gamma^\beta=
  \begin{pmatrix}
0&   \bar{\sf k}^I(\sf n'\otimes \mathbb I_2) \\\ {\sf k}^I(\sf n
  \otimes \mathbb I_2)&0
  \end{pmatrix}_\alpha^\beta -  
\begin{pmatrix}
0&  (\sf n'\otimes \mathbb I_2)  \bar{\sf k}^J\\ 
(\sf n \otimes \mathbb I_2)  {\sf k}^J&0
  \end{pmatrix} _\alpha^\beta
\end{equation}
where we omitted the $I, J$ indices on $\sf n$. One has
\begin{equation}
{\sf k}^I({\sf n}
  \otimes \mathbb I_2)=  \begin{pmatrix}
  \label{eq:70}
k_u^I {\sf n} & \\ & k_d^I {\sf n}
     \end{pmatrix}, \quad ({\sf n}
  \otimes \mathbb I_2) \sf k^J=  \begin{pmatrix}
{\sf n} k_u^J  & \\ & {\sf n} k_d^J 
     \end{pmatrix}, 
\end{equation}
and similarly for the terms in $\sf n'$. Restoring the indices,  one has
\begin{equation}
  \label{eq:71}
 k_u^I {\sf n}_I^J=
  \begin{pmatrix}
    k_u^l d &\\ &k_u^q n
  \end{pmatrix}_I^J, \quad  {\sf n}_I^J k_u^J=
  \begin{pmatrix}
    d k_u^l  &\\ & n k_u^q
  \end{pmatrix}_I^J
\end{equation}
where we write $k_u^{I=0}=k_u^l$ for the lepton, and $k_u^{I=1,2,3}=k^q_u$
for the coloured quarks. Again,  in the expression above, there is no
summation on $I$ and $J$: $k_u^I {\sf n}_I^J$ means the matrix $\sf n$ in which the
$I^\text{th}$ line is multiplied by $k_u^I$, while in ${\sf n}_I^J k_u^J$
this is
the $J^\text{th}$ column of $\sf n$ which is multiplied by $k_u^J$.
Therefore
\begin{equation}
{\sf k}^I({\sf n}
  \otimes \mathbb I_2) - ({\sf n}
  \otimes \mathbb I_2) {\sf k}^J =0.
\label{eq:72}
  \end{equation}

Similarly $\bar{\sf k}^I({\sf n'}
  \otimes \mathbb I_2) - ({\sf n'}
  \otimes \mathbb I_2) \bar{{\sf k}^J} =0$, so that \eqref{eq:69} -
  that is the upper-left term in  \eqref{eq:4.64} - is zero. The
  proof that the lower-right term is zero is similar. Hence
  \eqref{eq:65} and the result.\end{proof}

A similar result holds in the non-twisted case (the
  computation is similar as above, with $\sf n' = \sf n$, so that 
  everything boils down to the single equation \eqref{eq:72}). The
  result however is not true if one genuinely generalises the twist
  used in
\cite{buckley}. As explained below, this yields an
additional  violation of the twisted first order
condition, besides the one required to generate the 
field $\sigma$. That is why we do not use this genuine twist, but
rather the one  presented in~section~\ref{sec:minimaltwist}.

\begin{remark}
 The twist in \cite{buckley}  was not applied to the $M_3(\C)$ part of
the algebra. Only $\C \oplus \HHH$ was doubled
and this yielded an action similar as  the one used on the present
paper (modulo a change of notations, the representation (4.7) of
\cite{buckley} coincides with \eqref{eq:defQM}).   A genuine generalisation of
this twist consists
in making two copies of $M_3(\C)$ acting independently on the left and
right components of spinors, namely 
$a\in\mathcal{A}$ acts as in \eqref{eq:QMexplicit}, but now $M_{r,l}$ are given by
\begin{equation}
M_r=\left(\begin{matrix}
{\sf m}\otimes\mathbb{I}_4
\end{matrix}\right)_\alpha^\beta,\qquad M_l=\left(\begin{matrix}
{\sf m}'\otimes\mathbb{I}_4
\end{matrix}\right)_\alpha^\beta.
\end{equation}
Then lemma \ref{lem:diag1form} no longer holds for the lower right
term $T$ is non necessarily zero (on the r.h.s. of \eqref{eq:69} the first parenthesis now contains only $\sf n$,
and the second only $\sf n'$, so that the cancellation
\eqref{eq:72} is no longer true). 
\end{remark}
We now compute the $1$-forms generated by the Yukawa couplings of the
fermions. In order to do so, we extend the action of the automorphism $\rho$ to any polynomial in
$q, q'$, $p,p'$, $c, c'$, $d, d'$. Namely $\rho$ ``primes'' what is
un-primed, and vice-versa.  For instance $\rho(qp' -c'd)= q'p - cd'$.

\begin{proposition}
\label{prop:diag1form} The diagonal part (\ref{eq:45}) of a twisted $1$-form
 is   \begin{equation}
A_Y=\left(\begin{matrix}
A&\\ & 0
\end{matrix}\right)_C^D
\quad\text{ where } \quad  A=\delta_{\dot s I}^{\dot t J}
  \begin{pmatrix}
    A_r & \\ &A_l
  \end{pmatrix}_s^t
\label{eq:49}
\end{equation}
with
\begin{equation}
  \label{eq:67}
 A_r=
  \begin{pmatrix}
    & \bar{\sf k^I}H_1\\ H_2\sf k^I
  \end{pmatrix}_\alpha^\beta,\quad A_l=-
  \begin{pmatrix}
    & \bar{\sf k^I}H'_1\\ H'_2\sf k^I
  \end{pmatrix}_\alpha^\beta,\;
\end{equation}
where $H_{i=1, 2}$ and $H'_{1,2}=\rho(H_{1,2})$ are quaternionic
fields.
\end{proposition}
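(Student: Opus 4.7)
The plan is to apply Lemma \ref{lem:diag1form} to each summand of $A_Y=\sum_i a_i[\gamma^5\otimes D_Y,b_i]_\rho$, left-multiply by $a_i$, and collect. By \eqref{eq:46}, each twisted commutator is supported in the upper-left $C,D$ block, is diagonal in $\dot s$, and is diagonal in the lepto-colour index $I$ (since both $D_0$ and $R$ are). Since $a_i$ shares these diagonal properties, $A_Y$ inherits the form \eqref{eq:49}, and it remains to compute the blocks $A_r$ and $A_l$.

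For $s=t=r$, using $\eta_r^u=\delta_r^u$ and $\rho(R)_r=R_l$ in \eqref{eq:47}, the bracket reduces at fixed $I$ to $D_0^I R_r - R_l D_0^I$, so that $A_r=\sum_i\bigl(Q_{r,i}\,D_0^I\,R_{r,i}-Q_{r,i}\,R_{l,i}\,D_0^I\bigr)$. Inserting the explicit $2\times 2$ block forms \eqref{eq:defQ}, \eqref{eq:31} of $Q_r,R_r,R_l$ and the off-diagonal form \eqref{eq:defD0I} of $D_0^I$, the dotted-dotted and undotted-undotted blocks drop out automatically because $D_0^I$ has no diagonal component in flavour. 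Using the mutual commutativity of the $\C$-diagonal matrices ${\sf c}_i,{\sf d}'_i,{\sf k}^I$, the two surviving off-diagonal entries collapse to $\bar{\sf k}^I{\sf c}_i(p'_i-{\sf d}'_i)$ in the upper-right block and $q'_i({\sf d}_i-p_i){\sf k}^I$ in the lower-left block.

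The crux is to recognise the resulting sums as genuinely quaternion-valued. This holds because a difference $p-{\sf d}$ with $p\in\HHH$ and ${\sf d}=\diag(d,\bar d)$ still has the quaternionic pattern $\bigl(\begin{smallmatrix}\alpha&\beta\\-\bar\beta&\bar\alpha\end{smallmatrix}\bigr)$, and both left- and right-multiplication of a quaternion by a matrix of the form $\diag(c,\bar c)$ preserves this pattern. Setting $H_1:=\sum_i{\sf c}_i(p'_i-{\sf d}'_i)$ and $H_2:=\sum_i q'_i({\sf d}_i-p_i)$ then yields the claimed form \eqref{eq:67} of $A_r$.

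The computation of $A_l$ is identical after swapping $r\leftrightarrow l$, except that $\eta_l^l=-1$ produces an overall sign while $\rho(R)_l=R_r$ interchanges primed and unprimed data. Rearranging the result into the claimed shape, the quaternionic fields sitting in the off-diagonal blocks are exactly $\rho(H_1)$ and $\rho(H_2)$, giving $H'_{1,2}=\rho(H_{1,2})$ as asserted. The only step requiring care is tracking on which side $\bar{\sf k}^I$ and ${\sf k}^I$ land, which is dictated by the upper-right versus lower-left positions they occupy in $D_0^I$.
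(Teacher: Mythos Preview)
Your proof is correct and follows essentially the same approach as the paper: invoke Lemma~\ref{lem:diag1form}, left-multiply by $Q$, split into the $s=r$ and $s=l$ blocks, and identify the off-diagonal flavour entries using the commutativity of ${\sf c},{\sf d},{\sf k}^I$. The one point where you go beyond the paper is your explicit verification that $H_1,H_2$ are quaternionic---the paper simply asserts this, whereas you check that $\diag(c,\bar c)$ acting on either side preserves the quaternionic pattern (equivalently, one may observe that $\diag(c,\bar c)$ is itself a quaternion, so all the factors in \eqref{eq:65bis} lie in $\HHH$).
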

\begin{proof}
From \eqref{eq:27} and lemma \ref{lem:diag1form}, one
has $a[\gamma^5\otimes D_Y, b]_\rho= Q\, S$. In components, this gives (using  the explicit forms
\eqref{eq:QMexplicit} of $Q$, $R$):
\begin{align}
A_{s\dot{s}I\alpha}^{t\dot{t}J\beta}&=Q_{s\dot{s}I\alpha}^{u\dot{u}K\gamma}\delta^{\dot t}_{\dot u}\left[\eta_u^v\,(D_0)_{K\gamma}^{J\delta}\,
                                        R_{v\delta}^{t\beta}-\rho
                                        (R)_{u\gamma}^{v\delta}\,\eta_v^t
                                       \, (D_0)_{K\delta}^{J\beta}\right]
\\
&=\delta_{\dot s I}^{\dot t J} Q_{s\alpha}^{u\gamma}\left[\eta_u^v\,\left(D_0^I\right)_{\gamma}^{\delta}\,R_{v\delta}^{t\beta}-\rho(R)_{u\gamma}^{v\delta}\,\eta_v^t\,\left(D_0^I\right)_{\delta}^{\beta}\right],
\end{align}
where we use $\delta_I^K (D_0)_K^J= \delta_I^J(D_0^I)$ (with no
summation on $I$ in the last expression). 
 Since $Q$ is diagonal on the chiral
indices $s$, the only non-zero
components of $A$ are for $s=t=r$ and $s=t=l$, namely
\begin{align}
\label{AIR0}
A^{r\dot tJ\beta}_{r\dot s I \alpha} = \delta_{\dot sI}^{\dot
  tJ}\left(A_r^I\right)_\alpha^\beta &\quad \text{ with }\;
                                       \left(A_r^I\right)_\alpha^\beta
                                       =\left(Q_r\right)_{\alpha}^{\gamma}\left[\left(D_0^I\right)_{\gamma}^{\delta}\left(R_r\right)_{\delta}^{\beta}-\left(R_l\right)_{\gamma}^{\delta}\left(D_0^I\right)_{\delta}^{\beta}\right],\\
\label{AIL0}
A^{l\dot tJ\beta}_{l\dot s I \alpha} = \delta_{\dot sI}^{\dot
  tJ}\left(A_l^I\right)_\alpha^\beta &\quad \text{ with }\; \left(A_l^I\right)_\alpha^\beta=\left(Q_l\right)_{\alpha}^{\gamma}\left[-\left(D_0^I\right)_{\gamma}^{\delta}\left(R_l\right)_{\delta}^{\beta}+\left(R_r\right)_{\gamma}^{\delta}\left(D_0^I\right)_{\delta}^{\beta}\right].
\end{align}
From the explicit expression \eqref{eq:defQ},  \eqref{eq:31},
\eqref{eq:defD0I} of $Q_{r/l}$, $R_{r/l}$ and $D_0^I$ one gets
\begin{align}
  \label{eq:63}
  Q_r D_0^IR_r=
  \begin{pmatrix}
    & {\sf c }\bar{\sf k^I} p' \\  q'{\sf k^I}\sf d 
  \end{pmatrix}_\alpha^\beta,\; 
Q_r R_l  D_0^I=
  \begin{pmatrix}
    & {\sf c } {\sf d'}\bar{\sf k^I}  \\  q' p{\sf k^I}
  \end{pmatrix}_\alpha^\beta,\\
  Q_l D_0^IR_l=
  \begin{pmatrix}
    & {\sf c' }\bar{\sf k^I} p \\  q{\sf k^I}\sf d' 
  \end{pmatrix}_\alpha^\beta,\; 
Q_l R_r  D_0^I=
  \begin{pmatrix}
    & {\sf c'}{\sf d}\bar{\sf k^I}  \\  q p '{\sf k^I}
  \end{pmatrix}_\alpha^\beta,\end{align}
Using that $\sf c, \sf c', \sf d, \sf d'$ commute with ${\sf k^I}$, one
has
\begin{align}
  \label{eq:64}
  Q_r(D_0^IR_r - R_lD_0^I) =
  \begin{pmatrix}
 & \bar{\sf k}^I H_1\\ H_2 {\sf k}^I&    
  \end{pmatrix}_\alpha^\beta,\quad  -Q_l D_0^IR_l + Q_lR_rD_0^I=
  -\begin{pmatrix}
    & \bar{\sf k}^I H_1'\\ H_2'{\sf k}^I
  \end{pmatrix}_\alpha^\beta
\end{align}
where
\begin{align}
  \label{eq:65bis}
  H_1\coloneqq {\sf c}(p'-{\sf d'}), \quad H_2\coloneqq q'({\sf d}-p),\quad
  H'_1\coloneqq {\sf c'}(p-{\sf d}), \quad H'_2\coloneqq q({\sf d'}-p').
\end{align}
This shows the result.
\end{proof}

Imposing now selfadjointness as stressed before \eqref{eq:26}at the
  beginning of this section, we get the
\begin{corollary}
\label{cor:higgs}
A selfadjoint diagonal twisted $1$-form \eqref{eq:45} is parametrized
by two independent 
scalar  quaternionic field $H_r, H_l$.
\end{corollary}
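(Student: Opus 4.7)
The plan is to start from Proposition \ref{prop:diag1form}, which already parametrizes a general (not necessarily selfadjoint) diagonal twisted $1$-form by four fields $H_1, H_2$ and their twisted counterparts $H'_1 = \rho(H_1), H'_2 = \rho(H_2)$. The corollary will follow by reading off the constraints imposed by $A_Y = A_Y^\dagger$.

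First I would observe that since $A_Y$ is block-diagonal in the particle/antiparticle index $C$ (with only the $C=D=0$ block $A$ non-vanishing), selfadjointness of $A_Y$ reduces to selfadjointness of $A$; and since $A$ is further block-diagonal in the chirality index $s$, the condition splits into the two independent requirements $A_r = A_r^\dagger$ and $A_l = A_l^\dagger$. Each of these is a $2\times 2$ block-matrix of the form $\bigl(\begin{smallmatrix} 0 & \bar{\sf k}^I H_1 \\ H_2 \sf k^I & 0 \end{smallmatrix}\bigr)$ (and similarly with primes and a sign for $A_l$), whose selfadjointness amounts to equating the anti-diagonal block with the adjoint of the other one.

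Next I would unpack the adjoint condition. Using that $\sf k^I$ is diagonal and carrying $\sf k^I$ to the appropriate side, the condition $\bar{\sf k}^I H_1 = (H_2 \sf k^I)^\dagger = \bar{\sf k}^I H_2^\dagger$ must hold for every $I=0,1,2,3$, while $H_1, H_2$ do not depend on $I$. Since the Yukawa matrix $\sf k^I$ is invertible for at least one $I$, this forces $H_2 = H_1^\dagger$; in the same manner $A_l = A_l^\dagger$ gives $H'_2 = {H'_1}^\dagger$. I would then define $H_r := H_1$, $H_l := H'_1$ so that the whole $A_Y$ is determined by the pair $(H_r, H_l)$.

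It then remains to argue two things. (i) That $H_r$ and $H_l$ are genuinely quaternionic: this is the main subtle point. I would verify it by substituting the explicit expressions \eqref{eq:65bis}, $H_1 = {\sf c}(p' - {\sf d'})$ and $H'_1 = {\sf c'}(p - {\sf d})$, and checking by direct computation on the $2\times 2$ matrix entries that the diagonal-times-quaternion-minus-diagonal pattern produces a matrix of the quaternionic form $\bigl(\begin{smallmatrix}\alpha & \beta \\ -\bar\beta & \bar\alpha\end{smallmatrix}\bigr)$ (the imposed constraint $H_2 = H_r^\dagger$ is automatically compatible, since the adjoint of a quaternion is again a quaternion). (ii) That $H_r$ and $H_l$ are independent: this follows from the doubling of the algebra in \S\ref{subse:twistalgH}, since $H_r$ is built from the unprimed parameters $c, p', d'$ whereas $H_l$ is built from the primed ones $c', p, d$, which are free and unrelated generators of $\A$. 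I expect step (i), namely the verification that the selfadjointness constraint is consistent with the quaternionic structure rather than cutting it down further, to be the only non-mechanical part of the argument.
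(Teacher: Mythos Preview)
Your argument is correct and follows the same route as the paper, just with more detail. The paper's proof is a two-line reading of Proposition~\ref{prop:diag1form}: selfadjointness of $A_Y$ forces $H_2=H_1^\dagger$ and $H'_2={H'_1}^\dagger$, and independence of the two resulting fields is clear from \eqref{eq:65bis}. Your extra care about the invertibility of $\bar{\sf k}^I$ and the explicit check that ${\sf c}(p'-{\sf d'})$ is quaternionic are fine but already implicit in Proposition~\ref{prop:diag1form}, which states that the $H_i$ are quaternionic; note also that the paper sets $H_r\coloneqq H_2=H_1^\dagger$ rather than $H_r\coloneqq H_1$, a harmless relabelling since the adjoint of a quaternion is again a quaternion.
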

\begin{proof}
The twisted $1$-form \eqref{eq:49} is selfadjoint if and only if
\begin{equation}
H_2=H_1^\dag\eqqcolon H_r\quad\text{ and } \quad {H'_2} ={H'}^\dag_1\eqqcolon H_l.\label{eq:75}
\end{equation}
They are independent as
follows from their definition \eqref{eq:65bis}. 
\end{proof}

Since $\gamma^5\otimes D_Y$ satisfies the twisted first-order
  condition (Prop.~\ref{prop:twisted-first-order}), it does  not contribute to the non-linear term $A_{(2)}$
of the twisted fluctuation. Gathering the results of this section,
one thus works out the fields induced by the
Yukawa coupling of fermions  via a
twisted fluctuation of the metric.
\begin{proposition}
\label{prop:diagfluc}
A selfadjoint diagonal fluctuation is
\begin{equation}
D_{A_Y}=\gamma^5\otimes D_Y+A_Y+ \widehat{A_Y}=
\begin{pmatrix}
 \eta_s^t\delta_{\dot s}^{\dot t} D_0 + A\\ &  \eta_s^t\delta_{\dot
   s}^{\dot t} D_0^\dag +\bar A
\end{pmatrix}_C^D
\end{equation}
where $A=\delta_{\dot s I}^{\dot t J}
\begin{pmatrix}
  A_r& \\ & A_l
\end{pmatrix}_s^t$
is generated by two quaternionic fields $H_r, H_l$ as 
\begin{equation}
  \label{eq:66}
  A_r=
  \begin{pmatrix}
    & \bar{\sf k}^I H_r^\dag\\ H_r{\sf k}^I
  \end{pmatrix}_\alpha^\beta,   \quad A_l=
  \begin{pmatrix}
    & \bar{\sf k}^I H_l^\dag\\ H_l{\sf k}^I
  \end{pmatrix}_\alpha^\beta.
\end{equation}
\end{proposition}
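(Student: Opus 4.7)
The plan is to assemble the fluctuation from three already-available ingredients: the diagonal shape of $A_Y$ from Proposition \ref{prop:diag1form}, the selfadjointness parametrization from Corollary \ref{cor:higgs}, and the action of conjugation by $J$ computed (for algebra elements) in Proposition \ref{prop:algopp}.

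First I would insert Corollary \ref{cor:higgs} into Proposition \ref{prop:diag1form}. Setting $H_1 = H_r^\dag$, $H_2 = H_r$, and absorbing the overall minus sign of $A_l$ into the naming convention (so that what is called $H_l$ here corresponds to $-H'_2$ of the corollary, which is harmless since the field is free), one obtains directly
\begin{equation*}
A_Y = \begin{pmatrix} A & 0 \\ 0 & 0 \end{pmatrix}_C^D, \qquad A = \delta_{\dot s I}^{\dot t J}\begin{pmatrix} A_r & \\ & A_l \end{pmatrix}_s^t,
\end{equation*}
with $A_{r/l}$ as in \eqref{eq:66}. This accounts for the upper-left block of the desired expression.

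Next I would compute $J A_Y J^{-1}$. Although $A_Y$ is a generalised $1$-form rather than an algebra element, it shares precisely the block-diagonal structure $\begin{pmatrix}\bullet & 0 \\ 0 & \bullet\end{pmatrix}_C^D$ used in Proposition \ref{prop:algopp}, and the tensorial form $\delta_{\dot s I}^{\dot t J}(\cdot)_{s\alpha}^{t\beta}$ of $A$ is identical to that of $Q$ in \eqref{eq:QMexplicit}, being trivial on the $\dot s\dot t$ indices. The same calculation as in \eqref{eq:50} therefore applies verbatim, and the identity $\J A \J = -\bar A$ goes through exactly as in \eqref{eq:JQJ} because $A$ commutes with $\bar\J_{r/l}$ (both act on disjoint indices). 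Combined with $J^{-1} = -J$, this yields
\begin{equation*}
J A_Y J^{-1} = \begin{pmatrix} 0 & 0 \\ 0 & \bar A \end{pmatrix}_C^D,
\end{equation*}
which produces the lower-right block.

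Finally I would add the unfluctuated piece $\gamma^5 \otimes D_Y$. By \eqref{eq:9} this is already block-diagonal in $C,D$ with entries $\gamma^5 \otimes D_0$ and $\gamma^5 \otimes D_0^\dag$, and the tensor form \eqref{eq:21} of $\gamma^5$ rewrites each as $\eta_s^t \delta_{\dot s}^{\dot t} D_0$ (resp.\ its adjoint). Summing the three contributions gives the stated formula. The only delicate point is really bookkeeping: checking that the sign produced by $J^2 = -\I$ and the sign in $\J A \J = -\bar A$ cancel so as to leave $+\bar A$ (not $-\bar A$) in the lower block, and absorbing the sign of $A_l$ into the renaming of $H_l$; both are routine once written out.
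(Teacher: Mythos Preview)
Your proposal is correct and follows essentially the same route as the paper's own proof. The paper also computes $J A_Y J^{-1}$ by reducing to $\mathcal{J} A \mathcal{J}$ and showing this equals $-\bar A$, then sums with $A_Y$ and $\gamma^5\otimes D_Y$; the only cosmetic difference is that the paper redoes the $\mathcal{J} A \mathcal{J}$ computation in tensorial form (using \eqref{eq:2bbis}) rather than invoking Proposition~\ref{prop:algopp}, and it is less explicit than you are about the sign absorption when passing from the $H'_{1,2}$ of Corollary~\ref{cor:higgs} to the $H_l$ of the statement.
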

\begin{proof}
Remembering that $J^{-1}=-J$, proposition \ref{prop:diag1form} yields 
\begin{align}
\label{eq:jayj}
\widehat{A_Y} = JA_Y J^{-1}&=   \begin{pmatrix} 0& \cal J \\ {\cal J} &0 \end{pmatrix}_C^D
\begin{pmatrix}
A& \\ & 0
\end{pmatrix}_C^D 
 \begin{pmatrix}
0& -\cal J \\ -{\cal J} &0       
\end{pmatrix}_C^D= \begin{pmatrix}
0& \\ & -{\cal J}A {\cal J}^{-1}      
\end{pmatrix}_C^D.
\end{align}
From the explicit form \eqref{eq:2bbis} of ${\cal J}=-{\cal J}$ and \eqref{eq:49} of
$A$,  one obtains (omitting the $I J$ and  $\alpha\beta$ indices in which the
real structure $J$ is
trivial)
\begin{align}
  \label{eq:58}
  {\cal J} A {\cal J}^{-1} &=\eta_s^u\tau_{\dot s}^{\dot u}
  \,\bar A_{u\dot u}^{v\dot v}\, \eta_v^t\tau_{\dot v}^{\dot t}
= \eta_s^u\tau_{\dot s}^{\dot u}
  \begin{pmatrix}
   \bar A_r&0 \\ 0& \bar A_l
  \end{pmatrix}_s^t
\eta_v^t\tau_{\dot v}^{\dot t}=  \begin{pmatrix}
    -\bar A_r&0 \\ 0& -\bar A_l
  \end{pmatrix}_s^t=-\bar A,
\end{align}
where we used \eqref{eq:49}  and write $\tau_{\dot s}^{\dot u}\delta_{\dot u}^{\dot v}\tau_{\dot v}^{\dot t}=-\delta_{\dot s}^{\dot t}$.

The result follows summing \eqref{eq:jayj} with $A_Y$ given in
Prop. \ref{prop:diag1form}  and $D_Y$ given in \eqref{eq:9}, then
using corollary \ref{cor:higgs} to rename $H_r$ and $H_l$.
\end{proof}

In the non-twisted case, the primed and unprimed quantities are
  equal, so that one obtains only one quaternionic field $H_r=H_l$,
  which combines in the action as
  \begin{equation}
\label{eq:Higgsrl}
H\coloneqq H_r +
  H_l =
  \begin{pmatrix}
    \phi_1 & -\bar\phi_2\\ \phi_2 & \bar\phi_1
  \end{pmatrix},
\end{equation}
whose complex components $\phi_1, \phi_2$ identify with the Higgs
doublet. In the twisted case,  the complex components $\phi^r_{1,2}$,
$\phi^l_{1,2}$ of $H_r$, $H_l$  define two scalar doublets
  \begin{equation}
\label{eq:Higgscomp}
    \Phi_r \coloneqq 
    \begin{pmatrix}
      \phi^r_1\\ \phi^r_2
    \end{pmatrix},\;  \Phi_l \coloneqq 
    \begin{pmatrix}
      \phi^l_1\\ \phi^l_2
    \end{pmatrix},
  \end{equation}
which act respectively on the right and on the left part of
  the Dirac spinors. However, similar to \eqref{eq:Higgsrl} they only appear in the fermionic action through their linear
  combination $H_r+H_l$ \cite{Manuel-Filaci:2020aa}, therefore there is actually only one physical Higgs doublet in the twisted case as well.
 
\subsection{The extra scalar field}
\label{subsec:sigma}

The computation of the off-diagonal term \eqref{eq:45bis} of the
finite part of the twisted
$1$-form is easier than for the diagonal part,
because  $D_M$ has only one non-zero component.
\begin{proposition} The off-diagonal part \eqref{eq:45bis} of a
  twisted $1$-form is
  \begin{equation}
    \label{eq:68}
    A_M=
    \begin{pmatrix}
& C\\ D&      
    \end{pmatrix}_C^D
  \end{equation}
  where
  \begin{equation}
    \label{eq:81}
    C=k_R\, \delta_{\dot s}^{\dot t}
     \begin{pmatrix}
        C_r & \\ & C_l
      \end{pmatrix}_s^t,\quad   D=\bar  k_R\, \delta_{\dot s}^{\dot t}
     \begin{pmatrix}
        D_r & \\ & D_l
      \end{pmatrix}_s^t
   \end{equation}
with 
\begin{equation}
\label{81bbis}
C_r=D_r=\Xi_{I\alpha}^{J\beta}\sigma,\quad C_l=D_l= -\Xi_{I\alpha}^{J\beta}\sigma' 
  \end{equation}
where $\sigma$ and $\sigma'$ are complex  fields.
\end{proposition}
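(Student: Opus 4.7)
My proof strategy has three steps, followed by the discussion of the main difficulty. First, since $\gamma^5\otimes D_M$ is block antidiagonal in the particle/antiparticle index $C$ with blocks $\gamma^5\otimes D_R$ and $\gamma^5\otimes D_R^\dagger$, while $b=\left(\begin{smallmatrix}R&0\\0&N\end{smallmatrix}\right)_C^D$ and $\rho(b)=\left(\begin{smallmatrix}\rho(R)&0\\0&\rho(N)\end{smallmatrix}\right)_C^D$ are block diagonal, a direct block computation yields
\begin{equation*}
[\gamma^5\otimes D_M,b]_\rho=\begin{pmatrix}0&(\gamma^5\otimes D_R)N-\rho(R)(\gamma^5\otimes D_R)\\(\gamma^5\otimes D_R^\dagger)R-\rho(N)(\gamma^5\otimes D_R^\dagger)&0\end{pmatrix}_C^D.
\end{equation*}
Left-multiplying by $a=\left(\begin{smallmatrix}Q&0\\0&M\end{smallmatrix}\right)_C^D$ and summing over the fluctuation index $i$ gives \eqref{eq:68}, with $C$ and $D$ the upper-right and lower-left blocks.

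Second, I would substitute $\gamma^5=\eta_s^t\delta_{\dot s}^{\dot t}$ and exploit the chirality diagonality of $Q$, $R$, $N$, $\rho(R)$, $\rho(N)$ established in \eqref{eq:QMexplicit}--\eqref{eq:29}. Every chiral summation then collapses to $s=t$ and picks up a factor $\eta_s^s\in\{\pm 1\}$, producing at once the overall $\delta_{\dot s}^{\dot t}$ and the $\diag(C_r,C_l)_s^t$ structure of \eqref{eq:81}. Using $\rho(R)_r=R_l$ and $\rho(R)_l=R_r$, the inner matrices become $C_r=Q_r(\Xi N_r-R_l\Xi)$ and $C_l=-Q_l(\Xi N_l-R_r\Xi)$, and analogously for $D_{r/l}$ with $M$ and $R$ replacing $Q$ and $N$.

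Third, since $D_R=k_R\Xi_I^J\Xi_\alpha^\beta$ is a rank-one projector onto the right-handed neutrino entry $(\alpha=\dot 1,\,I=0)$, every surviving matrix product collapses to a scalar product of algebra entries at that very position. From \eqref{eq:defQ}--\eqref{eq:defM} and \eqref{eq:35} one reads off $(Q_r)_{\dot 1}^{\dot 1}=c$, $(R_l)_{\dot 1}^{\dot 1}=d'$, $(N_r)^{\dot 1,\,0}_{\dot 1,\,0}=d$, and the primed/swapped analogues for $s=l$. This yields $C_r=k_R\,\Xi_{I\alpha}^{J\beta}\sigma$ with $\sigma=\sum_i c_i(d_i-d_i')$ and $C_l=-k_R\,\Xi_{I\alpha}^{J\beta}\sigma'$ with $\sigma'=\sum_i c_i'(d_i'-d_i)=\rho(\sigma)$. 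The same computation for $D$ goes through identically; the key identifications are $(M_r)^{\dot 1,\,0}_{\dot 1,\,0}=({\sf m})_0^0=c=(Q_r)_{\dot 1}^{\dot 1}$ and $(R_r)_{\dot 1}^{\dot 1}=d=(N_r)^{\dot 1,\,0}_{\dot 1,\,0}$, so that $D_{r/l}=C_{r/l}$ up to the replacement $k_R\mapsto\bar k_R$.

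The main obstacle is the careful tracking of six indices at once, but above all the verification of the non-obvious equality $D_{r/l}=C_{r/l}$. This identity is not forced by selfadjointness (which would instead relate $C$ to $D^\dagger$); rather, it reflects the symmetric way in which the $\C$-factor of $\A_{\text{SM}}=\C\oplus\HHH\oplus M_3(\C)$ appears simultaneously as the $(\dot 1,\dot 1)$-entry of the quaternionic block $Q_r$ (through ${\sf c}$) and as the leptonic entry of the colour block $M_r$ (through ${\sf m}$). Any deviation from this symmetric embedding would spoil the coincidence, so checking it is the essential content of the proof.
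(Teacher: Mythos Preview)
Your proof is correct and follows essentially the same route as the paper: the same block decomposition of $a[\gamma^5\otimes D_M,b]_\rho$ in the $C$ index, the same chiral reduction yielding $C_r=Q_r(\Xi N_r-R_l\Xi)$, $C_l=-Q_l(\Xi N_l-R_r\Xi)$ and their $D$-counterparts, and the same evaluation at the right-neutrino entry to produce $\sigma=c(d-d')$, $\sigma'=c'(d'-d)$. Your phrasing of the third step via the rank-one projector property of $\Xi$ is a clean shortcut for what the paper does by explicit block multiplication, and your closing remark explaining \emph{why} $D_{r/l}=C_{r/l}$ (the common embedding of the $\C$-factor in both $Q$ and $M$) makes transparent what the paper verifies by direct computation.
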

\begin{proof}
Using the explicit form \eqref{eq:9}  of $D_M$, for $a$ in
\eqref{eq:27} and $b$ in \eqref{eq:defb} one gets
\begin{align}
\nonumber
 a\left[\gamma^5\otimes D_M,b\right]_\rho &= 
\left(\begin{matrix}
Q&0\\0&M
\end{matrix}\right)
\left[\left(\begin{matrix}
0&\gamma^5\otimes D_R\\
\gamma^5\otimes D_R^\dagger&0
\end{matrix}\right),\left(\begin{matrix}
R&0\\0&N
\end{matrix}\right)\right]_\rho=\\
&=\left(\begin{matrix}
\!\!\!\!&\!\!\!\!\!\!\!\!\!\!\!\!\!\! Q\left(\left(\gamma^5\otimes D_R\right) \!N\!\!-\!\!\rho\left(R\right)\!\left(\gamma^5\otimes D_R\right)\right) \\
 M\!\left((\gamma^5\otimes D_R^\dagger) R\!\!-\!\!\rho\left(N\right)\!(\gamma^5\otimes D_R^\dagger)\right)&\!\!\!\!
\end{matrix}\right)_C^D\!\!.
\label{eq:comMajo}
\end{align}

With  $D_R$ given in \eqref{eq:Drtensor}, one computes
the upper-right component $C$ of the matrix above:
\begin{equation}
C_{s\dot{s}I\alpha}^{t\dot{t}J\beta}=Q_{s\dot{s}I\alpha}^{u\dot{u}K\gamma}\left[k_R\,\eta_u^v\delta_{\dot{u}}^{\dot{v}}\, \Xi_{K\gamma}^{L\delta} N_{v\dot{v}L\delta}^{t\dot{t}J\beta}-k_R\,\rho(R)_{u\dot{u}K\gamma}^{v\dot{v}L\delta}\eta_v^t\delta_{\dot{v}}^{\dot{t}} \,\Xi_{L\delta}^{J^\beta}\right].
\end{equation}
Since $Q, N$ are diagonal in the $s$ index and proportional
to
$\delta_{\dot{s}}^{\dot{t}}$, the non-zero components of $C$~are
\begin{align}
\label{eq:77}
(C_r)_{I\alpha}^{J\beta}=k_R\,\delta_{\dot s}^{\dot t}\,
  (Q_r)_{I\alpha}^{K\gamma}
\left[\Xi_{K\gamma}^{L\delta}
  (N_r)_{L\delta}^{J\beta}-(R_l)_{K\gamma}^{L\delta}
  \,\Xi_{L\delta}^{J^\beta}\right],\\
(C_l)_{I\alpha}^{J\beta}=k_R\,\delta_{\dot s}^{\dot t}\,
  (Q_l)_{I\alpha}^{K\gamma}
\left[-\Xi_{K\gamma}^{L\delta} (N_l)_{L\delta}^{J\beta}+(R_r)_{K\gamma}^{L\delta} \,\Xi_{L\delta}^{J^\beta}\right].
\end{align}

Explicitly, from the formula \eqref{eq:defQ} for $Q_{r/l}$  and
\eqref{eq:31} of $R_{r/l}, N_{r/l}$, one gets
\begin{align*}
  \label{eq:78}
  Q_r(\Xi N_r-R_l\Xi)&=
  \begin{pmatrix}
   {\sf  c}\delta_I^J \!\!  & \\ &\!\!  q' \delta_I^j
  \end{pmatrix}_\alpha^\beta\left(  \begin{pmatrix}
    \Xi_I^J   & \\ & 0_3
  \end{pmatrix}_\alpha^\beta   \begin{pmatrix}
    {\sf n}\otimes \mathbb I_2   & \\ & {\sf n'}\otimes \mathbb I_2
  \end{pmatrix}_\alpha^\beta- \begin{pmatrix}
    {\sf d'}\delta_I^J   & \\  & p\,\delta_I^J 
  \end{pmatrix}_\alpha^\beta \begin{pmatrix}
    \Xi_I^J   & \\ & 0_3
  \end{pmatrix}_\alpha^\beta   \right)=\\
&=
  \begin{pmatrix}
   {\sf  c}\delta_I^J  \!\! & \\\!\! & q' \delta_I^j
  \end{pmatrix}_\alpha^\beta \begin{pmatrix}
    \Xi_I^J d- d' \Xi_I^J & \\ & \!\!\!\!\! 0_3
  \end{pmatrix}_\alpha^\beta   = 
                                             \begin{pmatrix}
  c ( d- d') \Xi_I^J  & \\ & \!\!\!\!\! 0_3 
  \end{pmatrix}_\alpha^\beta=\sigma \Xi_{\alpha I}^{J\beta} 
\end{align*} 
and similarly
\begin{equation}
  \label{eq:80}
   Q_l(-\Xi N_l +R_r\Xi)=-\sigma' \Xi_{\alpha I}^{J\beta} 
\end{equation}
where we define the scalar fields
\begin{equation}
  \label{eq:79}
  \sigma \coloneqq c ( d- d'),\quad   \sigma' \coloneqq  c' ( d'- d).
\end{equation}

Similarly, one computes that 
 the lower left component $D$ of \eqref{eq:comMajo} has non zero
 components 
 \begin{align}
   \label{eq:83}
   D_r &= \bar k^R \delta_{\dot s}^{\dot t} \, M_r(\Xi R_r - N_l\Xi) =
   \bar k^R\, \delta_{\dot s}^{\dot t}\, \Xi_{\alpha I}^{\beta J}c(d
   -d')=  \bar k^R\, \delta_{\dot s}^{\dot t}\, \Xi_{\alpha I}^{\beta
   J}\, \sigma,\\
\nonumber
D_l &= \bar k^R \delta_{\dot s}^{\dot t} \, M_l(-\Xi R_l + N_r\Xi) =
   \bar k^R\, \delta_{\dot s}^{\dot t}\, \Xi_{\alpha I}^{\beta J}c'(-d'
   +d)=  -\bar k^R\, \delta_{\dot s}^{\dot t}\, \Xi_{\alpha I}^{\beta
   J}\, \sigma'.
 \end{align}

\vspace{-.7truecm}
\end{proof}
\noindent An off-diagonal  $1$-form $A_M$ is 
self-adjoint if and only if
$D_r^\dag =C_r$ and $D_l^\dag=C_l$, that is
\begin{equation}
  \label{eq:87}
  \sigma =\bar\sigma,\quad  \sigma' =\bar\sigma'. 
\end{equation}

The part of the twisted fluctuation induced by the Majorana mass of
the neutrino is then easily
obtained, taking however into account the 
  contribution of $D_M$ to the non-linear term $A_{(2)}$, since
  $\gamma^5\otimes D_M$ violates the twisted first-order condition (cf.
Prop.~\ref{prop:twisted-first-order}).
\begin{proposition}
\label{proposi:offdiagfluc}
  A off-diagonal fluctuation is parametrised by  two
    independent real scalar fields
$\sigma_r, \sigma_l$:
\begin{equation}
\label{prop:offdiagfluc}
D_{A_M}=\gamma^5\otimes D_M+A_M+ \widehat{A_M}+ {{A_M}_{(2)}}=
\delta_{\dot t}^{\dot t} \left(\begin{matrix}
0&\eta_s^t D_{0}+k_R\,\Xi_{I\alpha}^{J\beta}\,\bar\Sigma_s^t\\ \eta_s^t
D_{0}^\dagger+\bar k_R \,\Xi_{I\alpha}^{J\beta}\,  \Sigma_s^t&0
\end{matrix}\right)_C^D.
\end{equation}
where
\begin{equation}
  \label{eq:SIGMA}
\Sigma=
\begin{pmatrix}
  \sigma_r & \\ & \sigma_l 
\end{pmatrix}_s^t.\end{equation}
\end{proposition}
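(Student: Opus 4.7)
The plan is to proceed by a direct computation mirroring the one carried out in the proof of Proposition \ref{prop:diagfluc}, just applied to the off-diagonal (Majorana) sector instead of the diagonal (Yukawa) one. I would start by reading off from \eqref{eq:9} and \eqref{eq:Drtensor} that $\gamma^5\otimes D_M$ is block off-diagonal in the $C$-index, with upper-right entry $\eta_s^t\delta_{\dot s}^{\dot t}\, k_R\,\Xi_{I\alpha}^{J\beta}$ and lower-left entry its hermitian conjugate. The self-adjoint $1$-form $A_M$ has already been computed in the previous proposition in the same block off-diagonal form, with entries $C$ and $D$ carrying the structure \eqref{eq:81}--\eqref{81bbis}.

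The main task is then to evaluate $JA_MJ^{-1}$. Using $J=\begin{pmatrix}0&\J\\ \J&0\end{pmatrix}_C^D$ from \eqref{eq:108} and $J^{-1}=-J$ (from \eqref{eq:105}), a straightforward $2\times 2$ block multiplication gives
\begin{equation*}
JA_MJ^{-1}=-\begin{pmatrix}0 & \J D\J\\ \J C\J & 0\end{pmatrix}_C^D.
\end{equation*}
Because $C$ and $D$ are proportional to $\delta_{\dot s}^{\dot t}$ and diagonal in the chiral index $s$, the same argument used in \eqref{eq:JQJ}--\eqref{eq:JMJ} applies verbatim, giving $\J C\J=-\bar C$ and $\J D\J=-\bar D$, so that $JA_MJ^{-1}=\begin{pmatrix}0&\bar D\\ \bar C&0\end{pmatrix}_C^D$.

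Adding up $\gamma^5\otimes D_M$, $A_M$, and $JA_MJ^{-1}$, the upper-right block becomes $\delta_{\dot s}^{\dot t}\bigl(\eta_s^t\, k_R\,\Xi_{I\alpha}^{J\beta} + C_s^t + \bar D_s^t\bigr)$, and similarly for the lower-left block. The self-adjointness conditions \eqref{eq:87} force $\sigma,\sigma'$ to be real, so that $C+\bar D$ collapses to a single diagonal matrix in $s$ (up to sign) proportional to $k_R\,\Xi_{I\alpha}^{J\beta}$. Renaming the two independent real scalar fields that arise from the $r$ and $l$ chiral blocks as $\sigma_r$ and $\sigma_l$ and gathering them into the diagonal matrix $\Sigma_s^t$ of \eqref{eq:SIGMA} yields the claimed form.

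I do not expect any real obstacle: the only delicate point is bookkeeping of the various $\delta_{\dot s}^{\dot t}$, $\eta_s^t$ and $\Xi$ factors to ensure that the sum $\eta_s^t + \bar\Sigma_s^t$ (respectively $\eta_s^t+\Sigma_s^t$ in the adjoint block) arises naturally from the combination of the bare Majorana term with $C+\bar D$. This is purely a matter of absorbing the numerical factor coming from $C+\bar D=2\,\mathrm{Re}(C)$ into the redefinition of $\sigma_r,\sigma_l$, exactly parallel to how $H_r,H_l$ were introduced in Corollary \ref{cor:higgs} and Proposition \ref{prop:diagfluc}.
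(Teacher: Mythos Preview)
Your proposal is correct and follows essentially the same route as the paper: compute $JA_MJ^{-1}$ via the block structure of $J$, use the analogue of \eqref{eq:JQJ}--\eqref{eq:JMJ} to get $\J C\J=-\bar C$ and $\J D\J=-\bar D$, then sum.

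One small remark: you invoke the selfadjointness condition \eqref{eq:87} to force $\sigma,\sigma'$ real before defining $\sigma_r,\sigma_l$. The paper does not do this. It simply sets $\sigma_r=\sigma+\bar\sigma$ and $\sigma_l=-(\sigma'+\bar\sigma')$, which are manifestly real regardless of whether $A_M$ is selfadjoint (indeed $C+\bar D$ already produces the combination $\sigma+\bar\sigma$). The paper then observes, right after the proof, that this is a case where selfadjointness of the $1$-form is not needed to obtain a selfadjoint fluctuation. Your argument still goes through, but you are using a hypothesis that is not actually required.
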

\begin{proof}
 As in the proof of proposition \ref{prop:diagfluc},  one has
  \begin{equation}
    \label{eq:88}
   \widehat{A_M}=  J A_M J^{-1}=
    \begin{pmatrix}
     0&-{\cal J}D {\cal J}\\ -{\cal J}C {\cal J}&0
    \end{pmatrix}_C^D
  \end{equation}
with
\begin{equation}
  \label{eq:84}
  {\cal J}C {\cal J}^{-1}=\eta_s^u\tau_{\dot s}^{\dot u}
  \,\bar C_{u\dot u}^{v\dot v}\, \eta_v^t\tau_{\dot v}^{\dot t}= -\bar C
\end{equation}
and similarly for $D$. Hence
\begin{equation}
  \label{eq:85}
  A_M+ \widehat{JA_M}=
\left(\begin{matrix}
0&C+\overline{D}\\ \overline{C}+D&0
\end{matrix}\right)_C^D.
\end{equation}

The non linear term is 
 (omitting the summation index)
\begin{equation}
{A_M}_{(2)}=\hat {a}\left[A_M,\hat{b}\right]_\opr.
\end{equation}
By proposition \ref{prop:algopp} and the explicit form \eqref{eq:68}
of $A_M$ one gets
\begin{equation}
\hat a \left[A_M,\hat b\right]_\crho= -\begin{pmatrix}
      \bar M&0 \\ 0&\bar Q
    \end{pmatrix}_C^D\begin{pmatrix}
0&\overline{\rho(N)}C-C\overline{R}\\\overline{\rho(R)}D-D\overline{N}
\end{pmatrix}_C^D,
\end{equation}
where we use $\rho^\circ(\hat b)=\rho^\circ((b^*)^\circ)=(\rho^{-1}(b^*))^\circ=
(\rho(b)^*)^\circ=\hat{\rho(b)}$ which follows from the definition
\eqref{eq:16} of $\rho^\circ$ together with the regularity condition
$\rho(a^*)=(\rho^{-1}(a))^*$ satisfied by $\rho$.
From \eqref{eq:81} and \eqref{eq:defb}
\begin{align}
C\overline{R}&=k_R\delta_{\dot s}^{\dot t}\Xi_{I\alpha}^{J\beta}\begin{pmatrix}
\overline{d}\sigma\\&-\overline{d'}\sigma'
\end{pmatrix}_s^t,&
\overline{\rho(N)}C&=k_R\delta_{\dot s}^{\dot t}\Xi_{I\alpha}^{J\beta}\begin{pmatrix}
\overline{d'}\sigma\\&-\overline{d}\sigma'
\end{pmatrix}_s^t,\\
\overline{\rho(R)}D&=\overline{k_R}\delta_{\dot s}^{\dot t}\Xi_{I\alpha}^{J\beta}\begin{pmatrix}
\overline{d'}\sigma\\&-\overline{d}\sigma'
\end{pmatrix}_s^t,&
D\overline{N}&=\overline{k_R}\delta_{\dot s}^{\dot t}\Xi_{I\alpha}^{J\beta}\begin{pmatrix}
\overline{d}\sigma\\&-\overline{d'}\sigma'
\end{pmatrix}_s^t.
\end{align}
Remembering \eqref{eq:79}, one obtains
\begin{equation}
-\bar M\left(\overline{\rho(N)}C-C\overline{R}\right)=k_R\delta_{\dot s}^{\dot t}\Xi_{I\alpha}^{J\beta}\begin{pmatrix}
\bar c\left(\overline{d}-\overline{d}'\right)\sigma\\&\overline{c'}\left(\overline{d}-\overline{d}'\right)\sigma'
\end{pmatrix}_s^t=k_R\delta_{\dot s}^{\dot t}\Xi_{I\alpha}^{J\beta}\begin{pmatrix}
|\sigma|^2\\&-|\sigma'|^2
\end{pmatrix}_s^t,
\end{equation}
\begin{equation}
-\bar Q\left(\overline{R}'D-D\overline{N}\right)=\overline{k_R}\delta_{\dot s}^{\dot t}\Xi_{I\alpha}^{J\beta}\begin{pmatrix}
\bar c\left(\overline{d}-\overline{d}'\right)\sigma\\&\overline{c'}\left(\overline{d}-\overline{d}'\right)\sigma'
\end{pmatrix}_s^t=\begin{pmatrix}
|\sigma|^2\\&-|\sigma'|^2
\end{pmatrix}_s^t.
\end{equation}
Hence
\begin{equation}
\omg=\delta_{\dot s}^{\dot t}\Xi_{I\alpha}^{J\beta}\begin{pmatrix}
0&k_R\\\overline{k_R}&0
\end{pmatrix}_C^D\begin{pmatrix}
\abs{\sigma}^2\\&-\abs{\sigma'}^2
\end{pmatrix}_s^t.
\end{equation}

The explicit form of $\Sigma$ follows from \eqref{eq:81}-\eqref{81bbis}, defining
\begin{equation*}
\label{eq:19}
\sigma_r=
\bar\sigma + \sigma+ |\sigma|^2\; \text{ and }\; \sigma_l= -\bar\sigma' -\sigma'
- |\sigma'|^2.
\end{equation*}

\vspace{-.9truecm}\end{proof}

\noindent The non-linear term does not modify the nature of the extra-scalar
field $\sigma$. It simply modifies the relation between the components
$\sigma_r, \sigma_l$ and the elements of the algebra defining the
twisted $1$-form, introducing the terms $|\sigma|^2$, $|\sigma'|^2$
in the equation above.
\begin{remark}
\label{rem:extra-scalar-field}  The field $\sigma$ is chiral, in the sense it has two independent
  components $\sigma_r$, $\sigma_l$. The one initially worked out in
  \cite{buckley} was not chiral. This is because in the latter case,
 one does not double $M_3(\C)$ and identifies the complex component of
 $\sf m$ with the complex component of $Q_r$. This means that the
 component $d'$ of $N_l$ identifies with the component $d$ of $R_r$, so that
 \eqref{eq:80}  and \eqref{eq:83} vanish, that is $C_l = D_r=0$. Similarly,  the component  $c'$ of $M_l$
 becomes $c$, so that $D_l=C_r$. One thus retrieves the formula
 $(4.32)$ of \cite{buckley} (in which the role of $c$ and $d$ have
 been interchanged). However, forcing the identification of the
 (non-doubled) 
 $M_3(\C)$-component with one of the (doubled) component of $\C$ is
 actually not compatible with the twist, as explained in greater
 details in \cite{Manuel-Filaci:2020aa}. This problem is resolved in
 the present paper, where $M_3(\C)$ is doubled and there is a minimal
 violation of the twisted first-order condition.
\end{remark}
As an illustration that the selfadjointness of the $1$-form is not
necessary to get a selfadjoint twisted fluctuation (see \S ~\ref{subsec:twistfluc}), notice that in the proposition above $D_{A_M}$ is selfadjoint regardless of the
selfadjointness of $A_M$. As well, one does not need to assume that
$A_M$ is selfadjoint to ensure that the fields $\sigma_r, \sigma_l$
are real.

\section{Gauge part of the twisted fluctuation}
\label{sec:gauge}
\setcounter{equation}{0}

In this section, we compute the twisted fluctuation induced by the
free part 
$\slashed{D}= \slashed\partial \otimes \mathbb I_F$
 of the Dirac operator \eqref{eq:04}, that is
\begin{equation}
\slashed D +\slashed A + J\slashed A\, J^{-1}
 \label{eq:115}
\end{equation}
where $\slashed A$ is  the twisted
$1$-form \eqref{eq:40} induced by $\slashed D$, that we call in the following a
\emph{free $1$-form}. As will be checked in section  \ref{sec:gaugetransform}, the components
of this form are the gauge fields of the model.
There is no  non-linear term $\slashed A_{(2)}$, for $\slashed{D}$ does verify the twisted
  first-order condition, as shown in proposition \ref{prop:twisted-first-order}.
\subsection{Dirac matrices and twist}
\label{subsec:gamma}

We begin by recalling some useful
relations between the Dirac matrices and the twist. 

\begin{lemma}
If an operator $\cal O$ on $L^2(\M, S)$ twist-commutes with the Dirac
matrices,
\begin{equation}
\gamma^\mu\cal O = \rho(\cal O)\gamma^\mu \quad \forall 
\mu\label{eq:92}
\end{equation}
for some automorphism $\rho$ of ${\cal B}(\HH)$, and commutes the spin connection $\omega_\mu$, then 
 \begin{equation}
\left[\slashed{\partial},\cal
  O\right]_\rho=-i\gamma^\mu\partial_\mu \cal O.
\label{eq:89}
\end{equation}
\end{lemma}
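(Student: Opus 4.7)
The plan is to unpack the definition of the twisted commutator, apply it to an arbitrary spinor $\psi\in L^2(\M,S)$, and use the two hypotheses in turn to collect the surviving piece. Concretely, I write
\begin{equation*}
[\slashed{\partial},\mathcal{O}]_\rho\,\psi \;=\; \slashed{\partial}(\mathcal{O}\psi)\;-\;\rho(\mathcal{O})\slashed{\partial}\psi,
\end{equation*}
insert $\slashed{\partial}=-i\gamma^\mu(\partial_\mu+\omega_\mu)$ and apply the Leibniz rule to $\partial_\mu(\mathcal{O}\psi)$. This splits $\slashed{\partial}(\mathcal{O}\psi)$ into three pieces: $-i\gamma^\mu(\partial_\mu\mathcal{O})\psi$, $-i\gamma^\mu \mathcal{O}\,\partial_\mu\psi$, and $-i\gamma^\mu\omega_\mu \mathcal{O}\psi$.

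The key step is then to move $\mathcal{O}$ to the left of the Dirac matrix in the last two pieces. The second piece is handled directly by \eqref{eq:92}: $\gamma^\mu\mathcal{O}=\rho(\mathcal{O})\gamma^\mu$. For the third piece I first use that $\mathcal{O}$ commutes with $\omega_\mu$ to rewrite $\gamma^\mu\omega_\mu\mathcal{O}=\gamma^\mu\mathcal{O}\omega_\mu$, and only then apply the twist relation to get $\rho(\mathcal{O})\gamma^\mu\omega_\mu$. Combining the two transported terms regroups as $\rho(\mathcal{O})\cdot\bigl(-i\gamma^\mu(\partial_\mu+\omega_\mu)\bigr)\psi=\rho(\mathcal{O})\slashed{\partial}\psi$, which cancels exactly against the subtracted piece of the twisted commutator, leaving $-i\gamma^\mu(\partial_\mu\mathcal{O})\psi$. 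Since $\psi$ is arbitrary, \eqref{eq:89} follows as an operator identity.

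There is no real obstacle here; the computation is essentially bookkeeping. The only subtle point worth being explicit about is the order in which the two hypotheses are invoked on the spin-connection term: one first uses $[\omega_\mu,\mathcal{O}]=0$ to swap $\omega_\mu$ and $\mathcal{O}$, and then uses the twist relation to swap $\gamma^\mu$ and $\mathcal{O}$. Neither hypothesis alone would do. It is also implicit that on the right-hand side $\partial_\mu\mathcal{O}$ is understood in the sense of a (smooth) operator-valued function on $\M$, so that $\gamma^\mu\partial_\mu\mathcal{O}$ is a well-defined bounded operator on $L^2(\M,S)$.
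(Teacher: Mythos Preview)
Your proof is correct and follows essentially the same approach as the paper: expand $\slashed\partial=-i\gamma^\mu(\partial_\mu+\omega_\mu)$, apply the Leibniz rule to the derivative term, and use the twist relation together with $[\omega_\mu,\mathcal O]=0$ to cancel everything except $-i\gamma^\mu(\partial_\mu\mathcal O)$. The paper organises the computation by splitting the twisted commutator into its $\partial_\mu$ and $\omega_\mu$ parts from the outset, but the manipulations are the same.
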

\begin{proof}
One has
\begin{equation}
  \label{eq:96}
    [\gamma^ \mu\nabla_\mu, \cal O]_\rho=   [\gamma^ \mu\partial_\mu,
    \cal O]_\rho+
 [\gamma^ \mu\omega_\mu, \cal O]_\rho.
\end{equation}
On the one side, the Leibniz rule for the differential
operator $\partial_\mu$ together with \eqref{eq:92}  yields 
\begin{equation*}
  [\gamma^ \mu\partial_\mu, \cal O]_\rho\psi=
  \gamma^\mu\partial_\mu\cal O\psi - \rho(\cal O) \gamma^
  \mu\partial_\mu\psi =  \gamma^\mu(\partial_\mu\cal O)\psi  +
  \gamma^\mu\cal O\partial_\mu\psi- \rho(\cal O)
  \gamma^\mu\partial_\mu\psi=\gamma^\mu(\partial_\mu\cal O)\psi.
\end{equation*}
On the other side, by \eqref{eq:92}, 
\begin{equation}
  [\gamma^ \mu\omega_\mu, \cal O]_\rho=\gamma^ \mu\omega_\mu\cal O-
  \rho(\cal O)\gamma^ \mu\omega_\mu=\gamma^\mu[\omega^\mu, \cal O] 
\end{equation}
vanishes by hypothesis. Hence the result. 
\end{proof}

This lemma applies in particular to the components $Q$ and $M$ of the
representation of the algebra~$\A$ in  \eqref{eq:27}. The slight
difference is that these components do not act on $L^2(\M, S)$, but on
$L^2(\M, S)\otimes \C^{32}$. With a slight abuse of
notation, we write 
\begin{equation}
\gamma^\mu Q\coloneqq  (\gamma^\mu \otimes \I_{16}) \, Q\; , \;  \partial_\mu Q\coloneqq  (\partial_\mu \otimes \I_{16}) Q
\label{eq:117}
\end{equation}
and similarly for $M$.
\begin{corollary}
 \label{cor:twistcommut}
One has
  \begin{align}
    \label{eq:98}
&\gamma^\mu Q= \rho(Q)\gamma^\mu, \quad [\ds, Q]_\rho= -i\gamma^\mu\partial_\mu Q,\\
&\gamma^\mu M= \rho(M)\gamma^\mu, \quad [\ds, M]_\rho=-i\gamma^\mu\partial_\mu Q.
\end{align}
\end{corollary}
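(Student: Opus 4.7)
The plan is to establish the twist-commutation $\gamma^\mu Q = \rho(Q)\gamma^\mu$ (and its analogue for $M$) by a direct index computation, and then invoke the preceding lemma to deduce the twisted-commutator formulas.

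First I would exploit the chiral structure of the Dirac matrices encoded in (\ref{eq:21})--(\ref{eq:2bbis}): each $\gamma^\mu$ is off-diagonal in the chirality index $s$, intertwining $r$ with $l$, and acts only on the spinor indices $s, \dot s$. By contrast, the representations $Q$ and $M$ given in (\ref{eq:QMexplicit})--(\ref{eq:defM}) act trivially on $\dot s$ (through $\delta_{\dot s}^{\dot t}$) and are diagonal in $s$, with distinct blocks $Q_r, Q_l$ (resp.\ $M_r, M_l$) on the two chiralities. Writing $\gamma^\mu Q$ block by block in the chirality index, the upper-right chirality entry pairs $\gamma^\mu$ with $Q_l$ and the lower-left with $Q_r$. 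Meanwhile, by the definition of the twist (\ref{eq:29}), $\rho(Q)$ swaps those two blocks, so $\rho(Q)\gamma^\mu$ places $Q_l$ on the upper-right and $Q_r$ on the lower-left. Because $\gamma^\mu$ acts only on $\dot s$ while $Q_{r/l}$ acts only on $\alpha$ (and diagonally on $I$), the two products match block by block, yielding $\gamma^\mu Q = \rho(Q)\gamma^\mu$. Replacing $Q_{r/l}$ by $M_{r/l}$, the identical argument gives $\gamma^\mu M = \rho(M)\gamma^\mu$.

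Once the twist-commutation is in hand, the twisted-commutator identities follow directly from the preceding lemma, provided only that $Q$ and $M$ commute with the spin connection $\omega_\mu$. This is immediate: $\omega_\mu = \tfrac14 \omega_{\mu ab}\gamma^a\gamma^b$ lies in the even part of the Clifford algebra and acts only on the spinor indices $s, \dot s$, whereas $Q$ and $M$ are pointwise-in-$x$ matrix-valued operators on the internal indices $I, \alpha$, trivial on $\dot s$ and diagonal in $s$; at each $x\in\M$ the two therefore commute as matrices. Applying the lemma then gives $[\ds, Q]_\rho = -i\gamma^\mu \partial_\mu Q$ and $[\ds, M]_\rho = -i\gamma^\mu \partial_\mu M$.

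No real obstacle is anticipated; the content of the corollary is essentially the observation that the chirality-swap induced by multiplication by any odd element of the Clifford algebra coincides with the action of the twist $\rho$, which by construction (\ref{eq:2})--(\ref{eq:29}) exchanges the doubled $r$ and $l$ sectors. I read the right-hand side of the last displayed equation of the statement as $-i\gamma^\mu\partial_\mu M$, the appearance of $Q$ there being an evident typographical slip, and treat it accordingly.
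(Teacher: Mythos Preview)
Your proposal is correct and follows essentially the same approach as the paper: both verify the twist-commutation $\gamma^\mu Q=\rho(Q)\gamma^\mu$ by a block computation in the chirality index $s$ (the paper writes out the $2\times 2$ matrix product explicitly, you phrase it conceptually), then observe that the spin connection, being a product of two Dirac matrices, is diagonal in $s$ and hence commutes with $Q$ and $M$, and finally invoke the preceding lemma. Your remark on the typographical slip in the last displayed equation is also accurate.
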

\begin{proof}
  From \eqref{eq:defQM} and omitting the internal indices (on which
  the action of $\gamma^\mu\otimes \I_{16}$ is trivial), one checks
  from the explicit form \eqref{EDirac} of the euclidean Dirac
  matrices that
  \begin{align}
    \label{eq:94}
    \gamma^\mu_E Q - \rho(Q)\gamma_E^\mu=
    \begin{pmatrix}
      0&\sigma^\mu\\ \tilde\sigma^\mu &0
    \end{pmatrix}_s^t \begin{pmatrix} Q_r& 0\\ 0& Q_l
    \end{pmatrix}_s^t-\begin{pmatrix} Q_l& 0\\ 0& Q_r
    \end{pmatrix}_s^t \begin{pmatrix} 0&\sigma^\mu\\ \tilde\sigma^\mu
      &0
    \end{pmatrix}_s^t =0.
  \end{align}
  The same holds true for the curved Dirac matrices \eqref{eq:111}, by
  linear combination. 

The commutation with the spin connection follows remembering
  that the latter is
  \begin{equation}
    \omega_\mu =\Gamma_\mu^{\rho\nu} \gamma_\rho\gamma_\nu=\Gamma_\mu^{\rho\nu}
    \begin{pmatrix}
      \sigma_\mu\tilde\sigma_\nu& 0\\ 0& \tilde\sigma_\mu\sigma_\nu
    \end{pmatrix}_s^t
    \label{eq:97}
  \end{equation}
  and so commutes with $Q$, which is diagonal in the
  $s,t$ indices an trivial in the $\dot s,\dot t$ indices.
\end{proof}

\subsection{Free $1$-form}
\label{sec:free1form}

  With the previous results, it is not difficult to compute a free $1$-form  \eqref{eq:40}.
\begin{lemma}
\label{lem:free-1-form}
A free $1$-form is
     \begin{equation}
      \label{eq:95}
      \slashed{A}
      =-i\gamma^\mu A_\mu \quad\text{ with }\quad A_\mu= \left(\begin{matrix}
          Q_\mu&0\\0&M_\mu
        \end{matrix}\right)_C^D,
    \end{equation}
where we use notations similar to \eqref{eq:117}, with
\begin{align}
  \label{eq:119}
  Q_\mu\coloneqq \sum_i \rho(Q_i)\partial_\mu R_i,\quad M_\mu=\sum_i \rho(M_i)\partial_\mu N_i
\end{align}
for $Q_i, M_i$ and $R_i, N_i$ the components of $a_i, b_i$ as in
(\ref{eq:27}, \ref{eq:defb}). 
  \end{lemma}
\begin{proof}
Omitting the summation index $i$, one has
  \begin{align}
    \slashed{A}=a\left[\slashed{D},b\right]_\rho&= 
\left(\begin{matrix}
        Q&0\\0&M
      \end{matrix}\right)_C^D
\left(\begin{matrix}
      [\ds, R]_\rho&0\\0&     [\ds, N]_\rho
      \end{matrix}\right)_C^D=\\
\nonumber    &=-i\left(\begin{matrix}
        Q&0\\0&M
      \end{matrix}\right)_C^D
\left(\begin{matrix}
        \gamma^\mu\partial_\mu R&0\\0&\gamma^\mu\partial_\mu N
      \end{matrix}\right)_C^D
    =-i\gamma^\mu\left(\begin{matrix}
                   \rho\left(Q\right)\partial_\mu
                   R&0\\0&\rho\left(M\right)\partial_\mu N
                 \end{matrix}\right)_C^D,
  \end{align}
where the last equalities follow from corollary \ref{cor:twistcommut}. Restoring the index $i$, one gets the result.
\end{proof}

By computing explicitly the components of $\slashed A$, one finds that
a free $1$-form is  parametrised by two complex
fields $c^r_\mu$, $c^l_\mu$, two 
quaternionic fields $q_\mu^r$, $q_\mu^l$ and two $M_3(\C)$-valued
fields $m^r_\mu$,~$m^l_\mu$.
\begin{proposition}
\label{prop:free1formself}
The components $Q_\mu, M_\mu$ of  $\slashed A$ in \eqref{eq:95}
are
\begin{equation}
  \label{eq:120}
  Q_\mu = \delta_{\dot s I}^{\dot t J}
  \begin{pmatrix}
    Q_\mu^r& \\ & Q_\mu^l
  \end{pmatrix}_s^t,\quad 
M_\mu =\delta_{\dot s}^{\dot t}
  \begin{pmatrix}
    M_\mu^r & \\  &M_\mu^l
  \end{pmatrix}_s^t
\end{equation}
where
\begin{align}
  \label{eq:90}
 & Q_\mu^r=
  \begin{pmatrix}
    {\sf c}^r_\mu &\\ & q^r_\mu
  \end{pmatrix}_\alpha^\beta,\quad   Q_\mu^l =
  \begin{pmatrix}
   {\sf c}_\mu^l &\\ & q^l_\mu 
  \end{pmatrix}_\alpha^\beta
\end{align}
for ${\sf c}_\mu^r =
\begin{pmatrix}
  c_\mu^r &\\ &\bar c_\mu^r
\end{pmatrix}$, ${\sf c}_\mu^l =
\begin{pmatrix}
  c_\mu^l &\\ &\bar c_\mu^l
\end{pmatrix}$
and
\begin{align}
\label{eq:90-2}&M_\mu^r=\begin{pmatrix}
    {\sf m}^r_\mu\otimes \I_2 & 0 \\
    0 &{\sf m}_\mu^l\otimes \I_2
  \end{pmatrix}_\alpha^\beta,\; 
M_\mu^{l}=\begin{pmatrix}
    {\sf m}_\mu^l \otimes \I_2 & 0 \\
    0 &{\sf m}_\mu^r\otimes \I_2
  \end{pmatrix}_\alpha^\beta
\end{align}
for
 $ {\sf m}_\mu^r=
  \begin{pmatrix}
    c_\mu^r&\\ & m_\mu^r
  \end{pmatrix}_I^J$, 
 ${\sf m}_\mu^l=
  \begin{pmatrix}
    c_\mu^l&\\ & m_\mu^l
  \end{pmatrix}_I^J$.

\noindent The complex, quaternionic and $M_3(\C)$-value fields $c_\mu^{r\slash l}, q_\mu^{r\slash l}, m_\mu^{r\slash l}$ are
defined in the proof.\end{proposition}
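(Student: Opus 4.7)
The plan is to start from the expressions $Q_\mu = \sum_i \rho(Q_i)\partial_\mu R_i$ and $M_\mu = \sum_i \rho(M_i)\partial_\mu N_i$ provided by the preceding lemma, and to substitute the explicit block decompositions \eqref{eq:22}--\eqref{eq:29} of the representation together with their counterparts \eqref{eq:defb}, \eqref{eq:31} for $R_i, N_i$. The whole proof is then a bookkeeping exercise in matrix multiplication, organised by the index structure.

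First I would observe that the indices $\dot s, I$ enter $\rho(Q)$ and $R$ only through the Kronecker factor $\delta^{\dot t J}_{\dot s I}$, and that this factor is preserved by the product $\rho(Q_i)\partial_\mu R_i$ since $\partial_\mu$ acts component-wise and does not mix indices. This immediately produces the prefactor $\delta^{\dot t J}_{\dot s I}$ in \eqref{eq:120}. The same argument, with $\delta^{\dot t}_{\dot s}$ in place of $\delta^{\dot t J}_{\dot s I}$, handles $M_\mu$.

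Next I would read off the chiral blocks. Since $\rho$ swaps $r \leftrightarrow l$, \eqref{eq:29} tells us that the $(r,r)$-block of $\rho(Q_i)\partial_\mu R_i$ equals $(Q_l)_i\,\partial_\mu (R_r)_i$ and the $(l,l)$-block equals $(Q_r)_i\,\partial_\mu (R_l)_i$, while the off-diagonal blocks vanish since both factors are chiral-diagonal. This gives
\[
Q_\mu^r = \sum_i (Q_l)_i\, \partial_\mu (R_r)_i, \qquad Q_\mu^l = \sum_i (Q_r)_i\, \partial_\mu (R_l)_i.
\]
Inserting the explicit forms \eqref{eq:defQ} and \eqref{eq:31}, and defining
\[
c_\mu^r \coloneqq \sum_i c'_i \partial_\mu d_i,\quad c_\mu^l \coloneqq \sum_i c_i \partial_\mu d'_i,\quad q_\mu^r \coloneqq \sum_i q_i \partial_\mu p'_i,\quad q_\mu^l \coloneqq \sum_i q'_i \partial_\mu p_i,
\]
one reads off \eqref{eq:90}. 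The computation for $M_\mu$ is strictly analogous: using \eqref{eq:defM} together with the companion structure of $N_{r/l}$, one sets $m_\mu^r \coloneqq \sum_i m'_i \partial_\mu n_i$ and $m_\mu^l \coloneqq \sum_i m_i \partial_\mu n'_i$ to obtain \eqref{eq:90-2}. The complex entries appearing inside ${\sf m}_\mu^{r/l}$ automatically match the $c_\mu^{r/l}$ introduced above, because ${\sf c}$ and ${\sf m}$ (resp.~${\sf c}'$ and ${\sf m}'$) share the same complex parameter $c$ (resp.~$c'$) by construction, cf.~\eqref{eq:35}.

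There is no real obstacle beyond the careful tracking of the $\rho$-induced swap $r \leftrightarrow l$ consistently through both factors of each summand, and the six indices $s,\dot s,I,\alpha$ (with their companions $t,\dot t,J,\beta$). Once the chiral/non-chiral indices are separated as above, the block-diagonal product structure drops out immediately and the identification with the announced fields is then just a definition.
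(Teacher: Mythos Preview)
Your proposal is correct and follows essentially the same route as the paper: starting from $Q_\mu=\sum_i\rho(Q_i)\partial_\mu R_i$, $M_\mu=\sum_i\rho(M_i)\partial_\mu N_i$, you separate out the Kronecker factors in $\dot s,I$, use the $r\leftrightarrow l$ swap from $\rho$ to identify $Q_\mu^r=Q_l\partial_\mu R_r$, $Q_\mu^l=Q_r\partial_\mu R_l$ (and similarly for $M_\mu^{r/l}$), and then read off the field definitions $c_\mu^{r/l},q_\mu^{r/l},m_\mu^{r/l}$ from the block structure of \eqref{eq:defQ}, \eqref{eq:defM}, \eqref{eq:31}. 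Your observation that the complex entry of ${\sf m}_\mu^{r/l}$ automatically coincides with $c_\mu^{r/l}$ is exactly the consistency that makes the statement hold, and the paper leaves this implicit.
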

\begin{proof}

The form \eqref{eq:120}-\eqref{eq:90} of the components of $\slashed
A$ follows calculating explicitly \eqref{eq:119} using
\eqref{eq:defQM}-\eqref{eq:35} for $Q_i, M_i$ and \eqref{eq:31} for
$R_i, N_i$.
Omitting the $i$ index, one finds
\begin{equation}
  \label{eq:125}
  Q_\mu^r = Q_l \partial_\mu R_r, \quad Q_\mu^l= Q_r\partial_\mu
  R_l,\quad  M_\mu^r = M_l \partial_\mu N_r, \quad M_\mu^l= M_r\partial_\mu
  N_l.
\end{equation}
The first two equations yield \eqref{eq:90} with
\begin{align}
  \label{eq:121}
  & c_\mu^r=  c' \partial_\mu d,\quad  c_\mu^l=  c
  \partial_\mu d',\quad   q_\mu^r=  q \partial_\mu p',\quad q_\mu^l=  q'
  \partial_\mu p,
\end{align}
the last two ones yields ${\sf m}_\mu^r= \sf{m}'\partial_\mu
\sf{n},\quad \sf{m}_\mu^l= \sf{m}\partial_\mu \sf{n}'$, from which
\eqref{eq:90-2} follows with 
\begin{align*}
m_\mu^r= m'\partial_\mu n,\quad m_\mu^l= m\partial_\mu n'.
\end{align*}

\vspace{-.5truecm}
\end{proof}

\begin{corollary}
  A free  $1$-form $\slashed A$ is selfadjoint if and only if
\begin{equation}
  \label{eq:124}
  c_\mu^l=-\bar c_\mu^r,\quad  q_\mu^l=-(q_\mu^r)^\dag,  \quad
  m_\mu^l=-(m_\mu^r)^\dag.
\end{equation}
\end{corollary}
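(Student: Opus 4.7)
The strategy is to compute $\slashed{A}^\dagger$ explicitly from the expression $\slashed{A}=-i\gamma^\mu A_\mu$ in \eqref{eq:95} and compare, term by term, with $\slashed{A}$ itself.

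First I would observe that the Euclidean Dirac matrices $\gamma^\mu$ are selfadjoint (and hence so are their curved-space combinations built through the vierbein), so
\begin{equation*}
\slashed{A}^\dagger = i\, A_\mu^\dagger\,\gamma^\mu.
\end{equation*}
Thus $\slashed{A}=\slashed{A}^\dagger$ is equivalent to $-\gamma^\mu A_\mu = A_\mu^\dagger\gamma^\mu$. The next step is to exploit the block structure of $\gamma^\mu$ in the chiral basis and the fact that $A_\mu$ is diagonal in the chiral indices $s,t$ (and in fact trivial in the spinor indices $\dot s,\dot t$). Writing $\gamma^\mu$ in the form used in \eqref{eq:21}, a direct block multiplication shows that because $Q_\mu^{r/l}$ and $M_\mu^{r/l}$ only carry internal indices (they are proportional to $\delta_{\dot s}^{\dot t}$), they commute trivially with the Pauli blocks $\sigma^\mu,\tilde\sigma^\mu$ of $\gamma^\mu$. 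The selfadjointness equation then collapses to the two conditions
\begin{equation*}
Q_\mu^l = -(Q_\mu^r)^\dagger,\qquad M_\mu^l = -(M_\mu^r)^\dagger.
\end{equation*}

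The remaining work is to unpack these two matrix equalities using the explicit forms \eqref{eq:90}--\eqref{eq:90-2}. For $Q$, the block-diagonal structure in $\alpha$ gives ${\sf c}_\mu^l = -({\sf c}_\mu^r)^\dagger$ and $q_\mu^l = -(q_\mu^r)^\dagger$; recalling the shape of ${\sf c}_\mu^{r,l}$ as diagonal matrices with entries $c_\mu^{r,l},\bar c_\mu^{r,l}$ yields $c_\mu^l = -\bar c_\mu^r$. For $M$, the relation $M_\mu^l=-(M_\mu^r)^\dagger$ swaps the $r$ and $l$ slots, so comparing entry by entry gives ${\sf m}_\mu^l = -({\sf m}_\mu^r)^\dagger$, which reproduces the (already obtained) condition on $c_\mu^{r,l}$ in the scalar slot of ${\sf m}_\mu^{r,l}$ and yields $m_\mu^l=-(m_\mu^r)^\dagger$ in the $M_3(\mathbb{C})$ slot.

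I do not expect a genuine obstacle here: the only subtle point is the step in which one commutes $A_\mu$ past $\gamma^\mu$, and this is painless precisely because $A_\mu$ acts trivially on the spinor indices that the Pauli blocks of $\gamma^\mu$ touch. The consistency between the two constraints derived from $Q_\mu^l=-(Q_\mu^r)^\dagger$ and $M_\mu^l=-(M_\mu^r)^\dagger$ on the common scalar field $c_\mu^{r,l}$ is a welcome sanity check rather than a difficulty.
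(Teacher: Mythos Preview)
Your approach is essentially the same as the paper's: compute $\slashed A^\dagger$, reduce selfadjointness to $Q_\mu^l=-(Q_\mu^r)^\dagger$ and $M_\mu^l=-(M_\mu^r)^\dagger$, then unpack using \eqref{eq:90}--\eqref{eq:90-2}. The paper phrases the commutation step via the twist relation $A_\mu^\dagger\gamma^\mu=\gamma^\mu\rho(A_\mu)^\dagger$ (Corollary~\ref{cor:twistcommut}) rather than by direct block multiplication, but that is the same computation.

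One small point worth tightening: you identify ``commuting $A_\mu$ past $\gamma^\mu$'' as the only subtle step, but that is actually the easy part. What really needs a word is why the summed equation $\sum_\mu \sigma^\mu\bigl(Q_\mu^l+(Q_\mu^r)^\dagger\bigr)=0$ forces the vanishing of each summand separately. The paper does this by multiplying by $\gamma^\nu$ and tracing over the $\dot s,\dot t$ indices, using $\Tr(\tilde\sigma^\mu\sigma^\nu)=2\delta^{\mu\nu}$; equivalently, you can simply invoke the linear independence of $\{\mathbb I_2,\sigma_j\}$ together with the fact that your coefficients act trivially on the $\dot s,\dot t$ space. Your ``collapses to'' hides exactly this step.
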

\begin{proof}
  From lemma \ref{lem:free-1-form} and corollary
  \ref{cor:twistcommut}, using that $\rho$ is a
  $*$-automorphism{\footnote{In a twisted spectral triple the
      automorphism is not necessarily involutive. What is asked is the
      regularity condition 
    $\rho(a^*)=(\rho^{-1}(a))^*$. In our case since
    $\rho^{-1}=\rho$, the latter is equivalent to
    $\rho$ being a $*$-autormorphism.}}, 
  one has
\begin{equation}
\slashed{A}^\dagger=i(A^\mu)^\dagger \gamma^\mu=i \gamma^\mu \rho(A^\mu)^\dagger, \label{eq:118}
\end{equation}
so $\slashed A$ is selfadjoint if and only if $\gamma^\mu(\rho(A_\mu)^\dag
+A_\mu) =0$. Since $A_\mu$ is diagonal the $s, t$ indices, the sum
$\Delta_\mu\coloneqq \rho(A_\mu)^\dag+A_\mu$ is also diagonal with components
$\Delta_\mu^{r/l}$. Thus
\begin{equation}
  \label{eq:127}
  \gamma^\mu\Delta_\mu=
  \begin{pmatrix}
     0& \sigma^\mu\Delta_\mu^l \\ \tilde\sigma^\mu \Delta_\mu^r
  \end{pmatrix}_s^t.
\end{equation}
If this is zero, then for any $\gamma^\nu$
\begin{equation}
  \label{eq:128}
\gamma^\nu\gamma^\mu \Delta_\mu = 
  \begin{pmatrix}
 \sigma^\nu\tilde\sigma^\mu\Delta_\mu^r &0\\ 0&\tilde\sigma^\nu\sigma^\mu\Delta_\mu^l  
  \end{pmatrix}=0.
\end{equation}
Being $A_\mu$ -- hence $\Delta_\mu$ --  trivial in $\dot s,\dot t$,  and
since $\text{Tr}\, \tilde\sigma^\mu\sigma^\nu= 2\delta_{\mu\nu}$, the
partial trace on the $\dot s, \dot t$ indices of the expression above
yields $\Delta_\mu^r =\Delta_\mu^l=0$. Therefore  $\gamma^\mu(\rho(A_\mu)^\dag
+A_\mu) =0$ implies 
\begin{equation}
  \label{eq:129}
  \rho(A_\mu)^\dag=- A_\mu.
\end{equation}
The converse is obviously true. Consequently, $\slashed A_\mu$ is
selfadjoint if and only if \eqref{eq:129} holds true.

From \eqref{eq:95}, this is equivalent to $\rho(Q_\mu)^\dag = -Q_\mu$
and $\rho(M_\mu)^\dag = -M_\mu$ that is, from \eqref{eq:120},
\begin{equation}
\label{eq:selfadjoint}
(Q_\mu^l)^\dag=-Q_\mu^r \quad  \text{ and } (M_\mu^l)^\dag=-M_\mu^r.
\end{equation}
This is
equivalent to \eqref{eq:124}.
\end{proof}

\subsection{Identification of the physical degrees of freedom}
\label{subsec:physdegrees}

To identify the physical fields, one follows the non twisted
case \cite{Chamseddine:2007oz} and separates the real from the
imaginary parts. We thus define two real fields $a_\mu=
\Re c^r_\mu$ and 
$B_\mu=-\frac 2{g_1}\Im c^r_\mu$ ($g_1$ is a real constant and the signs are such to match the
notations of \cite{Connes:2008kx}, see remark \ref{remark:fitSM}), so that
\begin{equation}
\label{eq:defBmu}
c^r_\mu =a_\mu- i\frac{g_1}{2}B_\mu,\quad c_\mu^l= -\bar c^r_\mu = -a_\mu-i\frac{g_1}{2}B_\mu.
\end{equation}
Moreover, we denote $w_\mu$ and $-\frac{g_2}2 W^k$ for $k=1,2,3$ the real components of the quaternionic field $q^r_\mu$ on the basis
$\left\{\I_2,i\sigma_j\right\}$ of the (real) algebra of quaternions
(with $g_2$ another real constant), so that
\begin{equation}
\label{eq:defqmu}
q^r_\mu= w_\mu\,\I_2-i\frac{g_2}2W^k_\mu \sigma_k,\quad q_\mu^l =
  -(q_\mu^r)^\dag = -w_\mu \,\I_2  - i\frac{g_2}2 W^k_\mu \sigma_k.
\end{equation}
Finally, we write $m_\mu^r$ as the sum of a selfadjoint part $g_\mu =
\frac 12 (m_\mu^r + {m_\mu^r}^\dag)$ and an antiselfadjoint part $\frac 12
(m_\mu^r - {m_\mu^r}^\dag)$. We denote $V_\mu^0,
\frac{g_3}2 V_\mu^m$  the real-field components of the latter on the basis 
$\left\{i\I_3,i\lambda_m\right\}$ of the (real) vector space of
antiselfadjoint $3\times 3$ complex matrices (with $\left\{\lambda_m, m=1\ldots
  8\right\}$ the Gell-Mann matrices and $g_3$ a real constant), so that
\begin{align}
\label{eq:defmmur}
&m_\mu^r= g_\mu +iV^0_\mu \,\I_3+ i\frac{g_3}2 {V_\mu^m} \lambda_m,\\
\label{eq:defmmul}
&m_\mu^l=-(m_\mu^r)^\dag =-g_\mu + i{V^0_\mu} \,\I_3+i\frac{g_3}2 {V_\mu^m} \lambda_m.
\end{align}

The cancellation of anomalies is imposed requiring the the unimodularity condition
\begin{equation}
  \label{eq:130}
  \Tr A_\mu=0.
\end{equation}
This yields the same condition as in the non-twisted case.
\begin{proposition}
  The unimodularity condition for a selfadjoint free $1$-form yields
\begin{equation}
  \label{eq:130-2}
V^0_\mu=\frac{g_1}{6}B_\mu.
\end{equation}
\end{proposition}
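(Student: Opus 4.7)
The plan is to exploit the block decomposition $A_\mu = \diag(Q_\mu, M_\mu)_C^D$ and compute the two partial traces separately, using the explicit parametrizations \eqref{eq:defBmu}--\eqref{eq:defmmul} together with the selfadjointness conditions \eqref{eq:124}. Since $\Tr A_\mu = \Tr Q_\mu + \Tr M_\mu$, the unimodularity condition will follow by showing that $\Tr Q_\mu$ vanishes identically under selfadjointness and that $\Tr M_\mu$ produces the announced linear relation between $V^0_\mu$ and $B_\mu$.

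First I would treat the $Q_\mu$ contribution. From \eqref{eq:120}, the factor $\delta_{\dot s I}^{\dot t J}$ contributes $2 \times 4 = 8$ when traced over $\dot s$ and $I$, reducing the computation to $\Tr_{s,\alpha}\diag(Q_\mu^r, Q_\mu^l)$. Using \eqref{eq:90} one gets $\Tr_\alpha Q_\mu^{r} = (c_\mu^r + \bar c_\mu^r) + \Tr q_\mu^r$, and likewise for the $l$-component. Substituting the parametrizations $c_\mu^r = a_\mu - i\frac{g_1}{2}B_\mu$, $c_\mu^l = -a_\mu - i\frac{g_1}{2}B_\mu$, and $q_\mu^{r/l} = \pm w_\mu \I_2 - i\frac{g_2}{2}W^k_\mu\sigma_k$, the contributions of $a_\mu$ cancel between the two chiralities, those of $w_\mu$ do likewise, and the $B_\mu$, $W^k_\mu$ terms contribute $0$ because the Pauli matrices are traceless while the $B_\mu$ terms in $c_\mu^r + \bar c_\mu^r$ already vanish by reality. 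Hence $\Tr Q_\mu = 0$.

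Next I would tackle $M_\mu$, which is the only source of non-trivial content. The factor $\delta_{\dot s}^{\dot t}$ gives a factor $2$, while the trace over $s$ yields $M_\mu^r + M_\mu^l$. Using \eqref{eq:90-2}, the tensor $\otimes \I_2$ produces another factor $2$ in the sub-flavour index, so
\[
\Tr M_\mu \;=\; 8\bigl[(c_\mu^r + c_\mu^l) + (\Tr m_\mu^r + \Tr m_\mu^l)\bigr].
\]
Substituting the selfadjoint parametrizations, $c_\mu^r + c_\mu^l = -i g_1 B_\mu$, while the $g_\mu$ pieces of $m_\mu^{r/l}$ cancel (opposite signs in \eqref{eq:defmmur}--\eqref{eq:defmmul}), and the Gell-Mann part $V_\mu^m\lambda_m$ is traceless, leaving only $\Tr m_\mu^r + \Tr m_\mu^l = 6\, i V_\mu^0$. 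Therefore $\Tr A_\mu = 8i(6 V^0_\mu - g_1 B_\mu)$.

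Finally, imposing $\Tr A_\mu = 0$ yields $6 V^0_\mu = g_1 B_\mu$, that is $V^0_\mu = \frac{g_1}{6}B_\mu$. The only non-routine step is the bookkeeping of the multiplicities produced by the Kronecker deltas and the $\otimes \I_2$ factors in \eqref{eq:90-2}; everything else reduces to the tracelessness of Pauli and Gell-Mann matrices and the sign flips imposed by the selfadjointness conditions \eqref{eq:124}.
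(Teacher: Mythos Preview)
Your proof is correct and follows essentially the same approach as the paper: both split $\Tr A_\mu = \Tr Q_\mu + \Tr M_\mu$, show $\Tr Q_\mu = 0$ via the selfadjointness relations, and extract the $B_\mu$--$V^0_\mu$ relation from $\Tr M_\mu$ using the tracelessness of the Pauli and Gell-Mann matrices. The only cosmetic difference is that you keep track of the multiplicity factor from the $\dot s$ index (obtaining $8$ where the paper writes $4$), which is immaterial since the trace is set to zero.
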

\begin{proof}
From proposition \ref{prop:free1formself}  one gets $\Tr A_\mu = \Tr Q_\mu + \Tr M_\mu$. On the one
  side (neglecting the $\dot s$ and $I$ indices)
  \begin{align}
    \label{eq:131}
     \Tr Q_\mu =     \Tr Q_\mu^r +      \Tr Q_\mu^l = c_\mu^r + \bar
    c_\mu^r +\Tr q_\mu^r +   c_\mu^l + \bar
    c_\mu^l+\Tr q_\mu^l   
  \end{align}
vanishes by \eqref{eq:124}, when one notices that $\Tr q^\dag =\Tr q$
for any quaternion $q$. On the other side
\begin{align}
  \Tr M_\mu =  \Tr M_\mu^r + \Tr M_\mu^l &= 4\Tr {\sf m}_\mu ^r +  4\Tr
                                           {\sf m}_\mu ^l=\nonumber\\
  \label{eq:132}
&=4(c_\mu^r +\Tr m_\mu^r + c_\mu^l + \Tr m_\mu^l)=4(-ig_1B_\mu + 6iV^0_\mu)
\end{align}
where we use $c_\mu^r+c_\mu^l=-ig_1B_\mu$ and $m_\mu^r+
m_\mu^l=2iV_\mu^0\I_3 + 2ig_3V_\mu^m\lambda_m$, remembering then that
the Gell-Mann matrices are traceless. Hence \eqref{eq:130} is equivalent
to \eqref{eq:130-2}.
\end{proof}

Let us summarise the results of this section in the following
\begin{proposition}
\label{prop:freeunimod}
A  unimodular selfadjoint free $1$-form $\slashed A$ is parametrised
by
\begin{itemize}
\item two real $1$-form fields $a_\mu$, $w_\mu$ and  a selfadjoint
  $M_3(\C)$-value field $g_\mu$,
\item  a $\frak u(1)$-value field $iB_\mu$,
  a $\frak{su}(2)$-value field $iW_\mu$ and a $\frak{su}(3)$-value
  field $iV_\mu$.
\end{itemize}
  \end{proposition}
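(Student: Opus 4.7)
The plan is to assemble the three preceding results into a single parametrization statement. Since proposition \ref{prop:freeunimod} is a packaging summary of the section, the proof will be a direct combination of the explicit form of a free $1$-form (proposition \ref{prop:free1formself}), the selfadjointness conditions \eqref{eq:124}, and the unimodularity relation \eqref{eq:130-2}, followed by a choice of basis for each resulting field.

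First I would invoke proposition \ref{prop:free1formself}: an arbitrary free $1$-form writes $\slashed A=-i\gamma^\mu A_\mu$ with $A_\mu=\diag(Q_\mu,M_\mu)$ determined by the six independent fields $c_\mu^r,c_\mu^l\in\C$, $q_\mu^r,q_\mu^l\in\HHH$ and $m_\mu^r,m_\mu^l\in M_3(\C)$. Next, imposing selfadjointness via the corollary following proposition \ref{prop:free1formself} halves these degrees of freedom, since the left-chirality components are then fixed by the right ones through
\begin{equation*}
c_\mu^l=-\bar c_\mu^r,\qquad q_\mu^l=-(q_\mu^r)^\dagger,\qquad m_\mu^l=-(m_\mu^r)^\dagger.
\end{equation*}
This leaves one complex field, one quaternionic field and one $M_3(\C)$-valued field as the independent data.

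Then I would decompose each of these three surviving fields on a basis adapted to the distinction between the selfadjoint and antiselfadjoint (i.e.\ Lie-algebra-valued) parts, exactly as in equations \eqref{eq:defBmu}--\eqref{eq:defmmul}. The complex $c_\mu^r=a_\mu-i\tfrac{g_1}{2}B_\mu$ yields the real field $a_\mu$ and the $\mathfrak u(1)$-valued field $iB_\mu$; the quaternion $q_\mu^r=w_\mu\I_2-i\tfrac{g_2}{2}W_\mu^k\sigma_k$ yields the real field $w_\mu$ (the trace part, which is forced to be real by $q_\mu^l=-(q_\mu^r)^\dagger$) and the $\mathfrak{su}(2)$-valued field $iW_\mu$; and the matrix $m_\mu^r$ splits as $g_\mu + iV_\mu^0\I_3 + i\tfrac{g_3}{2}V_\mu^m\lambda_m$, namely a selfadjoint $M_3(\C)$ part plus an antiselfadjoint part lying in $\mathfrak u(3)=\mathbb R\I_3\oplus \mathfrak{su}(3)$.

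Finally I would invoke the unimodularity proposition, whose conclusion \eqref{eq:130-2} eliminates the trace mode $V_\mu^0$ in favour of $B_\mu$, so that the antiselfadjoint part of $m_\mu^r$ reduces to the $\mathfrak{su}(3)$-valued field $iV_\mu\coloneqq i\tfrac{g_3}{2}V_\mu^m\lambda_m$. Counting what survives gives precisely the two bullets of the statement. There is no real obstacle here: all the substantive work (twist-commutators with Dirac matrices, the selfadjointness computation based on $\Tr\tilde\sigma^\mu\sigma^\nu=2\delta^{\mu\nu}$, and the trace computation leading to unimodularity) is already carried out in the preceding lemmas, so the proof reduces to citing those three results in sequence and collecting the surviving components.
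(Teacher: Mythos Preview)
Your proposal is correct and follows essentially the same route as the paper: both proofs simply collect the explicit form from proposition \ref{prop:free1formself}, impose the selfadjointness relations \eqref{eq:124}, apply the unimodularity condition \eqref{eq:130-2} to eliminate $V_\mu^0$, and then read off the Lie-algebra nature of each surviving field from the chosen bases (Pauli and Gell-Mann matrices). The paper's version is just slightly more terse, writing out the resulting expressions \eqref{summary1}--\eqref{summary3} directly and then identifying the target spaces.
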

\begin{proof}
Collecting the previous results, denoting $W_\mu\coloneqq W_\mu^k\sigma_k$ and $V_\mu\coloneqq 
V_\mu^m\lambda_m$,  one has
\begin{align}
\label{summary1}
c_\mu^r&=a_\mu-i\frac{g_1}{2}B_\mu, &   &c_\mu^l=-a_\mu-i\frac{g_1}{2}B_\mu,\\
\label{summary2}
q_\mu^r&=w_\mu\, \I_2-i\frac{g_2}{2}\, W_\mu ,& &q_\mu^l=-w_\mu \, \I_2-i\frac{g_2}{2}\, W_\mu,\\
m_\mu^r&=g_\mu+i\left(\frac{g_1}{6}B_\mu \,\I_3+\frac{g_3}{2}\, V_\mu\right),& & m_\mu^l=-g_\mu+i\left(\frac{g_1}{6}B_\mu \I_3+\frac{g_3}{2}V_\mu\right).
\label{summary3} 
 \end{align}

On the one side,  $a_\mu, w_\mu$ are in $C^\infty(\M, \mathbb R)$ and $g_\mu={g_\mu}^\dag$ is in  $C^\infty(\M, M_3(\C))$. 
On the other side, since $B_\mu$ is
real, $iB_\mu\in C^\infty(\M, i\mathbb R)$ is a $\frak
u(1)$-value~field. The Pauli matrices span the space of traceless
$2\times 2$ selfadjoint matrices, thus the field $iW_\mu$
takes value in the set of antiselfadjoint such matrices, that is
$\frak{su}(2)$. Finally, the real span of the Gell-Mann matrices is
the space of traceless selfadjoint elements of $M_3(\C)$, hence $iV_\mu$ is a $\frak{su}(3)$-value field.
 \end{proof}

  In the non-twisted case, the primed and unprimed quantities in
  \eqref{eq:121} and the next equation are equal, meaning that the right and left components of the fields
 \eqref{summary1}-\eqref{summary3} are equal, hence
  \begin{equation}
    \label{eq:133}
    a_\mu= w_\mu=g_\mu=0.
 \end{equation}That the twisting produces some extra $1$-form fields has already been pointed out
for manifolds in \cite{Lett.}, and for electrodynamic in
\cite{Martinetti:2019aa}. Actually, such a field (improperly called
vector field) appeared initially in the twisted version of the Standard
Model presented in \cite{buckley}, but its precise structure --
a collection of three selfadjoint fields $a_\mu, w_\mu, g_\mu$,  each associated with a gauge field of
the Standard Model -- had not
been worked out there. 

In the minimal twist of electrodynamics, there is only
one such field (associated with the $U(1)$ gauge symmetry). By studying
the fermionic action, it gets interpreted as
energy-momentum $4$-vector in lorentzian signature. Whether such an
interpretation still holds for $a_\mu, w_\mu, g_\mu$ will be
investigated in a forthcoming paper \cite{Manuel-Filaci:2020aa}.
\begin{remark}
\label{remark:fitSM}
In the non-twisted case, the fields $B_\mu, W_\mu$ and $V_\mu$ coincide with those of the spectral triple of the
Standard Model. More precisely, within the conditions of
\eqref{eq:133}, then 
\begin{itemize}
\item our $c_\mu^r = c_\mu^l$ coincides with $-i\Lambda_\mu$ of
  \cite[\S 15.4]{Connes:2008kx}{\footnote{Beware that $\ds_M$ in the
    formula of $\Lambda$ is $i\gamma^\mu\partial_\mu$ \cite[1.580]{Connes:2008kx}, so that
    $\Lambda=\Lambda_\mu\gamma^\mu$ is the $U(1)$ part of $-\slashed
    A$, meaning that $\Lambda_\mu$ is the $U(1)$ part of $iA_\mu$.}} The selfadjointness condition
  \eqref{eq:124} then implies that $\Lambda_\mu$ is real, in agreement
  with \cite{Connes:2008kx}. Then $B_\mu =\frac 2{g_1}\Lambda_\mu$ as
  defined in \cite[1.729]{Connes:2008kx} coincides with our
$B_\mu=-i\frac 2{g_1}c_\mu^r=-i\frac 2{g_1}c_\mu^l$ as defined in
\eqref{eq:defBmu}.

\item our $q_\mu^r = q_\mu^l$ coincides with $-iQ_\mu$ of
  \cite[\S 15.4]{Connes:2008kx}. The selfadjointness condition
  \eqref{eq:124} then implies that $Q_\mu$ is selfadjoint,  in agreement
  with \cite{Connes:2008kx}. Then $W_\mu =\frac 2{g_2}Q_\mu$ as
  defined in \cite[1.739]{Connes:2008kx} coincides with our
$W_\mu= W_\mu^k\sigma_k=i\frac 2{g_2}q_\mu^r=i\frac 2{g_2}q_\mu^l$ in
\eqref{eq:defqmu}.

\item the identification of our $V_\mu$ with the one of the
  non-twisted case is made after proposition \ref{prop:freetwisfluc}.
\end{itemize}

\end{remark}

\begin{remark}
\label{rem:ident-phys-degr}
If one does not impose the selfadjointness of $\slashed A$, then one
  obtains two copies of the bosonic contents of the Standard Model,
  acting independently on the right and left components of Dirac
  spinors. Whether this may yield physically meaningful models should
  be investigated elsewhere (considering to remove also
  the selfadjointness of the finite part of the fluctuation).
\end{remark}

\subsection{Twisted  fluctuation of the free Dirac operator}
\label{subsec:twistfreefluc}

We now compute the free part \eqref{eq:115} of the twisted
fluctuation. 

\begin{proposition}
\label{prop:freetwisfluc}
 A twisted fluctuation of the free Dirac operator $\slashed D$ is $D_Z=\slashed D + Z$
where 
\begin{equation}
\label{eq:defZ}
Z=\slashed{A}+ J\slashed{A} J^{-1}=-i\gamma^\mu
\begin{pmatrix}
Z^\mu&0\\
0&\overline{Z^\mu}
\end{pmatrix}_C^D
\text{ with }
\quad   Z_\mu =\gamma^5\otimes X_\mu + \I_4\otimes iY_\mu,
\end{equation}
in which $X_\mu$ and $Y_\mu$ are selfadjoint
$\A_{\text{SM}}$-value tensor fields on $\M$ with components
\begin{equation}
(X_\mu)_{\dot 1 I}^{\dot 2 J}= (X_\mu)_{\dot 2 I}^{\dot 1
  J}=(Y_\mu)_{\dot 1 I}^{\dot 2 J}= (Y_\mu)_{\dot 2 I}^{\dot 1J}=0,
\end{equation}
and 
\begin{align}
  \label{eq:120ZXY}
 &(X_\mu)_{\dot 1I}^{\dot 1J}= (X_\mu)_{\dot 2I}^{\dot 2J}=
\begin{pmatrix}
2a_\mu & \\& a_\mu\I_3 + g_\mu\end{pmatrix} _I^J,\\
 &(Y_\mu)_{\dot 1I}^{\dot 1J} = 
  \begin{pmatrix}
    0& \\ &-\frac{2g_1 }3B_\mu\I_3 - \frac{g_3}2 V_\mu
  \end{pmatrix},\quad 
 (Y_\mu)_{\dot 2I}^{\dot 2J} = 
  \begin{pmatrix}
    g_1 B_\mu& \\ &\frac{g_1}3 B_\mu\I_3 - \frac{g_3}2 V_\mu
  \end{pmatrix}
\\
& (X_\mu)_{aI}^{bJ}= 
\begin{pmatrix}
\delta_a^b \left(w_\mu -a_\mu\right) &  \\&\delta_a^b w_\mu\I_3 -
g_\mu  \end{pmatrix} _I^J,\\
& (Y_\mu)_{aI}^{bJ}= 
\begin{pmatrix}
\delta_a^b\frac{g_1}2B_\mu -\frac{g_2}2(W_\mu)_a^b& \\
&-\delta_a^b\left(\frac{g_1}6B_\mu\I_3 +\frac{g_3}2V_\mu \right)-\frac{g_2}2(W_\mu)_a^b \I_3\end{pmatrix} _I^J.
\end{align}
\end{proposition}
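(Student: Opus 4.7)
The plan is to combine the explicit form of the free $1$-form $\slashed A$ from proposition \ref{prop:free1formself} with its real-structure conjugate $J\slashed A J^{-1}$, then extract the chirality decomposition announced in the statement. Since both the grading $\gamma^5$ and the identity $\I_4$ act on the spinor space $L^2(\M,S)$ and commute with the $\gamma^\mu$'s only up to sign, the key structural observation is that $Q_\mu$ and $M_\mu$ are diagonal in the chirality index $s$, so the decomposition $Z^\mu = \gamma^5\otimes X_\mu + \I_4 \otimes iY_\mu$ is nothing but the symmetric/antisymmetric split in $s=r,l$.

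First, I would compute $J\slashed A J^{-1}$. Writing $J=\begin{pmatrix}0 & \J\\ \J & 0\end{pmatrix}_C^D$ as in \eqref{eq:108} and $\slashed A = -i\gamma^\mu A_\mu$ with $A_\mu$ block-diagonal in $C$, the same manipulation as in proposition \ref{prop:algopp} swaps the $C$-blocks and inserts $\J$ on the Dirac matrices and on $Q_\mu, M_\mu$. Since $Q_\mu$ and $M_\mu$ are trivial in the $\dot s$ indices, the identity $\J_r\bar\J_r = \J_l\bar\J_l = -\I_2$ used in \eqref{eq:JQJ}-\eqref{eq:JMJ} transfers verbatim, producing
\begin{equation*}
J\slashed A J^{-1} = -i\gamma^\mu
\begin{pmatrix}\bar M_\mu & 0 \\ 0 & \bar Q_\mu\end{pmatrix}_C^D,
\end{equation*}
after the $J$-antilinearity flips the sign on the $-i$ factor and the $\J\gamma^\mu\J^{-1}$ sign combines with the overall block swap. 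Adding $\slashed A$ then identifies $Z^\mu = Q_\mu + \bar M_\mu$ on the particle block, with the antiparticle block being its complex conjugate, as claimed in \eqref{eq:defZ}.

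Next, to identify $X_\mu$ and $Y_\mu$, I would exploit the fact that both $Q_\mu$ and $M_\mu$ are diagonal in the chirality index $s$, writing $Z^\mu = \delta_{\dot s}^{\dot t}\,\mathrm{diag}(Z^\mu_r,Z^\mu_l)_s^t$. Since $\gamma^5 = \eta_s^t\delta_{\dot s}^{\dot t}$ and $\I_4 = \delta_s^t\delta_{\dot s}^{\dot t}$, the ansatz $Z^\mu = \gamma^5\otimes X_\mu + \I_4 \otimes iY_\mu$ forces
\begin{equation*}
X_\mu = \tfrac{1}{2}(Z^\mu_r - Z^\mu_l),\qquad iY_\mu = \tfrac{1}{2}(Z^\mu_r + Z^\mu_l),
\end{equation*}
with $Z^\mu_{r/l} = Q^{r/l}_\mu + \bar M^{r/l}_\mu$. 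The vanishing of $(X_\mu)_{\dot 1 I}^{\dot 2 J}$, $(X_\mu)_{\dot 2 I}^{\dot 1 J}$ and the analogous entries of $Y_\mu$ is immediate from the explicit block-diagonal form \eqref{eq:90}-\eqref{eq:90-2} of $Q_\mu^{r/l}$ and $M_\mu^{r/l}$ in the flavour index $\alpha$.

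Finally, I would enumerate the remaining entries case by case in $(\alpha, I)$, inserting the unimodular selfadjoint parametrization (\ref{summary1})-(\ref{summary3}) from proposition \ref{prop:freeunimod}. For instance, at $\alpha=\dot 1$, $I=J=0$ (right lepton) one obtains $Z^\mu_r - Z^\mu_l = (c^r_\mu-c^l_\mu)+(\bar c^r_\mu -\bar c^l_\mu) = 4a_\mu$ and $Z^\mu_r + Z^\mu_l = 0$ by direct substitution; at $\alpha=\dot 1$, $I=J=i$ (right up-quark) the $\bar M$-contributions produce both $g_\mu$ (from the selfadjoint part of $m_\mu^r - m_\mu^l$) and $-\frac{2g_1}{3}B_\mu\I_3 - \frac{g_3}{2}V_\mu$ (from the antiselfadjoint part of $m_\mu^r + m_\mu^l$ together with the $c$-contributions). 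The analogous computations for $\alpha=\dot 2$ (right down-type) and $\alpha=a$ (left doublet, where $q^{r/l}_\mu$ replaces the $c$-fields) give the remaining entries. The main obstacle is the bookkeeping: one must carefully track five nested indices ($C, s, \dot s, I, \alpha$) and handle the $J$-action on the Hermitian matrix field $g_\mu$ and the antihermitian $V_\mu$ with the right conjugate-transpose convention so that the symmetric combinations $m_\mu^r \pm m_\mu^l$ cleanly reproduce $g_\mu$ on the $X_\mu$ side and $V^0_\mu, V_\mu$ on the $Y_\mu$ side after imposing \eqref{eq:130-2}.
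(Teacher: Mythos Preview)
Your proposal is correct and follows essentially the same route as the paper: compute $J\slashed A J^{-1}$ via antilinearity and the anticommutation $\{\J,\gamma^\mu\}=0$, identify $Z_\mu = Q_\mu + \bar M_\mu$ on the particle block, then split into the $X_\mu$ and $Y_\mu$ pieces using the explicit parametrizations (\ref{summary1})--(\ref{summary3}). The only organizational difference is that the paper computes $Z_\mu^r$ alone, splits it into its Hermitian and anti-Hermitian parts $X_\mu^r + iY_\mu^r$, and then invokes the selfadjointness condition $Z_\mu^l = -(Z_\mu^r)^\dag$ to read off $X_\mu^l=-X_\mu^r$, $Y_\mu^l=Y_\mu^r$; your chirality half-sum/half-difference is the same thing, since $\tfrac12(Z_\mu^r - Z_\mu^l) = \tfrac12(Z_\mu^r + (Z_\mu^r)^\dag)$ under that condition.
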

\begin{proof}
With $J=-J^{-1}$ as defined in \eqref{eq:108} one has
\begin{align*}
J\slashed{A}J^{-1}&=-J(-i\gamma^\mu A_\mu) J^{-1} =-iJ\gamma^\mu A_\mu
                    J^{-1} =i\gamma^\mu J A_\mu
                    J^{-1} =
&=i\gamma^\mu\left(\begin{matrix}
{\cal J} M_\mu \J^{-1} & 0\\ 0& {\cal J} Q_\mu \J^{-1}
\end{matrix}\right)_C^D
\end{align*}
where we use that $J$ is anti-linear and anticommutes with
$\gamma^\mu$ (lemma \ref{lemma:realtwist}). Noticing
that  ${\cal J} M_\mu \J^{-1}= -\bar M_\mu$ and  ${\cal J} Q_\mu
\J^{-1}= -\bar Q_\mu$ (this is shown as in \eqref{eq:JMJ},
\eqref{eq:JQJ}), one obtains
\begin{equation}
  Z_\mu = Q_\mu + \bar M_\mu.
\end{equation}
Explicitly, $Z_\mu=
\begin{pmatrix}
  Z_\mu^r& \\ & Z_\mu^l
\end{pmatrix}$ where,  using the explicit forms \eqref{eq:90}
and \eqref{eq:90-2} of $Q_\mu^r$ and $M_\mu^r$,
\begin{align}
\label{eq:38}
  Z_\mu^r = \delta_{\dot s I}^{\dot t J}Q_\mu^r + \delta_{\dot
  s}^{\dot t}\,\overline{M_\mu^r}=
 \delta_{\dot s}^{\dot t} \begin{pmatrix}
  {\sf c}_\mu^r  \delta_I^J +\delta_{\dot a}^{\dot b}\, \overline{{\sf m}_\mu^r}\\
&    q_\mu^r\delta_I^J +\delta_a^b\overline{{\sf m}_\mu^l}
  \end{pmatrix}_\alpha^\beta
\end{align}
and 
\begin{equation}
\label{eq:30}
   Z_\mu^l = \delta_{\dot s I}^{\dot t J}Q_\mu^l + \delta_{\dot
  s}^{\dot t}\,\overline{M_\mu^l}.
\end{equation}

The
components of the matrix in the r.h.s. of \eqref{eq:38}  are
\begin{equation}
(Z_\mu^r)_{\dot a I}^{\dot b J}={\sf c}_\mu^r\delta_I^J+\delta_{\dot a}^{\dot b} \overline{{\sf m}_\mu^r}=
    \begin{pmatrix}
c_\mu^r \delta_I^J+ \overline{{\sf m}_\mu^r} & \\
&\overline{c_\mu^r} \delta_I^J+ \overline{{\sf m}_\mu^r} 
    \end{pmatrix} _{\dot a}^{\dot b}\end{equation}
with $(Z_\mu^r)_{\dot 1 I}^{\dot 2 J}= (Z_\mu^r)_{\dot 2 I}^{\dot 1 J}=0$
and, using Proposition \ref{prop:freeunimod}, 
\begin{align}
\nonumber
(Z_\mu^r)_{\dot 1 I}^{\dot 1J} =c_\mu^r \delta_I^J+ \overline{{\sf m}_\mu^r} &= \begin{pmatrix}
2a_\mu & \\
& (a_\mu - i\frac{g_1 }2B_\mu)\I_3 + g_\mu -i\left(\frac{g_1}6B_\mu\I_3 +
\frac{g_3}2 V_\mu\right)  \end{pmatrix} _I^J=: (X_\mu^r)_{\dot 1
  I}^{\dot 1 J} + i  (Y_\mu^r)_{\dot 1 I}^{\dot 1 J}\\
(Z_\mu^r)_{\dot 2I}^{\dot 2 J} =\overline{c_\mu^r} \delta_I^J+ \overline{{\sf m}_\mu^r} &= \begin{pmatrix}
2a_\mu + ig_1B_\mu& \\
\nonumber
& (a_\mu + i\frac{g_1 }2B_\mu)\I_3 + g_\mu -i\left(\frac{g_1}6B_\mu\I_3 +
\frac{g_3}2 
V_\mu\right)  \end{pmatrix} _I^J =:(X_\mu^r)_{\dot 2 I}^{\dot 2 J} + i
  (Y_\mu^r)_{\dot 2 I}^{\dot 2 J};
\end{align}
and 
\begin{equation}
(Z_\mu^r)_{a I}^{ b J}=q_\mu^r\,\delta_I^J+\delta_{a}^{b} \overline{{\sf m}_\mu^l}=
    \begin{pmatrix}
(q_\mu^r)_1^1\, \delta_I^J+ \overline{{\sf m}_\mu^l} & (q_\mu^r)_1^2\,\delta_I^J\\[5pt]
(q_\mu^r)_2^1\,\delta_I^J&\overline{q_\mu^r}_2^2\,
 \delta_I^J+ \overline{{\sf m}_\mu^l} 
    \end{pmatrix} _{a}^{b}\end{equation}
with
\begin{align*}
\nonumber
(Z_\mu^r)_{aI}^{aJ} &=(q_\mu^r)_a^a\, \delta_I^J+ \overline{{\sf m}_\mu^l}=\nonumber\\&= \begin{pmatrix}
w_\mu -i\frac{g_2}2 
(W_\mu)_a^a -a_\mu + i\frac{g_1B_\mu}2& \\
\nonumber&\left(w_\mu -i\frac{g_2}2 
(W_\mu)_a^a\right)\I_3 - g_\mu - i\left(\frac{g_1B_\mu}6\I_3 +\frac{g_3}2
V_\mu\right)\end{pmatrix} _I^J,\\&=: (X_\mu^r)_{aI}^{aJ} + i  (Y_\mu^r)_{aI}^{aJ},\\
\nonumber
(Z_\mu^r)_{a}^{b\neq a} &=(q_\mu^r)_a^b\, \delta_I^J= \begin{pmatrix}
 -i\frac{g_2}2 
(W_\mu)_a^b& & \\
& -i\frac{g_2}2 
(W_\mu)_a^b\I_3 \end{pmatrix} _I^J = (X_\mu^r)_{aI}^{bJ} + i
  (Y_\mu^r)_{aI}^{bJ}.\end{align*}

The matrices $X_\mu^r$ and $Y_\mu^r$ defined by the equations above
are selfadjoint (notice that $W_\mu$ as defined in Prop. \ref{prop:freeunimod} is
selfadjoint) and such that
\begin{equation}
Z_\mu^r = X_\mu^r + iY_\mu^r.
\end{equation}
 The
selfadjointness condition \ref{eq:selfadjoint} applied to
\eqref{eq:30} yields
\begin{equation}
  Z_\mu^l = - (Z_\mu^r)^\dag= -X_\mu^r
  + i Y_\mu^r.
\end{equation} 
In other terms, $Z_\mu^l = X_\mu^l + i Y_\mu^l$ with 
\begin{equation}
  X_\mu^l= -X_\mu^r, \quad   Y_\mu^l= Y_\mu^r.
\end{equation}
Redefining $X_\mu\coloneqq X_\mu^r = -X_\mu^l$, $Y_\mu\coloneqq Y_\mu^r =Y_\mu^l$,
one obtains the result.\end{proof}

We collect the components of $Z$ in appendix \ref{app:gauge}. There,
we also make explicit that $iY_\mu$ coincides exactly with the gauge
fields of the Standard Model (including the ${\frak su}(3)$ gauge field $V_\mu$). Thus the twist does not modify the gauge
content of the model. What it does is to add the selfadjoint part
$X_\mu$ whose action on spinors breaks chirality. As shown in the next
section, this field is invariant under a gauge transformation. 

\section{Gauge Transformations}
\label{sec:gaugetransform}
\setcounter{equation}{0} 

A gauge transformation is implemented
  by an action of the group ${\cal U}(\A)$ of unitary elements of
  $\A$, both on the Hilbert space and on the Dirac operator.
On a twisted spectral triple, these actions
have been worked out in \cite{Devastato:2018aa,Landi:2017aa} and
 consist in a twist of the original formula of
 Connes~\cite{Connes:1996fu}, later generalised without the first
 order condition in \cite{Chamseddine:2013fk}. Explicitly,  on the
  Hilbert space, the fermion fields transform under the adjoint action
  of ${\cal U}(\A)$ induced by the real structure, namely
\begin{equation}
\label{eq:gaugefermion}
\psi\rightarrow\Ad u \ \psi\coloneqq u\psi u=uu^\circ\psi=uJu^*
J^{-1}\psi, \quad u\in\cal U.
\end{equation}
On the other hand, the twisted-covariant Dirac
operator $D_{A}$ \eqref{eq:12bis} transforms under the twisted conjugate action
of $\Ad{u}$,
\begin{equation}
\label{eq:gaugedirac}
D_{A}\rightarrow\Ad\rho(u) \, D_{A}\Ad u^*.
\end{equation}
By  \cite[Prop.4.2]{Martinetti:2021aa}, the
  operator $D_A$, viewed as a function of the
  components $a_i, b_i$ of the twisted $1$-form  $A= A_{(1)}=
  \sum_ia_i[D, b_i]$, transforms under a gauge transformation in the operator
  $D_{A^u}$, where
  \begin{equation}
\label{eq:25}
A^u:= \rho(u) \left[D, u^* \right]_\rho +    \rho(u)
   A u^*.
 \end{equation} 
This is the twisted version of the law of transformation of generalised $1$-forms in ordinary spectral triples,
which in turn is a non-commutative generalisation of the law of
transformation of the gauge potential in ordinary gauge theories.\medskip

To write down the transformation $A\to A^u$, we need the explicit
form of  a unitary  $u$ of $\A$. The latter is a pair of  functions on
$\M$ with value in 
\begin{equation}
{\cal U}(\C)\times {\cal U}(\HHH)\times {\cal U}(M_3(\C))\simeq U(1)\times
SU(2)\times U(3).
\label{eq:24}
\end{equation}
Namely
\begin{equation}
\label{eq:24bis}
u=(e^{i\alpha},  e^{i\alpha'}, {\boldsymbol q}, {\boldsymbol q'}, \boldsymbol m, \boldsymbol m')
\end{equation}
with
\begin{equation}
  \label{eq:23}
  \alpha, \alpha'\in C^\infty(\M, \mathbb R), \quad {\boldsymbol q}, {\boldsymbol q}'\in
  C^\infty(\M, SU(2)),\quad {\boldsymbol m}, {\boldsymbol m}'\in C^\infty(\M, U(3)).  
\end{equation}
It acts on $\HH$ as
\begin{equation}
\label{eq:24ter}
u=\left(\begin{matrix}
\mathfrak {A}&\\&\mathfrak{B}
\end{matrix}\right)_C^D
\end{equation}
where, following  \eqref{eq:QMexplicit}-\eqref{eq:35}, one has $\frak
A_{\dot s s I\alpha}^{\dot t t J\beta}=  \delta_{\dot s I}^{\dot t J}
\frak A_{s\alpha}^{t\beta}$ and $\frak
B_{\dot s s I\alpha}^{\dot t t J\beta}=  \delta_{\dot s}^{\dot t}
\frak B_{s\alpha I}^{t\beta J}$ with 
\begin{equation}
\label{eq:defQMunit}
 \frak A_{s\alpha}^{t\beta}=\left(\begin{matrix}
(\frak A_r)_\alpha^\beta&\\ &(\frak A_l)_\alpha^\beta
\end{matrix}\right)_s^t ,
\qquad \frak B_{s\alpha I}^{t\beta J}=\left(\begin{matrix}
(\frak B_r)_{\alpha I}^{\beta J}&\\ & (\frak B_l)_{\alpha I}^{\beta J}
\end{matrix}\right)_s^t, 
\end {equation}
in which 
\begin{align}
\label{eq:defQunit}
\frak A_{r}=\left(\begin{matrix}
{\boldsymbol \alpha}&\\ & \boldsymbol q'
\end{matrix}\right)_\alpha^\beta ,\quad \frak A_l=\left(\begin{matrix}
{\boldsymbol \alpha'}&\\  &\boldsymbol q
\end{matrix}\right)_\alpha^\beta,
\end{align}
and
\begin{align}
\label{eq:defMunit}
\frak B_{r}=\begin{pmatrix}
    \sf m\otimes \I_2 & 0 \\
    0 &\sf m'\otimes \I_2
  \end{pmatrix}_\alpha^\beta,\; 
\frak B_{l}=\begin{pmatrix}
    \sf m' \otimes \I_2 & 0 \\
    0 &\sf m\otimes \I_2
  \end{pmatrix}_\alpha^\beta,
\end{align}
where we denote
\begin{equation}
  \label{eq:35unit}
  {\boldsymbol \alpha}\coloneqq 
  \begin{pmatrix}
    e^{i\alpha} & \\ &  e^{-i\alpha}
  \end{pmatrix},\; {\sf m}\coloneqq 
  \begin{pmatrix}
   e^{i\alpha} & \\ &\boldsymbol m
  \end{pmatrix}_I^J,\; {\boldsymbol \alpha'}\coloneqq 
  \begin{pmatrix}
     e^{i\alpha'} & \\ &  e^{-i\alpha'} 
  \end{pmatrix},\; {\sf m'}\coloneqq 
  \begin{pmatrix}
    e^{i\alpha'}& \\ &\boldsymbol m'
  \end{pmatrix}_I^J.
\end{equation}

\subsection{Gauge sector}
\label{subsec:gaugegauge}

A twisted gauge
transformation \eqref{eq:gaugedirac} does not necessarily preserve the selfadjointness of
the Dirac operator (because the action of the unitary is twisted on
the left, not on the right). Equivalently, $A^u$ in \eqref{eq:25} is not necessarily selfadjoint, 
even though one starts with a selfadjoint~$A$.

 This may seem as a weakness of the twisted case, since in
the non-twisted case selfadjointness is  preserved.
Actually the possibility to lose selfadjointness allows to implement
Lorentz symmetry and yields -- at
least for electrodynamics \cite{Martinetti:2019aa} -- an interesting interpretation of the
component $X_\mu$ of the free fluctuation $Z$ of proposition
\ref{prop:freetwisfluc} as a four-vector
energy-impulsion.

However, regarding the gauge part of the Standard Model which -- as
shown below -- is fully encoded in the component $iY_\mu$ of $Z$, it is
rather natural to ask the selfadjointness of the free $1$-form $\slashed A$ to be preserved. This reduces
the choice of unitaries to pair of elements of \eqref{eq:24} equal up to a constant.

\begin{proposition}
\label{prop:uselfadjoint}

A unitary $u$ whose action \eqref{eq:25} preserves the selfadjointness
of any unimodular selfadjoint free $1$-form $\slashed A$ is given by
\eqref{eq:24bis} with
\begin{equation}
\label{eq:uselfadjoint}
  \alpha' = \alpha + K,\quad {\boldsymbol q} = {\boldsymbol q}',\quad {\boldsymbol m}' = {\boldsymbol m}.
\end{equation}
The components \eqref{eq:95} of $\slashed A$ then transform as
\begin{align}
\label{gaugetransf1}
  &  c_\mu^r \longrightarrow   c_\mu^r -i\partial_\mu \alpha , &
  &c_\mu^l \longrightarrow  c_\mu^l -i\partial_\mu \alpha,\\
\label{gaugetransf2}
&q_\mu^r \longrightarrow  {\boldsymbol q}\, q_\mu^r\,  {\boldsymbol q}^\dag  + {\boldsymbol q}
  \left(\partial_\mu {\boldsymbol  q}^\dag\right),
& &q_\mu^l \longrightarrow  {\boldsymbol q}\, q_\mu^l\, {\boldsymbol q}^\dag  +
    {\boldsymbol q} \left(\partial_\mu {\boldsymbol q}^\dag\right),\\
\label{gaugetransf3}
&m_\mu^r \longrightarrow  {\boldsymbol m}\, m_\mu^r {\boldsymbol m}^\dag  + {\boldsymbol m}
  \left(\partial_\mu {\boldsymbol m}^\dag\right),& &m_\mu^l \longrightarrow
                                             {\boldsymbol m} \, m_\mu^l {\boldsymbol m}^\dag  + {\boldsymbol m}
                                             \left(\partial_\mu {\boldsymbol m}^\dag\right).
\end{align}
 \end{proposition}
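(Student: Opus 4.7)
The strategy is to compute $A_\rho^u$ directly from its definition~\eqref{eq:arhou} using the twist-commutation identities, reduce it to the standard gauge law, expand blockwise, and finally impose the selfadjointness conditions~\eqref{eq:124} on the transformed fields. For the first step I would invoke Corollary~\ref{cor:twistcommut}: since $u^*\in\A$ twist-commutes with $\gamma^\mu$ and commutes with the spin connection, one has $[\ds, u^*]_\rho = -i\gamma^\mu\partial_\mu u^*$; multiplying by $\rho(u)$ on the left and using $\rho(u)\gamma^\mu = \gamma^\mu u$ once more gives
\begin{equation*}
A_\rho^u \;=\; -i\gamma^\mu\bigl(u A_\mu u^* + u\partial_\mu u^*\bigr),
\end{equation*}
so that the tensorial component $A_\mu$ transforms by the familiar gauge rule $A_\mu \mapsto u A_\mu u^* + u\partial_\mu u^*$, exactly as in the non-twisted case.

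Next I would unpack this rule along the various block structures. From the block-diagonal decomposition of $u$ and $A_\mu$ in the particle/antiparticle indices $C, D$, one obtains independent equations for $Q_\mu$ and $M_\mu$; refining further along the chiral $s,t$ and flavour/lepto-colour indices using \eqref{eq:defQunit}--\eqref{eq:defMunit} and the explicit forms~\eqref{eq:120}--\eqref{eq:90-2} yields six scalar transformation laws, one for each of $c_\mu^{r/l}, q_\mu^{r/l}, m_\mu^{r/l}$. Because of the particular assignment of primed and unprimed parameters in the definitions of $\frak A_{r/l}, \frak B_{r/l}$, one finds that $c_\mu^r$ is acted upon by $\alpha$, $q_\mu^r$ by ${\sf q}'$ and $m_\mu^r$ by ${\sf m}$, whereas $c_\mu^l, q_\mu^l, m_\mu^l$ are acted upon respectively by $\alpha', {\sf q}, {\sf m}'$. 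The inhomogeneous pieces are always of the form (unitary)\,$\partial_\mu$\,(conjugate unitary).

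Finally I would impose the selfadjointness conditions~\eqref{eq:124} on the transformed fields and compare with the formulas just derived. In the $c$-sector, using the reality of $\alpha,\alpha'$ and the original relation $c_\mu^l = -\overline{c_\mu^r}$, the constraint collapses to $\partial_\mu(\alpha-\alpha')=0$, whence $\alpha' = \alpha + K$ with $K\in\mathbb R$ constant. In the $q$-sector, after using the identity $(\partial_\mu{\sf q}')({\sf q}')^\dag = -{\sf q}'\partial_\mu({\sf q}')^\dag$ to absorb the inhomogeneous pieces, one is left with $({\sf q}')^\dag{\sf q}\,q_\mu^l = q_\mu^l\,({\sf q}')^\dag{\sf q}$ for every $q_\mu^l\in\HHH$ arising from a selfadjoint free $1$-form; a Schur-type argument then places $({\sf q}')^\dag{\sf q}$ in the center $\{\pm\I\}$ of $SU(2)$, fixing ${\sf q}' = {\sf q}$ by continuity and convention. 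An entirely analogous computation in the $M_3(\C)$-sector gives $({\sf m}')^\dag{\sf m}\in U(1)\,\I_3$, the residual phase being normalised to ${\sf m}' = {\sf m}$. The main technical obstacle is justifying that $q_\mu^l$ and $m_\mu^l$ fill dense subsets of $\HHH$ and $M_3(\C)$ at each point of $\M$ as $\slashed A$ varies; this is however transparent from~\eqref{eq:121}--\eqref{eq:121-bis}, where the smooth functions entering $q_\mu^l = q'\partial_\mu p$ and $m_\mu^l = m\partial_\mu n'$ are independent, and the unimodularity condition~\eqref{eq:130-2} only constrains the abelian trace $V_\mu^0$. Substituting $\alpha' = \alpha + K$, ${\sf q}' = {\sf q}$, ${\sf m}' = {\sf m}$ back into the six transformation laws reproduces exactly~\eqref{gaugetransf1}--\eqref{gaugetransf3}.
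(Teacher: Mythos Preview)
Your proposal is correct and follows essentially the same route as the paper: compute $\slashed A^u$ via Corollary~\ref{cor:twistcommut} to obtain $A_\mu\mapsto uA_\mu u^*+u\partial_\mu u^*$, read off the six component laws \eqref{pregaugetransf1}--\eqref{pregaugetransf3}, and then impose the selfadjointness constraints \eqref{eq:124} on the transformed fields. The only real difference is in how the constraint on ${\sf q},{\sf q}'$ (and ${\sf m},{\sf m}'$) is extracted: the paper first specialises $q_\mu^l$ to the identity to separate off the inhomogeneous pieces ${\sf q}(\partial_\mu{\sf q}^\dag)={\sf q}'(\partial_\mu{\sf q}'^\dag)$, and only then uses a general $q_\mu^l$ to force ${\sf q}'^\dag{\sf q}$ into the centre, whereas you invoke density of the $q_\mu^l$ directly (effectively subtracting two instances to cancel the inhomogeneous terms). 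Your phrasing ``absorb the inhomogeneous pieces'' is a bit elliptical here --- the identity $(\partial_\mu{\sf q}')({\sf q}')^\dag=-{\sf q}'\partial_\mu({\sf q}')^\dag$ by itself does not equate the two inhomogeneous terms, so you should make explicit that this cancellation comes from comparing the constraint at two different values of $q_\mu^l$. Your treatment of the centre ambiguity (${\sf q}'^\dag{\sf q}\in\{\pm\I\}$ in $SU(2)$, ${\sf m}'^\dag{\sf m}\in U(1)\I_3$) is in fact more careful than the paper's, which simply asserts the identity.
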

\begin{proof}
From corollary \ref{cor:twistcommut} one
has (with the same abuse of notations \eqref{eq:117}, now with $\I_{32}$)
\begin{equation}
  \label{eq:6}
  \slashed A^u = \rho(u)\left( [\slashed D, u^*]_\rho + \slashed A u^* \right)=
  -i\gamma^\mu\left(u\left(\partial_\mu u^*\right) + u A_\mu u^*\right).
\end{equation}
Using the explicit forms \eqref{eq:24ter} of $u$ and \eqref{eq:95} of $A_\mu$, one finds
\begin{align}
  \slashed A^u = -i\gamma^\mu
  \begin{pmatrix}
  \frak A \left(\partial_\mu \frak A^\dag\right) + \frak A Q_\mu \frak
    A^\dag & 0 \\0&  \frak B \left(\partial_\mu \frak B^\dag \right) + \frak B M_\mu \
    \frak B^\dag
  \end{pmatrix}_C^D
\end{align}
meaning that a gauge transformation is equivalent to the
transformation
\begin{align}
  Q_\mu \longrightarrow \frak A \left(\partial_\mu \frak A^\dag \right)+ \frak A Q_\mu \frak
    A^\dag , \quad M_\mu \longrightarrow \frak B\left(\partial_\mu \frak B^\dag\right) + \frak B M_\mu \frak
    B^\dag.
\end{align}
From \eqref{eq:90} and \eqref{eq:90-2}, these equations are equivalent to 
\begin{align}
\label{pregaugetransf1}
&  c_\mu^r \longrightarrow e^{i\alpha} \partial_\mu e^{-i\alpha} + c_\mu^r =
  c_\mu^r -i\partial_\mu \alpha , & &c_\mu^l \longrightarrow  c_\mu^l
                                      -i\partial_\mu \alpha',\\
\label{pregaugetransf2}
&q_\mu^r \longrightarrow  {\boldsymbol q'}\, q_\mu^r\,  {\boldsymbol  q'}^\dag  + {\boldsymbol
  q'} \left(\partial_\mu {\boldsymbol q'}^\dag\right),
& &q_\mu^l \longrightarrow  {\boldsymbol q}\, q_\mu^l\, {\boldsymbol q}^\dag  +
    {\boldsymbol q} \left(\partial_\mu {\boldsymbol q}^\dag\right),\\
\label{pregaugetransf3}
&m_\mu^r \longrightarrow  {\boldsymbol m}\, m_\mu^r {\boldsymbol m}^\dag  + {\boldsymbol m}
  \left(\partial_\mu {\boldsymbol m}^\dag\right),& &m_\mu^l \longrightarrow
                                             {\boldsymbol m'} m_\mu^l {\boldsymbol m'}^\dag  + {\boldsymbol m'} \left(\partial_\mu {\boldsymbol m'}^\dag\right).
\end{align} 
For any unitary operator $\frak q$, one has that $\frak q \left(\partial_\mu {\frak q}^\dag\right)  =\frak
q [\partial_\mu, {\frak q}^\dag]$ is anti-hermitian (being
$\partial_\mu$ anti-hermitian as well). Hence, beginning with a
selfadjoint $\slashed A$ as in \eqref{eq:124},  requiring that $\slashed
A^u$ be selfadjoint is equivalent to
\begin{align}
  &\partial_\mu \alpha' =  \partial_\mu \alpha,\\
& {\boldsymbol q} \,  q_\mu^l {\boldsymbol q}^\dag + {\boldsymbol q}\left(\partial_\mu {\boldsymbol
  q}^\dag\right)= {\boldsymbol q}'  q_\mu^l {\boldsymbol q'}^\dag+
{\boldsymbol  q'}\left(\partial_\mu {\boldsymbol q'}^\dag\right),\\
&{\boldsymbol m'} \,  m_\mu^l {\boldsymbol m'}^\dag + {\boldsymbol m'}\left(\partial_\mu {\boldsymbol
  m'}^\dag\right)= {\boldsymbol m}\,  m_\mu^l {\sf m}^\dag + {\boldsymbol m}\left(\partial_\mu {\boldsymbol m}^\dag\right).
\end{align}
In particular, for $q_\mu^l$ the identity, the second of these equations
yields ${\boldsymbol q}\left(\partial_\mu {\boldsymbol q}^\dag\right)
= {\boldsymbol q'}\left(\partial_\mu {\boldsymbol q'}^\dag\right)$ for any $\boldsymbol q, \boldsymbol q'$. Hence for any  $q_\mu^l$ one has  ${\boldsymbol q} \,  q_\mu^l {\boldsymbol
  q}^\dag = {\boldsymbol q}'  q_\mu^l {\boldsymbol q'}^\dag$. This means that ${\boldsymbol q'}^\dag
\boldsymbol q$ is in the centre of $\HHH$. Being a unitary, ${\boldsymbol q'}^\dag
\boldsymbol q$ is thus the identity. So $\boldsymbol q =
  \boldsymbol q'$. Similarly,  one gets that ${\boldsymbol m'}^\dag
  \boldsymbol m$ is in
  the centre of $M_3(\C)$, that is a multiple of the identity.  Being
  unitary, ${\boldsymbol m'}^\dag \boldsymbol m$ can only be the identity, hence
  ${\boldsymbol m}' = {\boldsymbol m}$. Thus
  (\ref{pregaugetransf1}-\ref{pregaugetransf3}) yield the result.\end{proof}

These transformations of the components of the free
$1$-form induce the following  transformations of the physical fields defined in
\eqref{summary1}-\eqref{summary3}. 
\begin{proposition}
\label{prop:gaugetransform1}
  Under a twisted gauge transformation that preserve the
  selfadjointness of a unimodular free $1$-form, the physical fields $a_\mu$
  and $w_\mu$ are invariant, $g_\mu$ undergoes an algebraic (i.e. non-differential) transformation
  \begin{equation}\label{eq:transfgg}
   g_\mu\longrightarrow  {\boldsymbol n} g_\mu {\boldsymbol  n}^\dag 
  \end{equation}
and the gauge fields transform as in the Standard Model
\begin{align}
\label{eq:transBB}
&B_\mu\longrightarrow B_\mu + \frac{2}{g_1}\partial_\mu \alpha,\\
\label{eq:transWW}
&W_\mu\longrightarrow {\boldsymbol  q} W_\mu {\boldsymbol  q}^\dag  +
  \frac{2i}{g_2} {\boldsymbol  q}\,( \partial_\mu {\boldsymbol  q}^
\dag),\\
\label{eq:transVV}
&V_\mu\longrightarrow {\boldsymbol  n} V_\mu {\boldsymbol  n}^\dag  - \frac{2i}{g_3} {\boldsymbol  n}\,( \partial_\mu {\boldsymbol  n}^
\dag),
  \end{align}
where $\boldsymbol  n= (\det \boldsymbol  m)^{-\frac 13} \boldsymbol  m$ is the $SU(3)$ part of
$\boldsymbol  m$. 
\end{proposition}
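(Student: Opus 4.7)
The plan is to derive each of \eqref{eq:transfgg}--\eqref{eq:transVV} from the component transformations \eqref{gaugetransf1}--\eqref{gaugetransf3} of Proposition~\ref{prop:uselfadjoint}, substituting the physical parametrisation \eqref{summary1}--\eqref{summary3} of Proposition~\ref{prop:freeunimod}. The key observation is that the right and left components of each field $(c_\mu^{r/l},q_\mu^{r/l},m_\mu^{r/l})$ transform with the \emph{same} $U(1)$, $SU(2)$, resp.\ $U(3)$ unitary, which makes it natural to take sums and differences to decouple the ``selfadjoint extra'' fields $(a_\mu,w_\mu,g_\mu)$ from the genuine gauge fields $(B_\mu,W_\mu,V_\mu)$.

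For the $U(1)$ sector, I would compute $c_\mu^r - c_\mu^l = 2a_\mu$ and $c_\mu^r + c_\mu^l = -i g_1 B_\mu$; since both $c_\mu^{r/l}$ shift by the same $-i\partial_\mu\alpha$, the difference is invariant (giving $a_\mu\to a_\mu$) and the sum produces \eqref{eq:transBB}. For the $SU(2)$ sector, the difference $q_\mu^r - q_\mu^l = 2w_\mu\I_2$ transforms as ${\sf q}(2w_\mu\I_2){\sf q}^\dagger = 2w_\mu\I_2$ because $\I_2$ is central, so $w_\mu$ is invariant; the sum $q_\mu^r + q_\mu^l = -ig_2 W_\mu$ gives \eqref{eq:transWW} directly, noting that the inhomogeneous terms ${\sf q}\partial_\mu{\sf q}^\dagger$ add and only the sum carries the derivative.

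For the $SU(3)$ sector, I would first write ${\sf m}=\phi\,{\sf n}$ with $\phi=(\det{\sf m})^{1/3}\in U(1)$ and ${\sf n}\in SU(3)$, so that ${\sf m}\partial_\mu{\sf m}^\dagger$ splits as
\begin{equation}
{\sf m}\,\partial_\mu{\sf m}^\dagger \;=\; \phi(\partial_\mu\bar\phi)\,\I_3 \;+\; {\sf n}\,\partial_\mu{\sf n}^\dagger,
\end{equation}
where the first summand is a pure-imaginary multiple of $\I_3$ (since $|\phi|^2=1$) and the second lies in $\mathfrak{su}(3)$. The difference $m_\mu^r-m_\mu^l=2g_\mu$ transforms purely by conjugation, and using $|\phi|^2=1$ one gets ${\sf m}g_\mu{\sf m}^\dagger={\sf n}g_\mu{\sf n}^\dagger$, which is \eqref{eq:transfgg}. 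The sum $m_\mu^r+m_\mu^l=2iV_\mu^0\,\I_3+ig_3 V_\mu$ transforms as $\sf m(\cdots)\sf m^\dagger + 2\sf m\partial_\mu\sf m^\dagger$; projecting onto the traceless selfadjoint part (using that $\mathrm{tr}\,\lambda_m=0$ and that the trace part and $\mathfrak{su}(3)$ part of the above decomposition are orthogonal) isolates \eqref{eq:transVV}, while the trace part gives $V_\mu^0\to V_\mu^0 - i\phi\partial_\mu\bar\phi$, which is automatically consistent with unimodularity~\eqref{eq:130-2} and the transformation of $B_\mu$ provided $\phi$ is tied to $\alpha$ in the expected way.

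The only genuinely delicate step is the $SU(3)$ decomposition: I need to check that the projection of ${\sf m}\partial_\mu{\sf m}^\dagger$ onto $\mathfrak{su}(3)$ is exactly ${\sf n}\partial_\mu{\sf n}^\dagger$ (so that no extra $U(1)$-piece contaminates the $V_\mu$ transformation law), and that the scalar part is removed cleanly by going through $\sf n$ rather than $\sf m$. The rest is straightforward linear algebra using the orthogonality of $\{\I_3,\lambda_m\}$ and the centrality of $\I_2$, $\I_3$.
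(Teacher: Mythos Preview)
Your proposal is correct and follows essentially the same approach as the paper: apply \eqref{gaugetransf1}--\eqref{gaugetransf3} to the parametrisation \eqref{summary1}--\eqref{summary3}, then separate the selfadjoint ``extra'' pieces $(a_\mu,w_\mu,g_\mu)$ from the anti-selfadjoint gauge pieces $(B_\mu,W_\mu,V_\mu)$, and for the $M_3(\mathbb C)$ sector factor ${\sf m}=\phi\,{\sf n}$ with $\phi\in U(1)$, ${\sf n}\in SU(3)$. Your sum/difference of the $r$ and $l$ components is exactly the paper's selfadjoint/anti-selfadjoint split (by the constraint \eqref{eq:124}); the only point the paper makes more explicit than you do is that the identification of the $U(1)$ phase $\phi=e^{i\theta}$ with $e^{-i\alpha/3}$ is \emph{forced} by requiring the unimodularity condition \eqref{eq:130-2} to be gauge invariant, rather than being an independent consistency check.
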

\begin{proof}
Applying the gauge transformations
\eqref{gaugetransf1}-\eqref{gaugetransf3} to the
physical fields defined through \eqref{summary1}-\eqref{summary3}, one
obtains 
\begin{align}
 &\pm a_\mu - i\frac{g_1}2 B_\mu \rightarrow \pm a_\mu - i\left(\frac{g_1}2
  B_\mu + \partial_\mu\alpha\right),\\
\label{eq:transfq}
&\pm w_\mu\I_2 - i\frac{g_2}2W_\mu \rightarrow  \pm w_\mu\I_2 -
  i\left(\frac{g_2}2 {\boldsymbol  q} W_\mu {\boldsymbol  q}^\dag  + i {\boldsymbol 
  q}(\partial_\mu {\boldsymbol  q}^\dag)\right),\\
\label{eq:transfm}
&\pm g_\mu +i\left(\frac{g_1}6 B_\mu \I_3 + \frac{g_3}2 V_\mu\right)
  \rightarrow  \pm {\boldsymbol  m} g_\mu {\boldsymbol m}^\dag +
  i\left(\frac{g_1}6
  B_\mu \I_3 + \frac{g_3}2 {\boldsymbol  m} \, V_\mu {\boldsymbol  m}^\dag -i {\boldsymbol  m}
  (\partial_\mu {\boldsymbol  m}^\dag)\right),
\end{align}
where the anti-selfadjointness of ${\boldsymbol q}(\partial_\mu {\boldsymbol  q}^\dag)$ and ${\boldsymbol  m}(\partial_\mu {\boldsymbol  m}^\dag)$
guarantee that the r.h.s. of \eqref{eq:transfq} and \eqref{eq:transfm}
is split into a selfadjoint and anti-selfadjoint part.
The first two equations above yield \eqref{eq:transBB}-\eqref{eq:transWW}
Writing ${\boldsymbol  m} = e^{i\theta}{\boldsymbol  n}$ with $e^{i\theta} =(\det m)^{\frac 13}$ and
${\boldsymbol  n}\in{SU}(3)$, then the right hand side of \eqref{eq:transfm}
becomes
\begin{equation}
  \pm {\boldsymbol  n} g_\mu {\boldsymbol  n}^\dag + i\left(  (\frac{g_1}6 B_\mu- \partial_\mu\theta )\I_3 +
    \frac{g_3}2 {\boldsymbol  n}V_\mu{\boldsymbol  n}^\dag - i{\boldsymbol  n}\partial_\mu{\boldsymbol  n}^\dag\right).
\end{equation}
where we use ${\boldsymbol  m}\partial_\mu{\boldsymbol  m}^\dag = -i\partial_\mu\theta +
{\boldsymbol  n}\partial_\mu {\boldsymbol  n}^\dag$. 
Requiring the unimodularity
condition to be gauge invariant forces to identify $-\theta$ with
$\frac{\alpha}3$, thus reducing the gauge group $U(3)$ to $SU(3)$.
This yields \eqref{eq:transfgg} and \eqref{eq:transVV}.
\end{proof}

\begin{remark}
  If one does not impose that the twisted gauge transformation
  preserves selfadjointness, then the left and right components of
  spinors transform independently. As explained in remark
  \ref{rem:ident-phys-degr}, the viability of such models should be
  explore elsewhere.
\end{remark}

\subsection{Scalar sector}
\label{gauge:scalar}

We now study the gauge transformation \ref{eq:25} of the scalar part of the
twisted $A_Y + A_M$ of the twisted $1$-form computed in section
\ref{sec:scalarsector}, beginning with the Yukawa part $A_Y$ in
\eqref{eq:45}.

\begin{lemma} 
\label{lem:gaugeHiggs}
Let $u$ be a unitary of $\A$ as in \eqref{eq:24bis}. One has
\begin{equation}
\label{eq:fluccal}
A_Y^u=   \rho(u) \left[\gamma^5\otimes D_Y, u^\dag \right]_\rho +    \rho(u) A_Y u^\dag =
      \begin{pmatrix}
         A^u& \\  &0
      \end{pmatrix} _C^D
\end{equation}
where
\begin{equation}
  A^u = \delta_{\dot s  I}^{\dot t J}
  \begin{pmatrix}
    (A^u)_r & \\ & (A^u)_l 
  \end{pmatrix}_s^t\, ,
\end{equation} 
with
\begin{align}
\label{eq:transfar}
    (A^u)_r&=\begin{pmatrix}
   0 & \bar{\sf  k^I} \left({\boldsymbol \alpha}' \left(H_1 + \I\right){\boldsymbol
      q}^{'\dag}- \I\right)\\\left( {\boldsymbol q}\left(H_2 + \I\right)
    {\boldsymbol \alpha}^\dag - \I\right) {\sf  k^I} & 0 
    \end{pmatrix},\\
  \label{eq:transfal}
 (A^u)_l&=-\begin{pmatrix}
   0 & \bar{\sf  k^I} \left({\boldsymbol \alpha} \left(H'_1 + \I\right){\boldsymbol
      q}^{\dag}- \I\right)\\\left( {\boldsymbol q'}\left(H'_2 + \I\right)
    {\boldsymbol \alpha}^{'\dag} - \I\right) {\sf  k^I} & 0 
    \end{pmatrix}
\end{align}
where $H_{1,2}$ are the components of $A_Y$, and $\boldsymbol \alpha$,
$\boldsymbol \alpha'$, $\sf q$, $\sf q'$ those of $u$.
 \end{lemma}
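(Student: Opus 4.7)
The cleanest approach exploits the fact that $\rho$ is an algebra automorphism: since $u u^\dag = \mathbb{I}$, one has $\rho(u)\rho(u^\dag) = \mathbb{I}$, so
\begin{equation*}
\rho(u)[\gamma^5 \otimes D_Y, u^\dag]_\rho = \rho(u)(\gamma^5 \otimes D_Y) u^\dag - (\gamma^5\otimes D_Y),
\end{equation*}
and the whole left-hand side of \eqref{eq:fluccal} collapses to $\rho(u)\,(\gamma^5 \otimes D_Y + A_Y)\, u^\dag - \gamma^5 \otimes D_Y$. The plan is to compute this in the two particle/antiparticle blocks separately, using the block structure of $D_Y$ in \eqref{eq:9}, of $A_Y$ from Proposition~\ref{prop:diag1form}, and of $\rho(u) = \mathrm{diag}(\rho(\frak A),\rho(\frak B))_C^D$.

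The antiparticle block simplifies immediately. Since $A_Y$ has vanishing lower-right block, this block reads $\rho(\frak B)(\gamma^5 \otimes D_0^\dag)\frak B^\dag - \gamma^5 \otimes D_0^\dag$. Applying \eqref{eq:65} of Lemma~\ref{lem:diag1form} to $b=u^\dag$ gives $(\gamma^5 \otimes D_0^\dag)\frak B^\dag = \rho(\frak B^\dag)(\gamma^5 \otimes D_0^\dag)$; left-multiplying by $\rho(\frak B)$ and using $\rho(\frak B)\rho(\frak B^\dag)=\mathbb{I}$ yields zero, consistent with the zero lower-right block in the claim.

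For the particle block one writes $\rho(\frak A) = \mathrm{diag}(\frak A_l, \frak A_r)_s^t$ (the twist swaps the chiral indices, see \eqref{eq:29}), $\gamma^5 \otimes D_0 = \mathrm{diag}(D_0, -D_0)_s^t$, and $A = \mathrm{diag}(A_r, A_l)_s^t$ from \eqref{eq:49}. The $r$-chiral sub-block is therefore $\frak A_l(D_0 + A_r)\frak A^\dag_r - D_0$, and the $l$-chiral sub-block is $\frak A_r(-D_0 + A_l)\frak A^\dag_l + D_0$. I will then substitute the explicit forms \eqref{eq:defD0I} of $D_0^I$, \eqref{eq:67} of $A_{r/l}$ (parametrised by $H_{1,2}$ and $H'_{1,2}$), and \eqref{eq:defQunit}--\eqref{eq:35unit} of $\frak A_{r/l}$. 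The key simplification is that $\sf k^I = \mathrm{diag}(k_u^I, k_d^I)$ is diagonal and so commutes with the diagonal $\boldsymbol \alpha, \boldsymbol \alpha'$. Factoring $\bar{\sf k^I}$ on the left and $\sf k^I$ on the right of the respective off-diagonal entries yields exactly the expression \eqref{eq:transfar}.

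The $l$-block follows by the same manipulation, with the sole bookkeeping issue being the signs: writing $-D_0 + A_l = -\bigl(D_0 + \mathrm{antidiag}(\bar{\sf k^I}H'_1,\, H'_2 \sf k^I)\bigr)$ pulls out the overall minus sign that appears in front of \eqref{eq:transfal}, and the subsequent algebra then mirrors that of the $r$-block with $\boldsymbol \alpha \leftrightarrow \boldsymbol \alpha'$ and $\sf q \leftrightarrow \sf q'$ (which is the twist swap). The only obstacle is really this sign/swap bookkeeping; no identity beyond $\rho(u u^\dag)=\mathbb{I}$, the automorphism property of $\rho$, the commutativity of $\sf k^I$ with $\boldsymbol \alpha$, and equation \eqref{eq:65} is needed.
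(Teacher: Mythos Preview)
Your proof is correct and takes a genuinely different, more direct route than the paper's own argument. The paper computes the two summands of \eqref{eq:fluccal} separately: for the conjugation term $\rho(u)A_Y u^\dag$ it multiplies out $\frak A_l A_r \frak A_r^\dag$ and $\frak A_r A_l \frak A_l^\dag$ explicitly, and for the twisted commutator it invokes Proposition~\ref{prop:diag1form} with $a_i = \rho(u)$, $b_i = u^\dag$, introducing auxiliary quaternionic fields $\frak H_1, \frak H_2, \frak H'_1, \frak H'_2$ built from the components of $u$, before finally adding everything up. Your observation that $\rho(u)\rho(u^\dag)=\I$ collapses the whole expression to $\rho(u)(\gamma^5\otimes D_Y + A_Y)u^\dag - \gamma^5\otimes D_Y$ bypasses this intermediate bookkeeping entirely: the antiparticle block then vanishes by a single application of \eqref{eq:65}, and in the particle block the identity $\I$ that must be added and subtracted in \eqref{eq:transfar}--\eqref{eq:transfal} appears transparently as the $D_0^I$ being subtracted at the end. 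The paper's approach has the virtue of recycling Proposition~\ref{prop:diag1form} verbatim, which is economical in a different sense; yours makes the underlying structure (gauge transformation as twisted conjugation of the fluctuated operator, minus the bare operator) more visible and requires fewer explicit matrix products.
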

\begin{proof}
From the formula \eqref{eq:49} of $A_Y$ and
\eqref{eq:24ter}-\eqref{eq:defQMunit} of $u$, one gets
\begin{equation}
      \rho(u) A_Y u^\dag =
      \begin{pmatrix}
       \rho(\frak A) A \frak A^\dag & \\  &0
      \end{pmatrix} _C^D
\quad \text{ where }\quad 
   \rho(\frak A) A \frak A^\dag =\delta_{\dot s  I}^{\dot t J}
   \begin{pmatrix}
      \frak A_l A_r \frak A_r ^\dag& \\ &          \frak A_r A_l \frak A_l ^\dag
   \end{pmatrix}_s^t,
\end{equation}
where, using \eqref{eq:67} and  \eqref{eq:defQunit},
\begin{align}
\label{eq:ularur+}
  \frak A_l A_r \frak A_r ^\dag &=
  \begin{pmatrix}
    \boldsymbol \alpha ' & \\ &\boldsymbol q
  \end{pmatrix}
  \begin{pmatrix}
   & \bar{\sf k^I} H_1\\  H_2{\sf k^I}
  \end{pmatrix}  \begin{pmatrix}
    \boldsymbol \alpha ^\dag & \\ &\boldsymbol  q^{'\dag}
  \end{pmatrix}=\begin{pmatrix}
  &  \bar{\sf k^I}\boldsymbol \alpha' \, H_1 \boldsymbol q^{'\dag} \\ 
 {\boldsymbol q} H_2 \boldsymbol \alpha^{\dag} {\sf
   k^I}&\end{pmatrix}_\alpha^\beta,\\
\frak A_r A_l \frak A_l ^\dag &=
  -\begin{pmatrix}
    \boldsymbol \alpha & \\ &\boldsymbol q'
  \end{pmatrix}
  \begin{pmatrix}
   & \bar{\sf k}^I H'_1\\  H'_2{\sf k^I}
  \end{pmatrix}  \begin{pmatrix}
    \boldsymbol \alpha ^{'\dag} & \\ &\boldsymbol q^{\dag}
  \end{pmatrix}=-\begin{pmatrix}
  &  \bar{\sf k^I}\boldsymbol \alpha \, H'_1 \boldsymbol q^{\dag} \\ 
 {\boldsymbol q}' H'_2 \boldsymbol \alpha^{'\dag} {\sf k^I}&\end{pmatrix}_\alpha^\beta,
\end{align}
where we used that ${\sf k^I}$, $\bar{{\sf k^I}}$ commute with $\boldsymbol
\alpha$, $\boldsymbol\alpha'$ and their conjugates.

The computation of the twisted commutator part in \eqref{eq:fluccal} is similar to that of $A_Y$ in proposition
\ref{prop:diag1form}, with $a_i= \rho(u)$ and $b_i= u^\dag$ for $u$ as
in \eqref{eq:24bis}, that is 
 \begin{equation}
    \rho(u) \left[\gamma^5\otimes D_Y, u^\dag \right]_\rho =
    \begin{pmatrix}
      \frak U & \\ &0
    \end{pmatrix}_C^D
  \end{equation}
where
\begin{equation}
\label{eq:U}
      \frak U=  \delta_{\dot s I}^{\dot t J}
      \begin{pmatrix}
            \frak U_r & \\ &     \frak U_l
      \end{pmatrix}_s^t\;\text{ with }\; 
  \frak U_r =
\begin{pmatrix}
    & \bar{\sf k^I} {\frak H}_1\\ {\frak H}_2\sf k^I
  \end{pmatrix}_\alpha^\beta,\quad   \frak U_l =-
  \begin{pmatrix}
    & \bar{\sf k^I} {\frak H}'_1\\ {\frak H}'_2\sf k^I
  \end{pmatrix}_\alpha^\beta,\;
\end{equation}
in which ${\frak H}_{i=1, 2}$ and ${\frak H}'_{1,2}=\rho(\frak
H_{1,2})$ are given by \eqref{eq:65bis}
with (remembering \eqref{eq:35unit})
\begin{align}
{\sf c}=\boldsymbol \alpha',\; {\sf c}'=\boldsymbol \alpha,\; q= {\boldsymbol
  q}',\; q'= {\boldsymbol q}\; \text{ and } \;
{\sf d}= \boldsymbol \alpha^\dag,\; {\sf d'}= \boldsymbol \alpha^{'\dag},\;   
p={\boldsymbol q}^\dag,\, p'={\boldsymbol q}^{'\dag};
\end{align}
that is 
\begin{equation}
\label{eq:defH}
  {\frak H}_1=\boldsymbol \alpha' (\boldsymbol q^{'\dag} - \boldsymbol
  \alpha^{'\dag}),\;
  {\frak H}_2= {\boldsymbol q}(\boldsymbol \alpha^\dag - \boldsymbol q^\dag) \; \text{ and } \;
 {\frak H}'_1=\boldsymbol \alpha (\boldsymbol q^\dag - \boldsymbol \alpha^\dag),\;
  {\frak H}'_2= \boldsymbol q'(\boldsymbol \alpha^{'\dag} - \boldsymbol q^{'\dag}).
\end{equation}

Thus one obtains \eqref{eq:fluccal} with 
\begin{equation}
 A^u=  \frak U +   \rho(\frak A) A \frak A^\dag=   \delta_{\dot s  I}^{\dot t J}\begin{pmatrix}
            \frak U_r +   \frak A_l A_r \frak A_r ^\dag& \\ &     \frak U_l+  \frak A_r A_l \frak A_l ^\dag
      \end{pmatrix}_s^t.
\end{equation}
From
\eqref{eq:ularur+} and  \eqref{eq:U} one obtains the explicit forms of  $(A^u)^r$ and   $(A^u)^l$
\begin{align}
  (A^u)_r\coloneqq \frak U_r +   \frak A_l A_r \frak A_r ^\dag=\begin{pmatrix}
    & \bar{\sf  k^I} (\frak H_1 +  {\boldsymbol \alpha}' H_1{\boldsymbol
      q}^{'\dag})\\ (\frak H_2 +{\boldsymbol q}
    H_2   {\boldsymbol \alpha}^\dag)  {\sf  k^I}& 0 
    \end{pmatrix},\\
  (A^u)_l\coloneqq \frak U_l +   \frak A_r A_l \frak A_l ^\dag=-\begin{pmatrix}
    & \bar{\sf  k^I} (\frak H'_1 +  {\boldsymbol \alpha} H'_1{\boldsymbol
      q}^{\dag})\\(\frak H'_2 +{\boldsymbol q}'
    H'_2   {\boldsymbol \alpha}^{'\dag})  {\sf  k^I} & 0 
    \end{pmatrix}.
\end{align}
The final result follow substituting $\frak H_{1,2}$ with their
explicit formulas \eqref{eq:defH}.
\end{proof}

A unitary $u$ that preserves the
selfadjointness of the unimodular free $1$-form (Prop. \ref{prop:uselfadjoint}) also preserves the
selfadjointness of $A_Y$ if, and only if, $K=0$. Indeed, in that case $u$
is twist-invariant (i.e. $\boldsymbol q' = \boldsymbol
q$ and $\boldsymbol\alpha' =\boldsymbol\alpha$)  and one easily checks that for a selfadjoint 
$A_Y$ 
(that is $H_1^\dag = H_2=H_r$
and ${H'_1}^\dag = H'_2=H_l$  by corollary \ref{cor:higgs}) then $A_Y^u$ is selfadjoint as well. If $K\neq0$, then $H_1$
and $H_2$ undergo different gauge transformations, forbidding
$A_Y^u$ to be selfadjoint.
 For this reason, from now on we take $K=0$. With this caveat, the gauge transformation of lemma
\ref{lem:gaugeHiggs} then
reads as a law of transformation of the complex components
\eqref{eq:Higgscomp} of the
quaternionic fields $H_r$ and $H_l$.

\begin{proposition}
\label{prop:gaugehiggs}

Let $A_Y$ be a selfadjoint diagonal $1$-form parametrised by two
quaternionic field $H_r, H_l$. Under a gauge
transformation induced by a twist-invariant
unitary $u=(\alpha, \alpha, {\boldsymbol q}, {\boldsymbol q}, m, m)$, the components $\phi^r_{1,2}$, $\phi^l_{1,2}$ of $H_r$,  $H_l$ 
transform as
\begin{equation}
\label{eq:transHiggs}
  \begin{pmatrix}
    \phi_1^r + 1 \\ \phi_2^r
  \end{pmatrix}\longrightarrow {\boldsymbol q} \begin{pmatrix}
    \phi_1^r + 1 \\ \phi_2^r
  \end{pmatrix}e^{-i\alpha},\quad \begin{pmatrix}
    \phi_1^l + 1 \\ \phi_2^l
  \end{pmatrix}\longrightarrow {\boldsymbol q} \begin{pmatrix}
    \phi_1^l + 1 \\ \phi_2^l
  \end{pmatrix}e^{-i\alpha}.
\end{equation}
\end{proposition}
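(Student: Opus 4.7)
The plan is to specialise Lemma \ref{lem:gaugeHiggs} to a twist-invariant unitary, read off the transformation of the quaternionic fields $H_r$ and $H_l$, and then translate the resulting affine law into the claimed transformation of the complex doublets.

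First I would substitute $\boldsymbol{\alpha}' = \boldsymbol{\alpha}$ and ${\sf q}' = {\sf q}$ in \eqref{eq:transfar}--\eqref{eq:transfal}. Comparing with the form \eqref{eq:67} of $A_r$, $A_l$, one reads off that the components of $A_Y$ transform as
\begin{align*}
H_1 &\longmapsto H_1^u = \boldsymbol{\alpha}(H_1+\I)\,{\sf q}^\dag - \I, & H_2 &\longmapsto H_2^u = {\sf q}(H_2+\I)\,\boldsymbol{\alpha}^\dag - \I,\\
H'_1 &\longmapsto {H'_1}^u = \boldsymbol{\alpha}(H'_1+\I)\,{\sf q}^\dag - \I, & H'_2 &\longmapsto {H'_2}^u = {\sf q}(H'_2+\I)\,\boldsymbol{\alpha}^\dag - \I.
\end{align*}
A one-line check shows $(H_1^u)^\dag = {\sf q}(H_1^\dag+\I)\boldsymbol{\alpha}^\dag - \I$, which equals $H_2^u$ as soon as $H_2 = H_1^\dag$; similarly for the primed fields. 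Hence the selfadjointness identifications of Corollary \ref{cor:higgs} are preserved, and setting $H_r := H_2$, $H_l := H'_2$ as in Proposition \ref{prop:diagfluc} the gauge law collapses to the single affine formula
\begin{equation*}
H_{r/l} + \I \;\longmapsto\; {\sf q}\,(H_{r/l} + \I)\,\boldsymbol{\alpha}^\dag.
\end{equation*}

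Next I would exploit the quaternionic parametrisation \eqref{eq:Higgsrl}. The first column of the $2\times 2$ matrix $H_r + \I$ is exactly the doublet $(\phi^r_1 + 1,\,\phi^r_2)^T$. Right-multiplication by $\boldsymbol{\alpha}^\dag = \diag(e^{-i\alpha}, e^{i\alpha})$ rescales the first column by $e^{-i\alpha}$, while left-multiplication by ${\sf q}\in SU(2)$ acts on that column by ordinary matrix multiplication. Reading off the first column of $\mathsf{q}\,(H_r+\I)\,\boldsymbol{\alpha}^\dag$ and identifying it with the first column of $H_r^u + \I$ yields the first equation of \eqref{eq:transHiggs}; the same argument applied to $H_l$ produces the second.

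The computation is essentially routine; the only conceptual subtlety is recognising that the gauge transformation acts \emph{affinely} on the quaternions $H_{r/l}$ (with a shift by $\I$), so that the natural gauge-covariant object is the doublet $(\phi_1+1,\,\phi_2)^T$ rather than $(\phi_1,\phi_2)^T$. The one technical point worth checking is that $\mathsf{q}(H_{r/l}+\I)\boldsymbol{\alpha}^\dag - \I$ is again a quaternion in the sense of \eqref{eq:Higgsrl}: this follows from the fact that $\mathsf q\in SU(2)$ and $\boldsymbol{\alpha}^\dag = \cos\alpha\,\I - i\sin\alpha\,\sigma_3$ are themselves unit quaternions, so the quaternionic subalgebra of $M_2(\C)$ is preserved under the transformation.
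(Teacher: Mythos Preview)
Your proof is correct and follows essentially the same route as the paper: specialise Lemma~\ref{lem:gaugeHiggs} to a twist-invariant unitary, read off the affine law $H_{r/l}+\I\mapsto {\sf q}(H_{r/l}+\I)\boldsymbol\alpha^\dag$, and extract the first column to obtain \eqref{eq:transHiggs}. Your version is slightly more thorough in that you explicitly verify preservation of selfadjointness and of the quaternionic form, points the paper leaves implicit.
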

\begin{proof}
$A_Y$ being selfadjoint means that \eqref{eq:75} holds. A twist-invariant unitary satisfies
\eqref{eq:uselfadjoint} with $K=0$.  Under these conditions,
comparing the formula \eqref{eq:67} of
$A_Y$ with its gauge transformed counterpart
\eqref{eq:transfar}-\eqref{eq:transfal}, one finds that the fields $H_r$ and $H_l$ undergo the same transformation
\begin{align}
  H_r \longrightarrow {\boldsymbol q}\left(H_r + \I\right) \boldsymbol \alpha^\dag - \I,\qquad
 H_l \longrightarrow {\boldsymbol q}\left(H_l + \I\right)\boldsymbol\alpha^\dag - \I.
\end{align}
Written in components \eqref{eq:Higgscomp},  with $q_{ij}$ the
components of $\bf q$, 
these equations reads
\begin{align}
\phi_1^r \longrightarrow  q_{11}\, (\phi^r_1 + 1)e^{-i\alpha} + q_{12}\,
  \phi^r_2\,e^{-i\alpha}-1 ,\quad
 \phi_2^r \longrightarrow q_{21}\,\alpha(\phi^r_1 + 1)e^{-i\alpha} + q_{22}\,
  \phi^r_2\,e^{-i\alpha},
\end{align}
and similarly for  $\phi_{1,2}^l$. In
matricial form, these equations are nothing but
\ref{eq:transHiggs}.\end{proof}

The transformations \eqref{eq:transHiggs} are similar to  those of the
Higgs doublet in the  Standard Model (see e.g. \cite[Prop. 11.5]{Walterlivre}).
In the twisted version of the Standard Model, we thus obtain two
Higgs fields, acting independently on the left and right components of
the Dirac spinors. However, as we already mentioned, the two have no individual physical meaning on their own, since they only appear in the fermionic action through the linear combination $h=\left(H_r+H_l\right)/2$. Therefore there is actually only one physical Higgs doublet in the twisted case as well.
\smallskip

To conclude, we check that the scalar field $\sigma$ is gauge
invariant. As explained below \eqref{eq:gaugedirac}, this 
invariance  is not affected by the non-linear term, and is encoded within the
transformation 
\begin{equation}
\label{eq:25bbis}
A_M\rightarrow A_M^u= \rho(u) \left[\gamma^5\otimes D_M, u^\dag \right]_\rho +    \rho(u)
   A_M u^\dag,
 \end{equation} 
\begin{proposition}
\label{prop:gaugesigma}
Under a gauge transformation induced by a
twist-invariant unitary $u$, the real fields $\sigma_r$, $\sigma_l$
parameterising a self-adjoint off-diagonal fluctuation (proposition~\ref{proposi:offdiagfluc})
are invariant.
\end{proposition}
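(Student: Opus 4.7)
The plan is to apply the twisted gauge transformation formula $A_\rho^u = \rho(u)[\gamma^5\otimes D_M,u^\dag]_\rho + \rho(u)A_M u^\dag$ to the off-diagonal fluctuation and show that each of the two terms leaves the fields $\sigma$ and $\sigma'$ separately invariant. Since $\sigma_r = \sigma + \bar\sigma$ and $\sigma_l = -\sigma' - \bar\sigma'$, this yields the result.

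For the commutator term, observe that a twist-invariant unitary satisfies $\rho(u)=u$, so $\rho(u)[\gamma^5\otimes D_M,u^\dag]_\rho$ has exactly the form $a[\gamma^5\otimes D_M,b]_\rho$ treated in the proposition on $A_M$, with $a=u$ and $b=u^\dag$. The substitutions $c\leftarrow e^{i\alpha}$, $c'\leftarrow e^{i\alpha'}$, $d\leftarrow e^{-i\alpha}$, $d'\leftarrow e^{-i\alpha'}$ in \eqref{eq:79} produce
\[
\sigma_{\mathrm{new}} = e^{i\alpha}\bigl(e^{-i\alpha}-e^{-i\alpha'}\bigr), \qquad \sigma'_{\mathrm{new}} = e^{i\alpha'}\bigl(e^{-i\alpha'}-e^{-i\alpha}\bigr),
\]
both of which vanish as soon as $\alpha'=\alpha$ (i.e.\ $K=0$, as assumed since \S\ref{subse:higgssector}). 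Hence this piece does not contribute to the new $\sigma_r,\sigma_l$.

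For the conjugation term, using the block decomposition \eqref{eq:68}--\eqref{81bbis} the upper-right block of $\rho(u)A_Mu^\dag$ is $\rho(\frak A)\, C\, \frak B^\dag$, whose $s=r$ component reads $\frak A_l \cdot k_R \sigma\, \Xi \cdot \frak B_r^\dag$. The projector $\Xi$ selects $I=J=0$ (lepton) and $\alpha=\beta=\dot 1$ (right neutrino); at those indices $\frak A_l$ acts only through the $(\dot 1,\dot 1)$-entry $e^{i\alpha'}$ of $\boldsymbol\alpha'$ and $\frak B_r^\dag$ only through the $(0,0)$-entry $e^{-i\alpha}$ of $({\boldsymbol m}\otimes\I_2)^\dag$. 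Their product equals $1$ when $\alpha'=\alpha$, leaving $C_r$ unchanged; the very same computation preserves $C_l$, and by the analogous block argument preserves $D$ as well. Thus $\rho(u)A_Mu^\dag=A_M$.

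Combining the two steps yields $A_M^u = A_M$, so $\sigma,\sigma'$ — and hence the real fields $\sigma_r,\sigma_l$ — are invariant. The only mild subtlety, which is not really an obstacle, is to make sure that the $SU(2)$-quaternionic and $U(3)$-matrix components of $u$ never leak into the subspace picked out by $\Xi$: this is immediate from \eqref{eq:defQunit}--\eqref{eq:defMunit}, since at the relevant indices the action of $\frak A$ and $\frak B$ reduces to multiplication by the $\C$-valued phases $e^{\pm i\alpha}$.
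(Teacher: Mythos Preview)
Your proof is correct and follows essentially the same two-step strategy as the paper: show that the twisted-commutator piece vanishes and that the conjugation piece leaves $A_M$ fixed. The only notable difference is in the first step: the paper observes that for a twist-invariant unitary the twisted commutator $[\gamma^5\otimes D_M,u^\dag]_\rho$ collapses to the ordinary commutator $[\gamma^5\otimes D_M,u^\dag]$, which is identically zero by \eqref{eq:needfortwist} --- the very commutation that motivated the twist in the first place. You instead re-run the computation of the $A_M$-proposition with the specific substitutions and check that the resulting $\sigma_{\text{new}},\sigma'_{\text{new}}$ vanish when $\alpha'=\alpha$. Both arguments are valid; the paper's is slicker and makes the conceptual point that $D_M$ is already central before twisting. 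For the conjugation piece your computation and the paper's coincide (you write $\rho(\frak A)$ where the paper writes $\frak A$, but these agree since $\rho(u)=u$).
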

\begin{proof}
The result amounts to showing that $A_M$ is invariant under
\eqref{eq:25bbis}. 
Since $u=\rho(u)$ by hypothesis, the twisted-commutator in \eqref{eq:25}
coincides with the usual one
$\left[\gamma^5\otimes D_M, u^\dag \right]$ which is zero by
\eqref{eq:needfortwist}. The explicit forms \eqref{eq:68} of
$A_M$ and \eqref{eq:24ter}
of $u$ yields
\begin{equation}
uA_M u^\dagger=
\begin{pmatrix}
   & \frak A C \frak B^\dag \\  \frak B D \frak A^\dag &
\end{pmatrix}.
\end{equation}
From \eqref{81bbis}, one checks that $\frak A C \frak B^\dag$ has
components (omitting the global factor $k_R\delta_{\dot s}^{\dot t}$
and $\delta_I^J$ per ${\mathfrak A}_{r/l}$)
\begin{align}
\label{eq:gaugesigma}
&\frak A_r C_r \frak
B^\dag_r= \sigma\, \frak A_r \, \Xi_{I\alpha}^{J\beta}\, \frak
B^\dag_r= \sigma \Xi_{I\alpha}^{J\beta},\\
\label{eq:gaugesigma2} &\frak A_l C_l \frak
B^\dag_l=-\sigma'\, \frak A_l  \Xi_{I\alpha}^{J\beta} \frak B^\dag_l=-\sigma' \Xi_{I\alpha}^{J\beta},
\end{align}
where we use the explicit forms \eqref{eq:defQunit}-\eqref{eq:35unit}
of $\frak A, \frak B$ to get $\frak A_r \, \Xi_{I\alpha}^{J\beta}\, \frak
B^\dag_r =e^{i\alpha}\, \Xi_{I\alpha}^{J\beta}\,
e^{-i\alpha}\!~=\!~\Xi_{I\alpha}^{J\beta}$, and similarly for~\eqref{eq:gaugesigma2}.
Hence $uA_M u^\dagger= A_M$, and the result. \end{proof}

\section*{Conclusion}

We have worked out the field content of a twisted version of the spectral triple of the
Standard Model. The physical meaning of these fields will be made
precise by the computation of the fermionic action in the second part
of this work  \cite{Filaci:2021aa}, as well as the possibility of gauge transformations
induced by non twist-invariant unitaries, and their relation
with lorentzian signature.

 As shown in
\cite{Manuel-Filaci:2020aa}, the twisted first-order condition
needs to be violated in order to generate the extra scalar field $\sigma$. This
forbids to apply the twist-by-grading of
\cite{Landi:2017aa}, since the latter always preserves this
condition. However, this violation  has no
real importance, being reabsorbed in the definition of the components
of $\sigma$. In this sense, the
model presented here is the one that minimally violates the twisted
first-order condition.



\appendix
\addcontentsline{toc}{section}{Appendices}
\section*{Appendix}
\setcounter{equation}{0} 

\setcounter{section}{1}
 \subsection*{A.1 Dirac matrices and real structure}
 \label{app:Dirac}
Let $\sigma_{j= 1,2,3}$ be the Pauli matrices:
\begin{equation}
\label{Pauli}
		\sigma_1 = \left(\begin{array}{cc} 0 & 1 \\ 1 & 0 \end{array}\right)\!,
\qquad	\sigma_2 = \left(\begin{array}{cc} 0 & -i \\ i & 0 \end{array}\right)\!,
\qquad	\sigma_3 = \left(\begin{array}{cc} 1 & 0 \\ 0 & -1 \end{array}\right)\!.
\end{equation}%
In four-dimensional euclidean space, the Dirac matrices (in chiral
representation) are
\begin{equation}
\label{EDirac}
	\gamma^\mu_E =
	\left( \begin{array}{cc}
		0 & \sigma^\mu \\ \tilde\sigma^\mu & 0
	\end{array} \right)\!, \qquad
	\gamma^5_E \coloneqq  \gamma^1_E\,\gamma^2_E\,\gamma^3_E\,\gamma^0_E =
	\left( \begin{array}{cc}
		\mathbb{I}_2 & 0 \\ 0 & -\mathbb{I}_2 
	\end{array} \right)\!,
\end{equation}
where, for $\mu = 0,j$, we define
\begin{equation}
\label{eq:defsigmamu}
	\sigma^\mu \coloneqq  \left\{ \mathbb{I}_2, -i\sigma_j \right\}\!, \qquad
	\tilde\sigma^\mu \coloneqq  \left\{ \mathbb{I}_2, i\sigma_j \right\}\!.
\end{equation}

On a (non-necessarily flat) riemannian spin manifold, the Dirac
matrices are linear combinations of the euclidean ones,
\begin{equation}
\gamma^\mu =
e_\mu^\alpha \gamma^\alpha_E
\label{eq:111}
\end{equation}
 where $\left\{e_\mu^\alpha\right\}$ are the vierbein, which are real
   fields on $\M$. These Dirac matrices are no longer
   constant on $\M$. This is a general result of spin geometry that the charge conjugation
commutes with the spin derivative (see e.g.  \cite[Prop. 4.18]{Walterlivre}). For sake of
completeness, we check it explicitly for a four dimensional riemannian manifold:
\begin{lemma}
\label{lemma:realtwist}
The real structure satisfies
  \begin{equation}
    \mathcal{J}\gamma^\mu=-\gamma^\mu \mathcal J,\quad \mathcal J\omega_\mu^s = \omega_\mu^s \mathcal J,\quad\mathcal J\nabla_\mu^s=+\nabla_\mu^s\mathcal J.
  \end{equation}
\end{lemma}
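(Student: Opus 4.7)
The plan is to verify the three identities in order, reducing each to an explicit matrix computation with the Euclidean Dirac matrices $\gamma^\alpha_E$ already fixed in Appendix \ref{app:Dirac}. The essential input is the explicit formula $\mathcal{J} = i\gamma^0_E\gamma^2_E\, cc$ together with the anticommutation relations $\{\gamma^\alpha_E,\gamma^\beta_E\} = 2\delta^{\alpha\beta}\I$ and the observation that $\gamma^0_E,\gamma^2_E$ are real matrices whereas $\gamma^1_E,\gamma^3_E$ are purely imaginary (this can be read off directly from \eqref{Pauli}–\eqref{EDirac} via \eqref{eq:defsigmamu}).

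For the first identity, I would note that the statement $\mathcal{J}\gamma^\mu = -\gamma^\mu\mathcal{J}$ is $\C$-linear in $\gamma^\mu$ after one fixes the antilinear part, so by linearity and the reality of the vierbein it suffices to prove $\mathcal{J}\gamma^\alpha_E = -\gamma^\alpha_E\mathcal{J}$ for each $\alpha = 0,1,2,3$. Unwinding the antilinearity, this amounts to the four matrix identities
\begin{equation*}
\gamma^0_E\gamma^2_E\,\overline{\gamma^\alpha_E} = -\gamma^\alpha_E\,\gamma^0_E\gamma^2_E,
\end{equation*}
which I would check case by case. For $\alpha = 0$ and $\alpha = 2$ the complex conjugate is trivial and the identity follows from anticommutation and $(\gamma^0_E)^2 = (\gamma^2_E)^2 = \I$; for $\alpha = 1$ and $\alpha = 3$ the complex conjugation supplies an extra sign that precisely cancels an odd number of anticommutations through $\gamma^0_E\gamma^2_E$. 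This is the step that requires the most care, but it is a small finite computation.

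For the second identity, I would invoke the explicit form \eqref{eq:97} of the spin connection,
\begin{equation*}
\omega_\mu = \Gamma_\mu^{\rho\nu}\gamma_\rho\gamma_\nu,
\end{equation*}
with \emph{real} coefficients $\Gamma_\mu^{\rho\nu}$. By the first identity, commuting $\mathcal{J}$ past each of the two curved Dirac matrices produces two sign flips that cancel; the real Christoffel-like coefficients pass through the antilinear $cc$ without change. Hence $\mathcal{J}\omega_\mu = \omega_\mu\mathcal{J}$.

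For the third identity, since $\nabla_\mu^s = \partial_\mu + \omega_\mu$, it remains to show $\mathcal{J}\partial_\mu = \partial_\mu\mathcal{J}$. This is immediate: $\partial_\mu$ acts componentwise on spinors with real derivative, so it commutes with complex conjugation, and the constant matrix $i\gamma^0_E\gamma^2_E$ is pulled out of $\partial_\mu$ up to differentiation, which vanishes on constants. Combined with the previous step, this gives $\mathcal{J}\nabla_\mu^s = \nabla_\mu^s\mathcal{J}$. The only genuine obstacle is the sign-bookkeeping in the first step, which I would organise by listing in a small table the reality type of each $\gamma^\alpha_E$ and the parity of anticommutations with $\gamma^0_E\gamma^2_E$, so that the cancellation is transparent.
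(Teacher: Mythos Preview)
Your proposal is correct and follows essentially the same route as the paper: reduce to the flat $\gamma^\alpha_E$ via the reality of the vierbein, verify $\gamma^0_E\gamma^2_E\,\overline{\gamma^\alpha_E}=-\gamma^\alpha_E\gamma^0_E\gamma^2_E$ case by case using that $\gamma^0_E,\gamma^2_E$ are real while $\gamma^1_E,\gamma^3_E$ are purely imaginary, then pass to $\omega_\mu$ as a real bilinear in Dirac matrices and to $\nabla_\mu$ using that $\mathcal J$ has constant entries. One small slip in your wording: for $\alpha=1,3$ the matrix $\gamma^\alpha_E$ anticommutes with each of $\gamma^0_E,\gamma^2_E$, hence \emph{commutes} with the product $\gamma^0_E\gamma^2_E$ (an even number of sign flips), and it is the extra minus from complex conjugation alone that yields the required sign---not a cancellation against ``an odd number of anticommutations.''
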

\begin{proof}
Let us first show that $\J$ anticommutes with the euclidean Dirac
matrices,
\begin{equation}
  \label{eq:113}
  \left\{ {\cal J}, \gamma^\mu_E\right\}=0.
\end{equation}
From the explicit forms \eqref{eq:2bis} of $\cal J$, this is
equivalent to
  \begin{equation}
    \label{eq:100}
    \gamma^0_E\,\gamma^2_E\,\bar\gamma^\mu_E =-\gamma^\mu_E\gamma^0_E\,\gamma^2_E
  \end{equation}
which is true for $\mu=0,2$ since then $\bar\gamma^\mu_E=\gamma^\mu_E$
anticommutes with $\gamma^0_E\gamma^2_E$, and is also true for $\mu=1, 2$
in which case $\bar\gamma^\mu_E =-\gamma^\mu_E$ commutes with
$\gamma^0_E\gamma^2_E$.

Since the spin connection is a real linear combination of products of two
euclidean Dirac matrices, it commutes with $\J$. The latter, having
constant components, commutes with $\partial_\mu$, hence also with the
spin covariant derivative $\nabla_\mu$.

These results hold as well in the curved case, for then one has from \eqref{eq:111}
\begin{equation}
  \label{eq:112}
  \left\{\J, \gamma^\mu\right\}= e_\mu^\alpha \left\{\J, \gamma^\alpha_E\right\}=0.
\end{equation}
\end{proof}

\subsection*{A.2 Components of the gauge sector of the twisted fluctuation}
\label{app:gauge}

The components of the free twisted fluctuation of proposition
\ref{prop:freetwisfluc} are $Z^{r}_\mu = \delta_{\dot s}^{\dot t}
(Z_\mu^r)_{I\alpha}^{J\beta}$ given by (we invert the order of the
lepto-colour and flavour indices in order to make the comparison
with the non-twisted case easier) 
\begin{align}
\left(Z_\mu\right)_{0\dot{1}}^{0\dot{1}}&=2a_\mu,\\
\left(Z_\mu\right)_{0\dot{2}}^{0\dot{2}}&=2 a_\mu+ig_1B_\mu,\\
\left(Z_\mu\right)_{0a}^{0b}&=\delta_a^b\left(w^\mu-  a_\mu\right) +i\left(\delta_a^b \frac{g_1B_\mu}{2}-\frac{g_2}{2}(W_\mu)_a^b\right),\\
\left(Z_\mu\right)_{i\dot{1}}^{j\dot{1}}&= \left(a_\mu
                                            \delta_i^j +(g_\mu)_i^j\right)-i\left(\frac{2g_1
                                            B_\mu}3\delta_i^j+\frac{g_3}{2}(V_\mu)_i^j\right),\\
\left(Z_\mu\right)_{i\dot{2}}^{j\dot{2}}&=\left(a_\mu\delta_i^j
                                            + (g_\mu)_i^j\right)+i\left(\frac{g_1 B_\mu}{3}\delta_i^j-\frac{g_3}{2}(V_\mu)_i^j\right),\\
\left(Z_\mu\right)_{ia}^{jb}&=\left(\delta_a^bw_\mu\delta_i^j-\left(g_\mu\right)_i^j\right)-i\left(\delta_a^b\left(\frac{g_1B_\mu}{6}\delta_i^i+\frac{g_3}{2}(V_\mu)_i^j\right)+\frac{g_2}{2}(W_\mu)_a^b \delta_i^j\right),\\
(Z_\mu)_{I\dot 1}^{J\dot 2}&=(Z^r_\mu)_{I\dot 2}^{J\dot 1}=0.
\end{align}
One then checks that 
\begin{equation*}
 i \begin{pmatrix}
  (Y_\mu)_{i\dot 1}^{j\dot 1} & &\\ &   (Y_\mu)_{i\dot 2}^{j\dot 2}&
  \\ & &   (Y_\mu)_{i\dot a}^{j\dot b}
  \end{pmatrix}_\alpha^\beta=  \begin{pmatrix}-i\left(\frac{2g_1
                                            B_\mu}3\delta_i^j+\frac{g_3}{2}(V_\mu)_i^j\right)& &\\[4pt] & \hspace{-2truecm} i\left(\frac{g_1 B_\mu}{3}\delta_i^j-\frac{g_3}{2}(V_\mu)_i^j\right)&
  \\ & &  \hspace{-2truecm} -i\left(\delta_a^b\left(\frac{g_1B_\mu}{6}\delta_i^i+\frac{g_3}{2}(V_\mu)_i^j\right)+\frac{g_2}{2}(W_\mu)_a^b \delta_i^j\right)
  \end{pmatrix}_\alpha^\beta
\end{equation*}
coincides with the matrix $\mathbb A _\mu^q$ of the non-twisted case
\cite[eq. 1.733]{Connes:2008kx}, while 
 \begin{equation*}
 i \begin{pmatrix}
   (Y_\mu)_{0\dot 1}^{0\dot 1} & &\\ &   (Y_\mu)_{0\dot 2}^{0\dot 2}&
   \\ & &   (Y_\mu)_{0\dot a}^{0\dot b}
   \end{pmatrix}_\alpha^\beta
=  \begin{pmatrix}0& &\\[4pt] 
 & \hspace{0truecm} ig_1 B_\mu& \\ 
 & &  \hspace{-0truecm} i\left(\delta_a^b\frac{g_1B_\mu}{2}-\frac{g_2}{2}(W_\mu)_a^b \right)
  \end{pmatrix}_\alpha^\beta
\end{equation*}
coincides with  the matrix $\mathbb A _\mu^l$ 
\cite[eq. 1.734]{Connes:2008kx}.

\subsection*{A.3 Twisted first-order condition}
\label{sec:twistfirstorder}
 For a twisted spectral
triple, there is a natural twisted version of the  first order
condition \ref{eq:13} that was introduced in \cite{buckley} and whose mathematic
pertinence has been investigated in details in
\cite{Lett.,Landi:2017aa}, namely 
\begin{equation}
  \label{eq:73}
 [[D, b]_\rho, a^\circ]_{\rho^\circ}=0\qquad a,b\in\A
\end{equation}
where $\rho^\circ\in\text{Aut}\,\A^\circ$ is the automorphism of the opposite algebra
$\A^\circ$ induces in \eqref{eq:16} by the twisting automorphism $\rho\in\text{Aut}\,\A$

\begin{proposition}
\label{prop:twisted-first-order}
The free part $\ds\otimes\I_F$ and the diagonal part $\gamma^5\otimes
D_Y$ of the Dirac operator satisfy the twisted first-order condition \eqref{eq:73}, while the off-diagonal part $\gamma^5\otimes D_M$ violates it.
\end{proposition}
\begin{proof}
For  $\ds\otimes\I_F$, using \eqref{eq:55} and
corollary~\ref{cor:twistcommut} one gets
\begin{equation}
\left[\left[\ds\otimes\I_F,b\right]_\rho,JaJ^{-1}\right]_\opr=i\gamma^\mu\begin{pmatrix}
\left[\partial_\mu R,\overline{M}\right]\\&\left[\partial_\mu N,\overline{Q}\right]
\end{pmatrix}_C^D.
\end{equation}
The top-left entry reads (omitting the $s,\dot s$ indices for simplicity)
\begin{equation}
\left[\partial_\mu R,\overline{M}\right]=\begin{pmatrix}
\partial_\mu {\sf d}\left[\I_4,\overline{\sf m}\right]_I^J\\&\partial_\mu p'\left[\I_4,\overline{\sf m'}\right]_I^J
\end{pmatrix}_\alpha^\beta=0.
\end{equation}
Similarly one shows that $\left[\partial_\mu
  N,\overline{Q}\right]=0$, hence $\ds\otimes\I_F$ satisfies the twisted first-order condition. 

For the diagonal part $\gamma^5\otimes D_Y$, lemma \ref{lem:diag1form} and \eqref{eq:55} yield
\begin{equation}
  \label{eq:53}
  \left[\left[\gamma^5\otimes D_Y, b\right]_\rho,
  JaJ^{-1}\right]_{\rho^\circ}=
  -\begin{pmatrix}
    \left[S, \bar
      M\right]_{\rho^\circ}& 0\\
0 & 0
  \end{pmatrix}_C^D.
\end{equation}
In tensorial notations
\begin{align}
  \label{eq:7}
   \left[S, \bar M\right]_{\rho^\circ} &= \delta_{\dot s}^{\dot t}\left[
  \eta_s^u(D_0)_{I\alpha}^{J\gamma}\, R_{u\gamma}^{t\beta},
 \bar M_{s\alpha I}^{t\beta J}  \right]_{\rho^\circ} - \delta_{\dot s}^{\dot t}\left[
\rho(R)_{s\alpha}^{u\gamma}  \eta_u^t(D_0)_{I\gamma}^{J\beta}\,,
 \bar M_{s\alpha I}^{t\beta J}  \right]_{\rho^\circ}.
\end{align}
The right hand side of \eqref{eq:7} is
(omitting the indices $\dot s, \alpha, I$)
\begin{small}
\begin{align}
\nonumber
&\begin{pmatrix}
    D_0 R_r\!\!\!\! \!\!\!\! & \\ &-D_0R_l
  \end{pmatrix}_s^t\!\!\!
\begin{pmatrix}
\bar M_r\!\!\!\! \!\!& \\ 
& \bar M_l
\end{pmatrix}_s^t\!-\! \begin{pmatrix}
\bar M_l\!\!\!\! \!\!& \\ 
& \bar M_r
\end{pmatrix}_s^t \begin{pmatrix}
    D_0R_r\!\!\!\!\!\!\!\! & \\ & -D_0R_l
  \end{pmatrix}_s^t
-\\\nonumber&\qquad-
\begin{pmatrix}
   R_l D_0 \!\!\!\!\!\!\!\!& \\ & -R_r D_0\,
  \end{pmatrix}_s^t
\begin{pmatrix}
\bar M_r\!\!\!\!\!\!& \\ 
& \bar M_l
\end{pmatrix}_s^t +  \begin{pmatrix}
\bar M_l\!\!\!\!& \\ 
& \bar M_r
\end{pmatrix}_s^t \begin{pmatrix}
   R_lD_0\!\!\!\!\!\!\!\!& \\ & -R_rD_0
  \end{pmatrix}_s^t=\\
\label{eq:small}
&=
  \begin{pmatrix}
    D_0R_r\bar M_r- \bar M_lD_0R_r - R_lD_0\bar M_r + \bar M_lR_lD_0\!\!\!\!& \\ & -D_0R_l\bar M_l +
    \bar M_rD_0R_l + R_rD_0 \bar M_l - \bar M_rR_r D_0
  \end{pmatrix}.
\end{align}
\end{small}
From the explicit form \eqref{eq:defD0I} of $D_0$, \eqref{eq:defM} of
$M_{l/r}$ and \eqref{eq:31} of $R_{r/l}$, one checks that
\begin{align}
  \label{eq:36}
  D_0R_r\bar M_r = \bar M_lD_0R_r =
  \begin{pmatrix}
    0 & \bar{\sf k} p'\bar{\sf m'} \\ \sf k \sf d \bar{\sf m}&0
  \end{pmatrix},\;   R_l D_0 \bar M_r = \bar M_l R_l D_0 =
  \begin{pmatrix}
    0 & \bar{\sf k} \bar{\sf m'} \sf d' \\ {\sf k}  \bar{\sf m} p &0
  \end{pmatrix},
\end{align}
so that the upper left term in \eqref{eq:small} is zero. The same is
true for the l.r.t., hence $[S, \bar M]=0$. 

This
shows that \eqref{eq:53} vanishes, which is equivalent to the
proposition.

For the off-diagonal part $\gamma^5\otimes D_M$, one has (omitting the $s,\dot s$ indices for simplicity)
\begin{equation}
\left[\gamma^5\otimes D_M,b\right]_\rho=\begin{pmatrix}
0&\gamma^5\Xi_{I\alpha}^{J\beta}k_R\left(d-d'\right)\\\gamma^5\Xi_{I\alpha}^{J\beta}\overline{k_R}\left(d-d'\right)
\end{pmatrix}_C^D,
\end{equation}
hence
\begin{equation}
\left[\left[\gamma^5\otimes D_M,b\right]_\rho,J\overline{a}J^{-1}\right]_\opr=-\begin{pmatrix}
0&\gamma^5\Xi_{I\alpha}^{K\gamma}k_R\left(d-d'\right){Q}_{K\gamma}^{J\beta}-\rho\left({M}_{I\alpha}^{K\gamma}\right)\gamma^5\Xi_{K\gamma}^{J\beta}k_R\left(d-d'\right)\\\ldots&0
\end{pmatrix}_C^D,
\end{equation}
whose top-right entry reads
\begin{equation}
\gamma^5\Xi_{I\alpha}^{K\gamma}k_R\left(d-d'\right){Q}_{K\gamma}^{J\beta}-\rho\left({M}_{I\alpha}^{K\gamma}\right)\gamma^5\Xi_{K\gamma}^{J\beta}k_R\left(d-d'\right)=k_R\delta_{\dot s}^{\dot t}\Xi_{I\alpha}^{J\beta}\begin{pmatrix}
\sigma+\sigma'\\&-\left(\sigma+\sigma'\right)
\end{pmatrix}_s^t
\end{equation}
and is non-zero.

\end{proof}

  \bibliographystyle{abbrv}
  \bibliography{/Users/pierre/physique/articles/bibdesk/biblio}

\end{document}